\newif\iffull
\fulltrue

\iffull
\documentclass[a4paper,UKenglish]{article}
\usepackage{fullpage}
\usepackage[hidelinks]{hyperref}
\else
\documentclass[a4paper,UKenglish]{socg-lipics-v2019}
\fi
\usepackage{amssymb, amsmath, amsthm}

\usepackage[utf8]{inputenc} 
\usepackage{graphicx}
\usepackage{xcolor}
\usepackage{wrapfig}
\usepackage{bbm}

\usepackage{makecell}

\usepackage{multirow}


\title{Online Packing to Minimize Area or Perimeter}

\iffull
\author{Mikkel Abrahamsen\thanks{Basic Algorithms Research Copenhagen (BARC), University of Copenhagen. BARC is supported by the VILLUM Foundation grant 16582.
Lorenzo Beretta received funding
from the European Union's Horizon 2020 research and innovation
program under the Marie Skłodowska-Curie grant agreement No.~801199.}\quad \quad Lorenzo Beretta$^*$
}
\else
\author{Mikkel Abrahamsen}{Basic Algorithms Research Copenhagen (BARC), University of Copenhagen.}{}{}{}
\author{Lorenzo Beretta}{Basic Algorithms Research Copenhagen (BARC), University of Copenhagen.}{}{}{}
\fi

\date{January 21, 2021}

\iffull
\else
\authorrunning{Mikkel Abrahamsen \& Lorenzo Beretta}

\ccsdesc{Theory of computation~Computational geometry}
\ccsdesc{Theory of computation~Packing and covering problems}
\ccsdesc{Theory of computation~Online algorithms}

\keywords{Packing, online algorithms}

\acknowledgements{
\begin{wrapfigure}{l}{0.055\textwidth}
\vspace{-4 mm}
\includegraphics[height=0.055\textwidth, width=0.055\textwidth]{barc.png}
\includegraphics[height=0.03\textwidth, width=0.055\textwidth]{euh2020.jpg}
\end{wrapfigure}
Research of both authors partly
supported by Investigator Grant 16582, Basic Algorithms Research Copenhagen (BARC), from the VILLUM Foundation.
Lorenzo Beretta received funding
from the European Union's Horizon 2020 research and innovation
program under the Marie Skłodowska-Curie grant agreement No 801199.
}
\fi

\newcommand{\comm}[1]{}

\newcommand{\R}{\mathbb{R}}
\newcommand{\N}{\mathbb{N}}
\newcommand{\Z}{\mathbb{Z}}

\newcommand{\D}{\mathcal{D}}

\newcommand{\eps}{\varepsilon}
\renewcommand{\phi}{\varphi}
\newcommand{\mydef}{:=}

\newcommand{\OPT}{\textsc{Opt}}
\newcommand{\ALG}{\textsc{Alg}}
\newcommand{\arearot}{\textsc{Area\-Ro\-ta\-tion}}
\newcommand{\areatrans}{\textsc{Area\-Trans\-la\-tion}}
\newcommand{\perirot}{\textsc{Peri\-meter\-Ro\-ta\-tion}}
\newcommand{\peritrans}{\textsc{Peri\-meter\-Trans\-la\-tion}}
\newcommand{\sqArea}{\textsc{Square\-In\-Square\-Area}}

\newcommand{\NFS}{\textsc{NFS}}
\newcommand{\smallboxtrans}{\textsc{BrickTranslation}}
\newcommand{\smallboxrot}{\textsc{BrickRotation}}

\newcommand{\dynamicboxtrans}{\textsc{DynBoxTrans}}
\newcommand{\dynamicboxrot}{\textsc{DynBoxRot}}
\newcommand{\dynamicboxrotopt}{\textsc{DynBoxRot}\ensuremath{_{\sqrt[4]{\OPT}}}}
\newcommand{\dynamicboxrotmin}{\textsc{DynBoxRot}\ensuremath{_{\sqrt{n} \, \wedge \sqrt[4]{\OPT}}}}
\newcommand{\alg}{\textsc{Alg}}
\newcommand{\comprattrans}{4}
\newcommand{\compratrot}{4}
\newcommand{\compratsq}{6}

\newcommand{\spl}{\dagger}
\newcommand{\case}[2]{\noindent\textbf{Case (#1)} [\emph{#2}]}

\newcommand{\pparagraph}[1]{\paragraph*{#1}}

\iffull
\newtheorem{theorem}{Theorem}
\newtheorem{lemma}[theorem]{Lemma}

\newtheorem{claim}[theorem]{Claim}

\newtheorem{corollary}[theorem]{Corollary}

\theoremstyle{definition}

\newtheorem{remark}[theorem]{Remark}
\fi


\begin{document}

	\maketitle
\begin{abstract}
We consider online packing problems where we get a stream of axis-parallel rectangles.
The rectangles have to be placed in the plane without overlapping, and each rectangle must be placed without knowing the subsequent rectangles.
The goal is to minimize the perimeter or the area of the axis-parallel bounding box of the rectangles.
We either allow rotations by $90^\circ$ or translations only.

For the perimeter version we give algorithms with an absolute competitive ratio slightly less than $\comprattrans$ when only translations are allowed and when rotations are also allowed.

We then turn our attention to minimizing the area and show that the asymptotic competitive ratio of any algorithm is at least $\Omega(\sqrt{n})$, where $n$ is the number of rectangles in the stream, and this holds with and without rotations.
We then present algorithms that match this bound in both cases and the competitive ratio is thus optimal to within a constant factor.
We also show that the competitive ratio cannot be bounded as a function of $\OPT$.
We then consider two special cases.

The first is when all the given rectangles have aspect ratios bounded by some constant.
The particular variant where all the rectangles are squares and we want to minimize the area of the bounding square has been studied before and an algorithm with a competitive ratio of $8$ has been given~[Fekete and Hoffmann, Algorithmica, 2017].
We improve the analysis of the algorithm and show that the ratio is at most $\compratsq$, which is tight.

The second special case is when all edges have length at least $1$.
Here, the $\Omega(\sqrt n)$ lower bound still holds, and we turn our attention to lower bounds depending on $\OPT$.
We show that any algorithm for the translational case has an asymptotic competitive ratio of at least $\Omega(\sqrt{\OPT})$.
If rotations are allowed, we show a lower bound of $\Omega(\sqrt[4]{\OPT})$.
For both versions, we give algorithms that match the respective lower bounds:
With translations only, this is just the algorithm from the general case with competitive ratio $O(\sqrt n)=O(\sqrt{\OPT})$.
If rotations are allowed, we give an algorithm with competitive ratio $O(\min\{\sqrt n,\sqrt[4]{\OPT}\})$, thus matching both lower bounds simultaneously.
\end{abstract}

\section{Introduction}
Problems related to packing appear in a plethora of big industries.
For instance, two-dimensional versions of packing arise when a given set of pieces have to be cut out from a large piece of material so as to minimize waste.
This is relevant to clothing production where cutting patterns are cut out from a roll of fabric, and similarly in leather, glass, wood, and sheet metal cutting.

In some applications, it is important that the pieces are placed in an \emph{online} fashion.
This means that the pieces arrive one by one and we need to decide the placement of one piece before we know the ones that will come in the future.
This is in contrast to \emph{offline} problems, where all the pieces are known in advance.
Problems related to packing were some of the first for which online algorithms were described and analyzed.
Indeed, the first use of the terms ``online'' and ``offline'' in the context of approximation algorithms was in the early 1970s and used for algorithms for bin-packing problems~\cite{Fiat1998}.

In this paper, we study online packing problems where the pieces can be placed anywhere in the plane as long as they do not overlap.
The goal is to minimize the region occupied by the pieces.
The pieces are axis-parallel rectangles, and they may or may not be rotated by $90^\circ$.
We want to minimize the size of the axis-parallel bounding box of the pieces, and the size of the box is either the perimeter or the area.
This results in four problems: $\perirot$, $\peritrans$, $\arearot$, and $\areatrans$.

\pparagraph{Competitive analysis}
The \emph{competitive ratio} of an online algorithm is the equivalent of the \emph{approximation ratio} of an (offline) approximation algorithm.
The usual definitions~\cite{borodin2005online,CHRISTENSEN201763,Csirik1998} of competitive ratio (or \emph{worst case ratio}, as it may also be called~\cite{Csirik1998}) can only be used to describe that the cost of the solution produced by an online algorithm is at most some constant factor higher than the cost $\OPT$ of the optimal (offline) solution.
In the study of approximation algorithms, it is often the case that the approximation ratio is described not just as a constant, but as a more general function of the input.
In the same way, we generalize the definition of competitive ratios to support such statements about online algorithms.

Consider an algorithm $A$ for one of the packing problems studied in this paper.
Let $\mathcal L$ be the set of non-empty streams of rectangular pieces.
For a stream $L\in\mathcal L$, we define $A(L)$ to be the cost of the packing produced by $A$ and let $\OPT(L)$ be the cost of the optimal (offline) packing.
We say that $A$ has an \emph{absolute competitive ratio} of $f(L)$, for some function $f:\mathcal L\longrightarrow\R^+$ which may just be a constant, if
\[
\sup_{L\in\mathcal L}\frac{A(L)}{\OPT(L)f(L)}\leq 1.
\]
We say that $A$ has an \emph{asymptotic competitive ratio} of $f(L)$ if
\[
\limsup_{c\longrightarrow\infty}\left(\sup\left\{\frac{A(L)}{\OPT(L)f(L)}\mid L\in\mathcal L\text{ and }\OPT(L)=c\right\}\right)\leq 1.
\]

In this paper, the functions $f(L)$ that we consider will be (i) constants, (ii) functions of the number of pieces $n=|L|$, (iii) functions of $\OPT(L)$.

By definition, if $A$ has an absolute competitive ratio of $f(L)$, then $A$ also has an asymptotic competitive ratio of $f(L)$, but $A$ may also have a smaller asymptotic competitive ratio $g(L)<f(L)$.
\iffull
However, the following easy lemma shows that for the problems studied in this paper, any constant asymptotic competitive ratio can be matched to within an arbitrarily small difference by an absolute competitive ratio.

\begin{lemma}
For the problems studied in this paper, if an algorithm $A$ has an asymptotic competitive ratio of some constant $c>1$, then for every $\eps>0$, there is an algorithm $A'$ with absolute competitive ratio $c+\eps$.
It follows that any constant lower bound on the absolute competitive ratio is also a lower bound on the asymptotic competitive ratio.
\end{lemma}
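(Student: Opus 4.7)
The plan is to exploit the scale-invariance of all four problems. For each packing problem, the cost function (perimeter or area of the bounding box) satisfies: if one scales every piece in a stream $L$ by a factor $k>0$, then $\OPT(kL)=k\cdot\OPT(L)$ in the perimeter case and $\OPT(kL)=k^2\cdot\OPT(L)$ in the area case, and similarly any valid packing for $kL$ descales to a valid packing for $L$ with its cost scaled in the same way. Moreover, the first rectangle $p_1$ alone already gives a lower bound on $\OPT(L)$: its own perimeter bounds $\OPT(L)$ from below in the perimeter versions, and its area does so in the area versions.

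Given $\eps>0$, first pick $\delta\in(0,\eps)$ and, using the hypothesis that $A$ has asymptotic competitive ratio $c$, choose a threshold $C$ such that $A(L)\le(c+\delta)\OPT(L)$ whenever $\OPT(L)\ge C$. I then define $A'$ as follows: on reading the first rectangle $p_1$, compute a scaling factor $k$ depending only on $p_1$ (for example $k\mydef C/\text{perim}(p_1)$ in the perimeter case, or $k\mydef\sqrt{C/\text{area}(p_1)}$ in the area case) that is large enough to force $\OPT(kL)\ge C$ for any completion of the stream. Then $A'$ simulates $A$ on the virtual stream $kL$ obtained by scaling every arriving rectangle by $k$, and places each real rectangle at the position produced by $A$, scaled back by $1/k$. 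This is a valid online algorithm because the choice of $k$ is fixed after seeing $p_1$ and the simulation is purely online thereafter.

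The analysis is then immediate: the packing produced by $A'$ on $L$ is a scaled-down copy of the packing produced by $A$ on $kL$, so $A'(L)=A(kL)/k$ (resp.\ $A(kL)/k^2$). Since $\OPT(kL)\ge C$ by construction, we have $A(kL)\le(c+\delta)\OPT(kL)=(c+\delta)\cdot k\cdot\OPT(L)$ (resp.\ $(c+\delta)\cdot k^2\cdot\OPT(L)$), which after dividing by $k$ (resp.\ $k^2$) yields $A'(L)\le(c+\delta)\OPT(L)\le(c+\eps)\OPT(L)$ for every stream $L\in\mathcal L$. Hence $A'$ has absolute competitive ratio $c+\eps$.

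The second statement of the lemma follows by contraposition: if some constant $\rho$ were a lower bound on the absolute competitive ratio but the asymptotic competitive ratio of some algorithm $A$ were a constant $c<\rho$, then by the first part, for $\eps\mydef(\rho-c)/2$ one could derive an algorithm $A'$ with absolute competitive ratio $c+\eps<\rho$, contradicting the lower bound. The only mild technicality I anticipate is verifying cleanly that $\OPT(L)$ is bounded below by a quantity computable from $p_1$ alone in each of the four problems, so that $k$ can indeed be chosen at the moment $p_1$ arrives; this is exactly where scale-invariance and the monotonicity of $\OPT$ in the set of pieces enter.
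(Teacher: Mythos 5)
Your proposal is correct and follows essentially the same route as the paper's own proof: scale the stream (by a factor determined from the first piece alone) so that $\OPT$ of the scaled stream exceeds the threshold where the asymptotic guarantee kicks in, simulate $A$ on the scaled stream, and use scale-invariance of the cost to transfer the ratio back. The paper phrases the threshold via $\frac{A(L)}{c\,\OPT(L)}\le 1+\eps/c$ rather than your $(c+\delta)$, but the arguments are the same.
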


\begin{proof}
Let $n>0$ be so large that when $\OPT(L)\geq n$, we have $\frac{A(L)}{c\OPT(L)}\leq 1+\eps/c$.
When the first piece $p$ of a stream $L$ is given, $A'$ chooses a scale factor $\lambda>0$ big enough that when $p$ is scaled up by $\lambda$, the resulting piece $p'\mydef \lambda p$ alone has cost $n$ (i.e., the area or the perimeter of $p'$ is $n$).
The algorithm $A'$ now imitates the strategy of $A$ on the stream $\lambda L$ we get by scaling up all pieces of $L$ by $\lambda$.
We then get that
\[
\frac{A'(L)}{(c+\eps)\OPT(L)}=\frac{A(\lambda L)}{(c+\eps)\OPT(\lambda L)}\leq \frac{(1+\eps/c)c}{c+\eps}=1.\qedhere
\]
\end{proof}

For this reason, we do not distinguish between absolute and asymptotic competitive ratios when the ratio is a constant.
Note that the argument does not work when the competitive ratio is a non-constant function of $\OPT$.
\else
However, for the problems studied in this paper, any constant asymptotic competitive ratio can be matched to within an arbitrarily small difference by an absolute competitive ratio; see the full version for the details.
\fi

\pparagraph{Results and structure of the paper}
We develop online algorithms for the perimeter versions \perirot\ and \peritrans, both with a competitive ratio slightly less than $\compratrot$. 
These algorithms are described in Section~\ref{sec:perimeter}.
The idea is to partition the positive quadrant into \emph{bricks}, which are axis-parallel rectangles with aspect ratio $\sqrt 2$.
In each brick, we build a stack of pieces which would be too large to place in a brick of smaller size.
Online packing algorithms using higher-dimensional bricks were described by Januszewski and Lassak~\cite{januszewski1997line} and our algorithms are inspired by an algorithm of Fekete and Hoffmann~\cite{DBLP:journals/algorithmica/FeketeH17} that we will get back to.
Interestingly, we show in Section~\ref{sec:alg:inferior} that a more direct adaptation of the algorithm of Fekete and Hoffmann has a competitive ratio of at least $4$, and is thus inferior to the algorithm we describe.
We also give a lower bound of $4/3$ for the version with translations and $5/4$ for the version with rotations.

In Section~\ref{sec:area}, we study the area versions \arearot\ and \areatrans.
We show in Section~\ref{sec:lower:general} that for any algorithm $A$ processing a stream of $n$ pieces cannot achieve a better competitive ratio than $\Omega(\sqrt{n})$, and this holds for all online algorithms and with and without rotations allowed.
It also holds in the special case where all the edges of pieces have length at least $1$.
We furthermore show that when the pieces can be arbitrary, there can be given no bound on the competitive ratio as a function of $\OPT$ for $\arearot$ nor $\areatrans$.
In Section~\ref{sec:alg:arbitrary} we describe the algorithms $\dynamicboxtrans$ and $\dynamicboxrot$, which achieve a $O(\sqrt{n})$ competitive ratio for $\areatrans$ and $\arearot$, respectively, for an arbitrary stream of $n$ pieces.
This is thus optimal up to a constant factor when measuring the competitive ratio as a function of $n$.
Both algorithms use a row of boxes of exponentially increasing width and dynamically adjusted height. In these boxes, we pack pieces using a next-fit shelf algorithm, which is a classic online strip packing algorithm first described by Baker and Schwartz~\cite{baker1983shelf}.

We then turn our attention to two special cases. 

The first special case is when the aspect ratio is bounded by a constant $\alpha\geq 1$.
A case of particular interest is when all pieces are squares, i.e., $\alpha=1$.
It is natural to have the same requirement to the container as to the pieces, so let us assume that the goal is to minimize the area of the axis-parallel bounding square of the pieces, and call the problem \sqArea.
This problem was studied by Fekete and Hoffmann~\cite{DBLP:journals/algorithmica/FeketeH17}, and they gave an algorithm for the problem and proved that it was $8$-competitive.
We prove that the same algorithm is in fact $6$-competitive and that this is tight.
It easily follows that if the aspect ratio is bounded by an arbitrary constant $\alpha\geq 1$ or if the goal is to minimize the area of the axis-parallel bounding rectangle, we also get a $O(1)$-competitive algorithm.

The second special case is when all edges are \emph{long}, that is, when they have length at least $1$ (any other constant will work too).
In Section~\ref{sec:lower:restricted}, we show that under this assumption, there is a lower bound of $\Omega(\sqrt{\OPT})$ for the asymptotic competitive ratio of $\areatrans$, whereas for $\arearot$, we get the lower bound $\Omega(\sqrt[4]{\OPT})$.
In Section~\ref{alg:area}, we provide algorithms for the area versions when the edges are long.
For both problems \arearot\ and \areatrans, we give algorithms that match the lower bounds of Section~\ref{sec:lower:restricted} to within a constant factor.
With translations only, this is just the algorithm from the general case with competitive ratio $O(\sqrt n)=O(\sqrt{\OPT})$.
The algorithm with ratio $O(\sqrt[4]{\OPT})$ for the rotational case follows the same scheme as the algorithms for arbitrary rectangles of Section~\ref{sec:alg:arbitrary}, but differ in the way we dynamically increase boxes' heights.
We finally describe an algorithm for the rotational case with competitive ratio $O(\min\{\sqrt n,\sqrt[4]{\OPT}\})$, thus matching the lower bounds $\Omega(\sqrt n)$ and $\Omega(\sqrt[4]{\OPT})$ simultaneously. Actually, the two lower bounds for $\arearot$ can be summarized by $\Omega(\max\{\sqrt{n}, \sqrt[4]{\OPT}\})$, while we manage to achieve a competitive ratio of $O(\min\{\sqrt n,\sqrt[4]{\OPT}\})$. However, this gives no contradiction, it simply proves that the \emph{edge cases} that have a competitive ratio of at least $\Omega(\sqrt[4]{\OPT})$ must satisfy $\OPT = O(n^2)$, and those for which the competitive ratio is at least $\Omega(\sqrt{n})$ satisfy $n = O(\sqrt{\OPT})$.

We summarize the results in Table~\ref{table:res}.

\begin{table}[]
\centering
\begin{tabular}{|l|l|l|l|l|}
\hline
\textbf{Measure}        & \textbf{Version}         & \textbf{Trans./Rot.} & \textbf{Lower bound} & \textbf{Upper bound}   \\ \hline
\multirow{2}{*}{Perimeter} & \multirow{2}{*}{General}    & Translation          & $4/3$,~Sec.~\ref{sec:lowerB}   & $4-\eps$,~Sec.~\ref{sec:algorithms}  \\ \cline{3-5} 
   & & Rotation & $5/4$,~Sec.~\ref{sec:lowerB}   & $4-\eps$,~Sec.~\ref{sec:algorithms}  \\ \hline
\multirow{9}{*}{Area}      & \multirow{3}{*}{General}    & Translation          & \thead{$\Omega(\sqrt n)$ \& $\forall f: \Omega(f(\OPT))$, \\ Sec.~\ref{sec:lower:general}} & $O(\sqrt n)$, Sec.~\ref{sec:alg:arbitrary}             \\ \cline{3-5} 
   & & Rotation & \thead{$\Omega(\sqrt n)$ \& $\forall f: \Omega(f(\OPT))$, \\ Sec.~\ref{sec:lower:general}} & $O(\sqrt n)$, Sec.~\ref{sec:alg:arbitrary}            \\ \cline{2-5} 
   & Sq.-in-sq. & N/A & 16/9, Sec.~\ref{sec:boundedaspect} & $6$, Sec.~\ref{sec:boundedaspect} \\ \cline{2-5} 
   & \multirow{3}{*}{Long edges} & Translation          & $\Omega(\sqrt{\OPT})$, Sec.~\ref{sec:lower:restricted}         & \thead{$O(\sqrt n)=O(\sqrt{\OPT})$, \\ Sec.~\ref{alg:area}}         \\ \cline{3-5} 
   & & Rotation & \thead{$\Omega(\max\{\sqrt n, \sqrt[4]{\OPT}\})$, \\ Sec.~\ref{sec:lower:general} and~\ref{sec:lower:restricted}}  & \thead{$O(\min\{\sqrt{n},\sqrt[4]{\OPT}\})$, \\ Sec.~\ref{alg:area}} \\ \hline
\end{tabular}
\caption{Results of this paper.}
\label{table:res}
\end{table}

\pparagraph{Related work}
The literature on online packing problems is rich.
See the surveys of Christensen, Khan, Pokutta, and Tetali~\cite{CHRISTENSEN201763}, van Stee~\cite{DBLP:journals/sigact/Stee12,DBLP:journals/sigact/Stee15}, and Csirik and Woeginger~\cite{Csirik1998} for an overview.
It seems that the vast majority of previous work on online versions of two-dimensional packing problems is concerned with either bin packing (packing the pieces into a minimum number of unit squares) or strip packing (packing the pieces into a strip of unit width so as to minimize the total height of the pieces).
From a mathematical point of view, we find the problems studied in this paper perhaps even more fundamental than these important problems in the sense that we give no restrictions on where to place the pieces, whereas the pieces are restricted by the boundaries of the bins and the strip in bin and strip packing.

Another related problem is to find the critical density of online packing squares into a square.
In other words, what is the maximum $\Sigma\leq 1$ such that there is an online algorithm that packs any stream of squares of total area at most $\Sigma$ into the unit square?
This was studied, among others, by Fekete and Hoffmann~\cite{DBLP:journals/algorithmica/FeketeH17} and Brubach~\cite{Brubach14}.
Lassak~\cite{lassak1997linepot} and Januszewski and Lassak~\cite{januszewski1997line} studied higher-dimensional versions of this problem.

\iffull
Milenkovich~\cite{milenkovic1996translational} studied generalized offline versions of the minimum area problem:
Translate $k$ given $m$-gons into a convex container of minimum area with edges in $n$ fixed directions.
When the $m$-gons can be non-convex, the running time is $O((m^2+n)^{2k-2}(n+\log m))$, and when they are convex, the running times are $O((m+n)^{2k}(n+\log m))$ or $O(m^{k-1}(n^{2k+1}+\log m))$.
Milenkovich and Daniels~\cite{doi:10.1111/j.1475-3995.1999.tb00171.x} described different algorithms for the same problems.
Milenkovich~\cite{MILENKOVIC19993} also studied the same problem when arbitrary rotations are allowed and the container is either a strip with a fixed width, a homothet of a given convex polygon, or an arbitrary rectangle (as in our work).
He gave $(1+\eps)$-approximation algorithms (no explicit running times are given, but they are apparently also exponential).
\else
Milenkovich~\cite{milenkovic1996translational,MILENKOVIC19993} and Milenkovich and Daniels~\cite{doi:10.1111/j.1475-3995.1999.tb00171.x} studied generalized offline versions of the minimum area problem where the pieces are simple or convex polygons.
\fi

Some algorithms have been described for computing the packing of two or three convex polygons that minimizes the perimeter or area of the convex hull or the bounding box~\cite{ahn2012aligning, althurtado,leewoo,PARK20161}. 

\iffull
Alt~\cite{DBLP:journals/eatcs/Alt16} demonstrated how a $\rho$-approximation algorithm for strip packing (axis-parallel rectangles with translations) can be turned into a $(1+\eps)\rho$-approximation algorithm for the offline version of \areatrans, for any constant $\eps>0$.
The same technique works for $\arearot$.
The idea is to apply the strip packing algorithm to strips of increasing widths and in the end choose the packing that resulted in the smallest area.
Therefore, the same technique cannot be applied in the online setting, where we need to choose a placement for each piece and stick with it.
Alt also mentioned that finding a minimum area bounding box of a set of convex polygons with arbitrary rotations allowed can be reduced to the problem where the pieces are axis-parallel rectangles with only translations allowed.
This reduction increases the approximation ratio by a factor by $2$.
The reduction does not work when the pieces can be only translated, but Alt, de Berg, and Knauer~\cite{altapproximating} gave a $17.45$-approximation algorithm for this problem using different techniques.
\else
Alt~\cite{DBLP:journals/eatcs/Alt16} and Alt, de Berg, and Knauer~\cite{altapproximating} gave constant factor approximation algorithms for the offline versions of \areatrans\ and \arearot\ when the pieces are axis parallel rectangles or convex polygons, with translations only or arbitrary rotations allowed.
\fi

Lubachevsky and Graham~\cite{lubachevsky2003dense} used computational experiments to find the rectangles of minimum area into which a given number $n\leq 5000$ of congruent circles can be packed; see also the follow-up work by Specht~\cite{SPECHT201358}.
In another paper, Lubachevsky and Graham~\cite{LUBACHEVSKY20091947} studied the problem of minimizing the perimeter instead of the area.

Another fundamental packing problem is to find the smallest square containing a given number of \emph{unit} squares, with arbitrary rotations allowed. 
A long line of mathematical research has been devoted to this problem, initiated by Erd\H{o}s and Graham~\cite{erdos1975packing} in 1975, and it is still an active research area~\cite{chung2019efficient}.


\section{The perimeter versions}\label{sec:perimeter}
In Section~\ref{sec:algorithms}, we present two online algorithms to minimize the perimeter of the bounding box:
the algorithm $\smallboxtrans$ solves the problem $\peritrans$, where we can only translate pieces; the algorithm $\smallboxrot$ solves the problem \perirot, where also rotations are allowed. Both algorithms achieve a competitive ratio of $\compratrot$.
In Section~\ref{sec:lowerB}, we show a lower bound of $4/3$ for the version with translations and $5/4$ for the version with rotations.

\subsection{Algorithms to minimize perimeter}\label{sec:algorithms}
\pparagraph{Algorithm for translations}
We pack the pieces into non-overlapping \emph{bricks}; a technique first described by Januszewski and Lassak~\cite{januszewski1997line} which was also used by Fekete and Hoffmann~\cite{DBLP:journals/algorithmica/FeketeH17} for the problem \sqArea.
Let a \emph{$k$-brick} be a rectangle of size $\sqrt 2^{-k}\times \sqrt 2^{-k-1}$ if $k$ is even and $\sqrt 2^{-k-1}\times \sqrt 2^{-k}$ if $k$ is odd.
A \emph{brick} is a $k$-brick for some integer $k$.

We tile the positive quadrant using one $k$-brick $B_k$ for each integer $k$ as in Figure~\ref{fig:brickPack} (left):
if $k$ is even, $B_k$ is the $k$-brick with lower left corner $(0,\sqrt 2^{-k-1})$ and otherwise, $B_k$ is the $k$-brick with lower left corner $(\sqrt 2^{-k-1},0)$.  The bricks $B_k$ are called the \emph{fundamental} bricks.
We define $B_{>k}\mydef\bigcup_{i>k} B_i$ and $B_{\geq k}\mydef B_{>k-1}$, so that $B_{>k}$ is the $k$-brick immediately below (if $k$ is even) or to the left (if $k$ is odd) of $B_k$.

An important property of a $k$-brick $B$ is that it can be split into two $(k+1)$-bricks: $B\spl 1$ and $B\spl 2$; see Figure~\ref{fig:brickPack} (middle). We introduce a uniform naming and define $B\spl 1$ to be the left half of $B$ if $k$ is even and the lower half of $B$ if $k$ is odd.


We define a \emph{derived} brick recursively as follows: a derived brick is either (i) a fundamental brick $B_k$ or (ii) $B\spl 1$ or $B\spl 2$, where $B$ is a derived brick.
We introduce an ordering $\prec$ of the derived $k$-bricks as follows.
Consider two derived $k$-bricks $D_1$ and $D_2$ such that $D_1\subset B_i$ and $D_2\subset B_j$. If $i>j$, then $D_1 \prec D_2$. Else, if $i=j$ then the bricks $D_1$ and $D_2$ are both obtained by splitting the fundamental brick $B_i$, and the number of splits is $\ell\mydef i-k$.  Hence the bricks have the forms $D_1=B_i\spl b_{11}\spl b_{12}\spl \ldots\spl b_{1\ell}$ and $D_2=B_i\spl b_{21}\, b_{22}\, \ldots\, b_{2\ell}$, where $b_{ij}\in\{1,2\}$ for $i\in\{1,2\}$ and $j\in\{1,\ldots,\ell\}$.
We then define $D_1 \prec D_2$ if $(b_{11},b_{12},\ldots,b_{1\ell})$ precedes $(b_{21},b_{22},\ldots,b_{2\ell})$ in the lexicographic ordering.

We say that a $k$-brick is \emph{suitable} for a piece $p$ of size $w\times h$ if the width and height of the brick are at least $w$ and $h$, respectively, and if that is not the case for a $(k+1)$-brick.  We will always pack a given piece $p$ in a derived $k$-brick that is suitable for $p$.

\begin{figure}
\centering
\includegraphics[page=4]{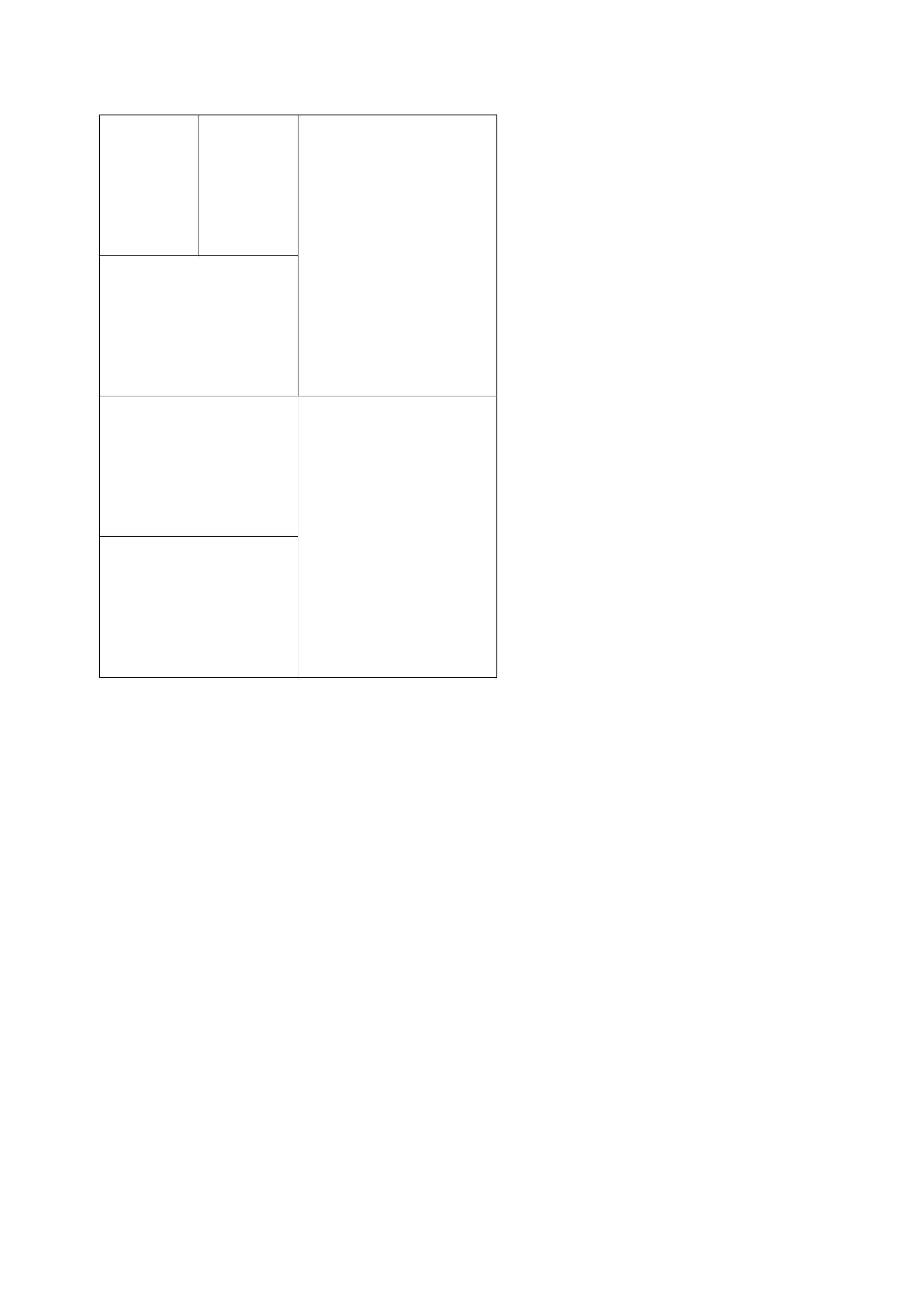}
\caption{Left: Fundamental bricks. Middle: Splitting a brick. 
Right: Rectangular pieces packed in a brick.}
\label{fig:brickPack}
\end{figure}

We now explain how we pack pieces into one specific brick; see Figure~\ref{fig:brickPack} (right).
The first piece $p$ that is packed in a brick $B$ is placed with the lower left corner of $p$ at the lower left corner of $B$.  Suppose now that some other pieces $p_1,\ldots,p_i$ have been packed in $B$.
If $k$ is even, then $p_1,\ldots,p_i$ form a stack with the left edges contained in the left edge of $B$, and we place $p$ on top of $p_i$ (again, with the left edge of $p$ contained in the left edge of $B$).
Otherwise, $p_1,\ldots,p_i$ form a stack with the bottom edges contained in the bottom edge of $B$, and we place $p$ to the right of $p_i$ (again, with the bottom edge of $p$ contained in the bottom edge of $B$). We say that a brick \emph{has room} for a piece $p$ if the packing scheme above places $p$ within $B$, and it is apparent that an empty suitable brick for $p$ has room for $p$.

\begin{figure}
\centering
\includegraphics[page=5]{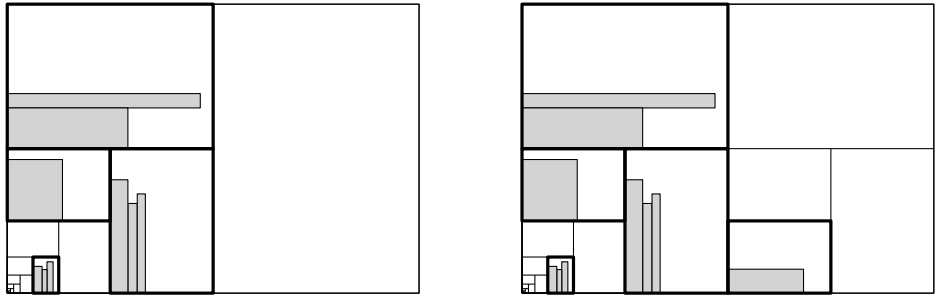}
\caption{Left: Some pieces have been packed by the algorithm.
The bricks in $\mathcal D$ are drawn with fat edges.
Right: A new piece arrives.
There is already a brick of the suitable size in $\mathcal D$, but there is not enough room, so a new brick of the same size is added to $\mathcal D$ where the piece is placed.}
\label{fig:brickPackAlg}
\end{figure}

The algorithm \smallboxtrans\ maintains the collection $\mathcal D$ of non-overlapping derived bricks, such that one or more pieces have been placed in each brick in $\mathcal D$; see Figure~\ref{fig:brickPackAlg}.
Before the first piece arrives, we set $\mathcal D\mydef \emptyset$.
Suppose that some stream of pieces have been packed, and that a new piece $p$ appears.
Choose $k$ such that a $k$-brick is suitable for $p$. If there exists a derived $k$-brick $D\in\mathcal D$ such that $D$ has room for $p$, then we pack $p$ in $D$. Else, let $D$ be the minimum derived $k$-brick (with respect to the ordering $\prec$ described before) such that $D$ is interior-disjoint from each brick in $\mathcal D$; we then add $D$ to $\mathcal D$ and pack $p$ in $D$.

\begin{theorem}\label{thm:peritransalg}
The algorithm $\smallboxtrans$ has a competitive ratio strictly less than \comprattrans\ for \peritrans.
\end{theorem}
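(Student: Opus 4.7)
Let $\mathcal D$ denote the final set of derived bricks produced by $\smallboxtrans$, let $k^*$ be the smallest $k$ such that some derived $k$-brick lies in $\mathcal D$, and set $s \mydef \sqrt 2^{-k^*}$. By the symmetry between the two parities of $k^*$, assume $k^*$ is even. The plan is to upper-bound the ALG perimeter and lower-bound $\OPT$, each in terms of $s$, and combine them with the total piece area to force the ratio strictly below $4$.

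For the upper bound I would use the $\prec$ ordering to localise $\mathcal D$. In the ``confined'' case $\mathcal D \subseteq B_{\geq k^*}$ and the ALG bounding box fits inside $B_{\geq k^*}$, of perimeter $2(1+\sqrt 2)\,s$. In the ``spillover'' case where $\mathcal D$ enters $B_{k^*-1}, B_{k^*-2},\ldots$, extra bricks outside $B_{\geq k^*}$ are only created once every same-size brick inside has been filled, so the packed pieces have total area $\Omega(s^2)$ already at that moment.

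For the lower bound on $\OPT$ I would combine two estimates: the \emph{width} bound $\OPT \geq s$, from the fact that the piece triggering $B_{k^*}$ has width $> s/2$; and the \emph{area} bound $\OPT \geq 4\sqrt A$, where $A$ is the total piece area and the bound comes from OPT's bounding box containing all pieces disjointly. To use the area bound I would show that every brick of $\mathcal D$ except possibly the last of its size in the $\prec$-order is well-utilised: the ``has room'' criterion forces the residual vertical slack of each such brick to be less than one piece's height, and each piece covers more than half of its brick's width, giving a constant-fraction area utilisation after aggregation.

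Finally I would case-split on which fundamental bricks are touched by $\mathcal D$ and on whether tall and wide pieces are simultaneously present, and in each case feed the stronger of the two $\OPT$ lower bounds into the $W+H$ estimate. The main obstacle is the amortization in the area bound: a single piece can be arbitrarily short, so no single brick need be well-utilised in isolation, and one must aggregate across all same-size bricks, charging the triggering piece of each new brick to its predecessor. Carrying out this amortization so that the resulting competitive ratio beats the naive $2(1+\sqrt 2)\approx 4.83$ \emph{in every case} and comes out strictly below $4$ is where the real technical work lies.
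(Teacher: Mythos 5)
Your plan follows the same route as the paper's proof: normalise so that $\bigcup\D\subseteq B_{\geq 0}$ but $\bigcup\D\not\subseteq B_{\geq 1}$; observe that a $k$-brick of $\D$ which lacks room for a suitable piece must have its stack filling more than half of its height and hence (since every piece placed in it is wider than half the brick) density at least $1/4$; note that at most one brick of each size can be under-filled, so the under-filled bricks contribute only a geometric tail; and lower-bound $\OPT$ in semiperimeter by $2\sqrt{A_{pcs}}$ or, when the longest edge $L$ satisfies $L^2>A_{pcs}$, by $L+A_{pcs}/L$. These are exactly the ingredients the paper uses.

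There are, however, two genuine gaps. First, the only upper bound on $\ALG$ in your plan is the full box $B_{\geq 0}$, of semiperimeter $(1+\sqrt2)s$. That is not enough: already in the case where a piece sits in $B_0$ itself, the width bound gives only $\OPT\geq s/2$ and the ratio stalls at $2(1+\sqrt2)\approx 4.83$; when the largest brick in $\D$ is a $1$-brick the guaranteed edge length drops to $s\sqrt2/4$ and the situation is worse. The missing mechanism is the one the paper extracts from the $\prec$-ordering: because new derived bricks are always opened at the $\prec$-\emph{minimum} free position, freeness is upward-monotone under $\prec$, so whenever \emph{any} empty $j$-brick exists, the specific $\prec$-maximal one (e.g.\ $B_0\spl 2$ or $B_0\spl1\spl2$), which abuts the top or right edge of $B_{\geq 0}$, is empty, and the bounding box of the pieces can be cropped there. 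This yields the dichotomy that drives the whole argument: for each small $j$, either an empty $j$-brick exists and $\ALG$ shrinks, or none exists and the empty area loses a term of its geometric sum, forcing $A_{pcs}$ and hence $\OPT$ up. Second, even granting this, the theorem is only obtained by actually traversing the resulting case tree --- the paper needs roughly twenty leaves, several of which come out as close to $4$ as $3.98$ --- and your proposal explicitly defers exactly this computation. With a margin that thin, ``the real technical work'' you set aside is not a routine verification but the entire content of the proof, so as written the proposal does not establish the bound.
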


\begin{proof}
\iffull
We can assume, without loss of generality, that after we have packed the last rectangle, we have $\bigcup \D \subseteq B_{\geq0}$ and $\bigcup \D \not\subseteq B_{\geq1}$.
As shown in Figure~\ref{fig:sparse_dense}, we define a derived $k$-brick $B \subseteq B_{\geq0}$ to be
\begin{itemize}
    \item \emph{sparse} if $B \in \D$ and the total height (if $k$ is even, else width) of pieces stacked in $B$ is less than half of the height (if $k$ is even, else width) of $B$,
    \item \emph{dense} if $B \in \D$ and $B$ it is not sparse,
    \item \emph{free} if $B$ is interior-disjoint from each brick in $\D$, and
    \item \emph{empty} if $B$ is a maximal (w.r.t.~inclusion) free brick.
\end{itemize}
\begin{figure}
    \centering
    \includegraphics[page=10]{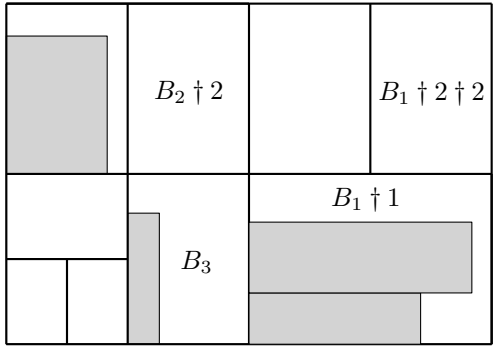}
    \caption{Brick $B_3$ is \emph{sparse}, brick $B_2 \spl 2$ is \emph{empty}, brick $B_1 \spl 1$ is \emph{dense}. Brick $B_1 \spl 2 \spl 2$ is free, but not empty, since it is contained in $B_1 \spl 2$.}
    \label{fig:sparse_dense}
\end{figure}

\begin{remark} \label{remark:cover}
Sparse, dense and empty bricks together cover $B_{\geq 0}$, in fact every brick in $\D$ is either sparse or dense, and any brick in $B_{\geq 0}$ that is interior-disjoint from bricks in $\D$ is contained in some empty brick. 
\end{remark}

\begin{remark} \label{remark:edge_length}
Every $k$-brick $D \in \D$ contains pieces for which it is suitable. Therefore, if $k$ is odd $D$ contains a piece of height at least $\sqrt{2}^{-k} / 2$, and if $k$ is even $D$ contains a piece of width at least $\sqrt{2}^{-k} / 2$.  
\end{remark}

\begin{remark} \label{remark:density}
Every $k$-brick $D \in \D$ that is dense contains pieces with total area at least $1 / 4$ of the area of $D$. To see this, suppose that $k$ is even, so that $D$ is $\sqrt{2}^{-k} \times \sqrt{2}^{-k-1}$, then thanks to density the total height of pieces in $D$ is at least half of its height, moreover thanks to Remark~\ref{remark:edge_length} all the pieces contained in $D$ have width at least $\sqrt{2}^{-k} / 2$. If $k$ is odd we prove it analogously.  
\end{remark}

\begin{remark} \label{remark:free_monotonicity}
Consider two $k$-bricks $M$ and $N$.
If $M \prec N$ and $M$ is free, then $N$ is free. To prove this it is sufficient to consider the step in which the first piece $p$ is placed within $N$ and $N \spl b_{1} \spl \ldots \spl b_\ell$ is added to $\D$. Then, $N \spl b_{1} \spl \ldots \spl b_\ell$ should be the $\prec$-minimum free suitable $k$-brick, but $M \spl b_{1} \spl \ldots \spl b_\ell \prec N \spl b_{1} \spl \ldots \spl b_\ell$ gives a contradiction.
It follows that whenever we have a set $S$ of $k$-bricks that contains a free $k$-brick, then also $\max_\prec S$ is free. This turns out to be useful multiple times along the proof, choosing $S$ to be the set of $k$-bricks not contained in a strictly larger empty brick.
\end{remark}

\begin{remark} \label{remark:unicity_sparse_empty}
There exists no empty $0$-brick, otherwise $D \subseteq B_{\geq 1}$. Moreover, for every $k \geq 1$ we can have at most one sparse $k$-brick and one empty $k$-brick. In fact, a new empty (resp.~sparse) $k$-brick is created only when no empty (resp.~sparse) $k$-brick exists.
\end{remark}

In the following we prove an upper bound on the competitive ratio $\ALG / \OPT$, where $\ALG$ is the perimeter of the bounding box achieved by our online algorithm and $\OPT$ is the optimal perimeter computed offline.  Hence, we need some techniques to provide an upper bound on $\ALG$ and a lower bound on $\OPT$. For $\ALG$, we will simply show a bounding box, in fact the perimeter of any bounding box containing all the pieces provides an upper bound to the minimum perimeter bounding box. For $\OPT$, let $A$ be the total area of pieces and $L$ be the maximum length of an edge of a piece. If $L^2 > A$ then the minimum perimeter bounding box cannot have a smaller perimeter than a box of size $L \times A / L$. Otherwise, if $L^2 \leq A$ we have a weaker lower bound given by the box $\sqrt{A} \times \sqrt{A}$. 
Throughout the analysis we consider semiperimeters instead of perimeters to improve readability.

We denote with $A(empty), A(sparse), A(dense)$ the total area of empty, sparse and dense bricks respectively. Thanks to Remark~\ref{remark:cover}, we have that $A(empty) + A(sparse) + A(dense) = A(B_{\leq 0}) = \sqrt{2}$. We denote with $A_{pcs}$ the total area of pieces in the stream. Thanks to Remark~\ref{remark:density}, we have $A_{pcs} \geq A(dense) / 4$. From now on the proof branches in many cases and subcases.
We will perform a depth-first visit of the case tree, and for each leaf of this tree we will prove that the competitive ratio is strictly less than $\comprattrans$. 
Let $k$ be the smallest integer such that there exists a $k$-brick in $\D$, and let $M \in \D$ be the $\prec$-maximal $k$-brick.
From our assumptions, it follows that $k\geq 0$.
\paragraph*{Case Tree}

\begin{figure}
\centering
\includegraphics[scale=0.75]{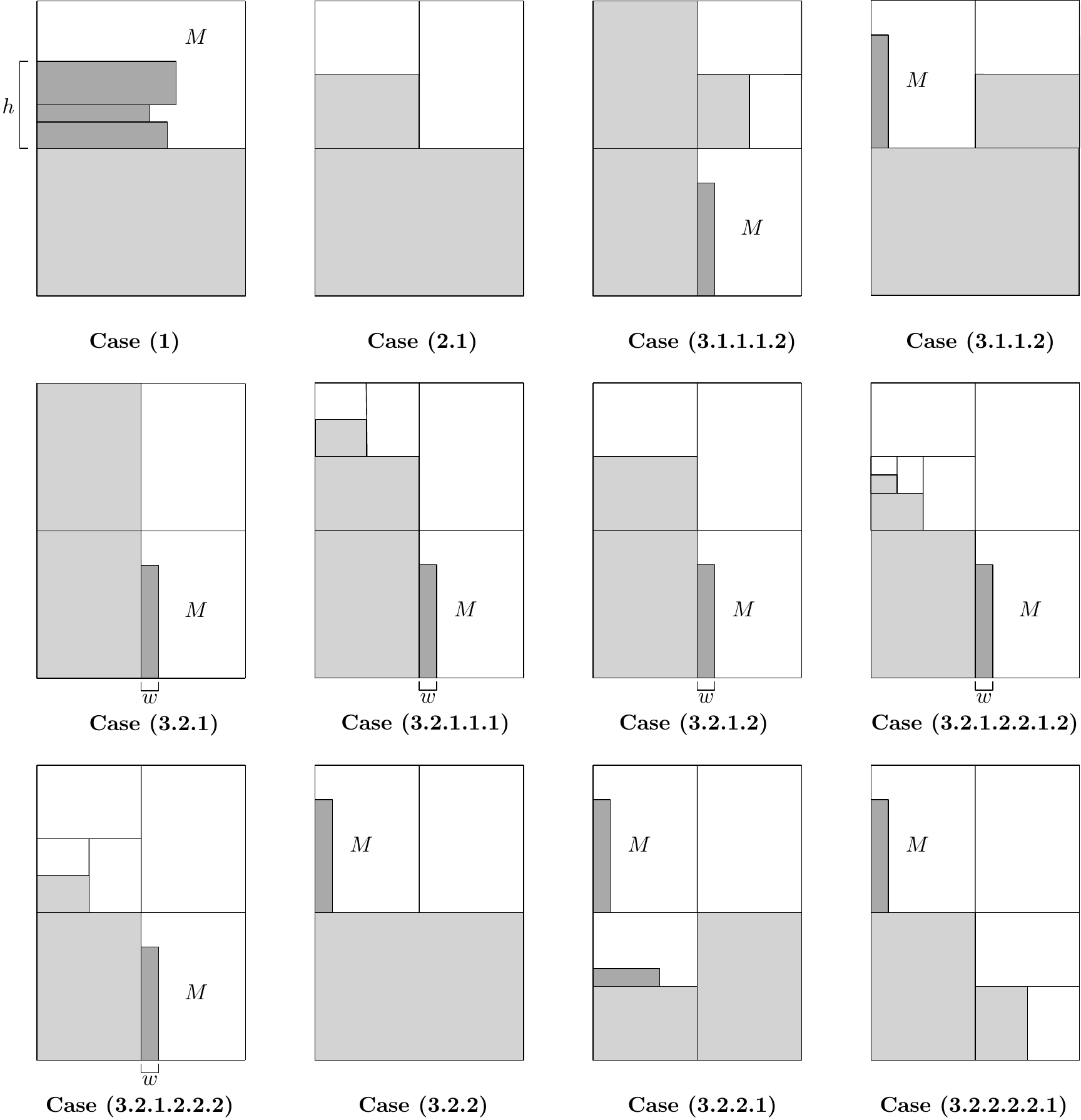}
\caption{Some of the cases listed in the proof of Theorem~\ref{thm:peritransalg} are shown. The grey area must fit within the bounding box considered in the case analysis.}
\label{fig:case_tree}
\end{figure}

\case{1}{$M$ is a $0$-brick} 

Thanks to Remark~\ref{remark:edge_length} every piece in $M$ has width at least $1/2$. Let $h$ be the total height of pieces stacked in $M$, then a bounding box of size size $1 \times (\sqrt{2} / 2 + h)$ is obtained cutting the topmost part of $M$; see Figure~\ref{fig:case_tree}. We can easily bound $\OPT$ with $1 / 2 \times h$, and we get
\[
\frac{\ALG}{\OPT} \leq \frac{1 + \frac{\sqrt{2}}{2} + h}{\frac{1}{2} + h} \leq  2 + \sqrt{2} < \comprattrans.
\]
\case{2}{$M$ is a $k$-brick for $k \geq 2$} 

Here we have two cases.

\case{2.1}{There exist a $1$-brick $N_1$ and a $2$-brick $N_2$ that are empty} 

Thanks to Remark~\ref{remark:free_monotonicity} we can choose $N_1$ = $B_0 \spl 2$ and $N_2 = B_0 \spl 1 \spl 2$. In fact, for $B_0 \spl 2$ it is sufficient to choose $S$ as the set of all $1$-bricks, while for $B_0 \spl 1 \spl 2$ we can choose $S$ to be the set of all $2$-bricks that are not contained in a larger free brick. Thus, we can cut the topmost half of $B_0$ and get $\ALG \leq 1 + 3 / 4 \cdot \sqrt{2}$; see Figure~\ref{fig:case_tree}. We have 
\begin{align*}
& A(empty) \leq \sum_{i \geq 1} A(B_i) \leq \frac{\sqrt{2}}{2} \\
& A(sparse) \leq \sum_{i \geq 2} A(B_i) = \frac{\sqrt{2}}{4} \quad \text{(thanks to case (2) clause there is no sparse $1$-brick)} \\
& A_{pcs} \geq \frac{A(dense)}{4} \geq \frac{A(B_{\geq 0}) - A(sparse) - A(empty)}{4} \geq \frac{\sqrt{2}}{16}
\end{align*}
Now we are ready to bound $\OPT$:
\begin{align*}
& \OPT \geq 2 \cdot \sqrt{A_{pcs}} = \sqrt{\frac{\sqrt{2}}{4}} \\
& \frac{\ALG}{\OPT} \leq \frac{1 + \frac{3}{4} \sqrt{2}}{\sqrt{\frac{\sqrt{2}}{4}}} \approx 3.47 < 4.
\end{align*}

\case{2.2}{For $j = 1$ or $j = 2$ there does not exist an empty $j$-brick} 

In this case we just use $\ALG \leq 1 + \sqrt{2}$.  Then we have 
\begin{align*}
& A(empty) \leq \sum_{i \geq 1 \land i \neq j} A(B_i) \leq \frac{3}{8} \sqrt{2} \quad \text{(worst case is when $j = 2$)}
\\
& A(sparse) \leq \sum_{i \geq 2} A(B_i) = \frac{\sqrt{2}}{4}
\end{align*}
therefore performing the same computations of case (2.1), $A_{pcs} \geq 3 / 32 \cdot \sqrt{2}$, and finally
\begin{align*}
& \OPT \geq 2 \cdot \sqrt{\frac{3}{32} \sqrt{2}} = \sqrt{\frac{3}{8} \sqrt{2}} \\ 
& \frac{\ALG}{\OPT} \leq \frac{1 + \sqrt{2}}{\sqrt{\frac{3}{8} \sqrt{2}}} \approx 3.32 < 4.
\end{align*}

\case{3}{$M$ is a $1$-brick} 

For the rest of the proof $L$ will be the length of the longest edge among all pieces. Since $M$ is a $1$-brick, we have $\sqrt{2} / 4 < L \leq \sqrt{2} / 2$. Here we have two cases.

\case{3.1}{There does not exists an empty $1$-brick}

Here we have two cases.

\case{3.1.1}{For $j = 2$ and $j = 3$ there exists an empty $j$-brick}

Here we have three cases.

\case{3.1.1.1}{$M$ is the fundamental brick $B_1$}

Thanks to Remark~\ref{remark:free_monotonicity} we can assume $B_0 \spl 2 \spl 2$ and $B_0 \spl 2\spl 1 \spl 2$ to be empty. Here we have two cases.

\case{3.1.1.1.1}{$M$ is dense}

Since $M=B_1$ is the $\prec$-maximal $k$-brick in $\D$, then there does not exist a sparse $1$-brick.
\begin{align*}
& A(empty) \leq \sum_{i \geq 2} A(B_i) \leq \frac{\sqrt{2}}{4}
\\
& A(sparse) \leq \sum_{i \geq 2} A(B_i) = \frac{\sqrt{2}}{4}
\end{align*}
therefore $A_{pcs} \geq \frac {\sqrt{2}} 8$, and finally
\begin{align*}
& \OPT \geq 2 \cdot \sqrt{\frac {\sqrt{2}} 8} = \sqrt{\frac {\sqrt{2}} 2} \\ 
& \frac{\ALG}{\OPT} \leq \frac{1 + \sqrt{2}}{\sqrt{\frac {\sqrt{2}} 2}} \approx 2.87 < 4.
\end{align*}

\case{3.1.1.1.2}{$M$ is sparse}

Then, we can cut the rightmost part of $B_{\geq 0}$ and get a $3/4 \times \sqrt{2}$ bounding box; see Figure~\ref{fig:case_tree}. We have 
\begin{align*}
& A(empty) \leq \sum_{i \geq 2} A(B_i) \leq \frac{\sqrt{2}}{4}\\
& A(sparse) \leq \sum_{i \geq 1} A(B_i) \leq \frac{\sqrt{2}}{2}
\end{align*}
hence $A_{pcs} \geq \sqrt{2} / 16$. Since $L^2 > 1 / 8 > \sqrt{2} / 16$ we finally have 
\begin{align*}
& \OPT \geq L + \frac{A_{pcs}}{L} \geq \frac{\sqrt{2}}{4} + \frac{1}{4} \quad \text{(minimizing over $L \in [\sqrt{2} / 4, \sqrt{2} / 2]$)}\\
& \frac{\ALG}{\OPT} \leq \frac{3/4 + \sqrt{2}}{\frac{\sqrt{2}}{4} + \frac{1}{4}} \approx 3.59 < 4.
\end{align*}

\case{3.1.1.2}{$M = B_0 \spl 1$}

Thanks to Remark~\ref{remark:free_monotonicity} we can assume $B_0 \spl 2 \spl 2$ to be empty. Then, we can cut the topmost part of $B_{\geq 0}$ and get a $1 \times \sqrt{2}/2 + L$ bounding box; see Figure~\ref{fig:case_tree}. We have 
\begin{align*}
& A(empty) \leq \sum_{i \geq 2} A(B_i) \leq \frac{\sqrt{2}}{4}\\
& A(sparse) \leq \sum_{i \geq 1} A(B_i) \leq \frac{\sqrt{2}}{2}
\end{align*}
hence $A_{pcs} \geq \sqrt{2} / 16$. Since $L^2 > 1 / 8 > \sqrt{2} / 16$ we finally have 
\begin{align*}
& \OPT \geq L + \frac{A_{pcs}}{L} \geq L + \frac{\sqrt{2}}{16 L} \\
& \frac{\ALG}{\OPT} \leq \frac{1 + \sqrt{2}/2 + L}{L + \frac{\sqrt{2}}{16 L}} \leq 2 + \sqrt{2} < 4. \quad \text{(maximizing over $L \in [\sqrt{2} / 4, \sqrt{2} / 2]$)}
\end{align*}

\case{3.1.1.3}{$M = B_0 \spl 2$} 

This case is analogous to the previous one, in fact thanks to Remark~\ref{remark:free_monotonicity} we can assume $B_0 \spl 1 \spl 2$ to be empty and cut the topmost part of $B_{\geq 0}$.

\case{3.1.2}{For $j = 2$ or $j = 3$ there does not exist an empty $j$-brick}

\begin{align*}
& A(empty) \leq \sum_{i \geq 2 \land i \neq j} A(B_i) \leq \frac{3}{16} \sqrt{2} \quad \text{(worst case is when $j = 3$)}\\
& A(sparse) \leq \sum_{i \geq 1} A(B_i) \leq \frac{\sqrt{2}}{2}
\end{align*}
hence $A_{pcs} \geq \frac 5 {64} \cdot \sqrt{2} $. Since $L^2 > 1 / 8 > \frac 5 {64} \cdot \sqrt{2}$ we finally have 
\begin{align*}
& \OPT \geq L + \frac{A_{pcs}}{L} \geq \frac{\sqrt{2}}{4} + \frac{5}{16} \quad \text{(minimizing over $L \in [\sqrt{2} / 4, \sqrt{2} / 2]$)}\\
& \frac{\ALG}{\OPT} \leq \frac{1 + \sqrt{2}}{\frac{\sqrt{2}}{4} + \frac{5}{16}} \approx 3.62 < 4.
\end{align*}

\case{3.2}{There exists an empty $1$-brick} 

Thanks to Remark~\ref{remark:free_monotonicity} we can assume $B_0 \spl 2$ to be empty. Here we have two cases.

\case{3.2.1}{$M$ is the fundamental brick $B_1$} 

Let $w$ be the total width of pieces stacked in $M$. Since $B_0 \spl 2$ is empty, we can cut the rightmost part of $B_{\geq 0}$ and get a $(1 / 2 + w) \times \sqrt{2}$ bounding box; see Figure~\ref{fig:case_tree}. Since increasing $w$ only improves our estimates, we consider the corner case $w = 0$. Now we have two cases.

\case{3.2.1.1}{There does not exist an empty $2$-brick}

Here we have two cases.

\case{3.2.1.1.1}{For $j = 3$ and $j = 4$ there exists an empty $j$-brick}

Thanks to Remark~\ref{remark:free_monotonicity} we can assume $B_0 \spl 1 \spl 2 \spl 2$ and $B_0 \spl 1\spl 2 \spl 1 \spl 2$ to be empty. Thus, we can cut the topmost part of $B_{\geq 0}$ and get a $1/2 \times (7/8 \cdot \sqrt{2})$ bounding box; see Figure~\ref{fig:case_tree}. We have

\begin{align*}
& A(empty) \leq \sum_{i \geq 1 \land i \neq 2} A(B_i) \leq \frac{3}{8} \sqrt{2} \\
& A(sparse) \leq \sum_{i \geq 1} A(B_i) \leq \frac{\sqrt{2}}{2}
\end{align*}
hence $A_{pcs} \geq \sqrt{2} / 32$. Since $L^2 > 1 / 8 > \sqrt{2} / 32$ we finally have 
\begin{align*}
& \OPT \geq L + \frac{A_{pcs}}{L} \geq \frac{\sqrt{2}}{4} + \frac{1}{8} \\
& \frac{\ALG}{\OPT} \leq \frac{1/2 + (7/8 \cdot \sqrt{2})}{\frac{\sqrt{2}}{4} + \frac{1}{8}} \approx 3.63 < 4.
\end{align*}

\case{3.2.1.1.2}{For $j = 3$ or $j = 4$ there does not exist an empty $j$-brick}

\begin{align*}
& A(empty) \leq \sum_{i \geq 1 \land i \neq 2, j} A(B_i) \leq \frac{11}{32} \sqrt{2}  \quad \text{(worst case is when $j = 4$)}\\
& A(sparse) \leq \sum_{i \geq 1} A(B_i) \leq \frac{\sqrt{2}}{2}
\end{align*}
hence $A_{pcs} \geq 5/128 \cdot \sqrt{2}$. Since $L^2 > 1 / 8 > 5/128 \cdot \sqrt{2}$ we finally have 
\begin{align*}
& \OPT \geq L + \frac{A_{pcs}}{L} \geq \frac{\sqrt{2}}{4} + \frac{5}{32} \\
& \frac{\ALG}{\OPT} \leq \frac{1/2 + \sqrt{2}}{\frac{\sqrt{2}}{4} + \frac{5}{32}} \approx 3.75 < 4.
\end{align*}

\case{3.2.1.2}{There exists an empty $2$-brick} 

Thanks to Remark~\ref{remark:free_monotonicity} we can assume $B_0 \spl 1 \spl 2$ to be empty. Thus, we can cut the topmost part of $B_{\geq 0}$ and get a $1/2 \times (3 / 4 \cdot \sqrt{2})$ bounding box; see Figure~\ref{fig:case_tree}. Here we have two cases.

\case{3.2.1.2.1}{There does not exist an empty $3$-brick} 
\begin{align*}
& A(empty) \leq \sum_{i \geq 1 \land i \neq 3} A(B_i) \leq \frac{7}{16} \sqrt{2} \\
& A(sparse) \leq \sum_{i \geq 1} A(B_i) \leq \frac{\sqrt{2}}{2}
\end{align*}
hence $A_{pcs} \geq \sqrt{2} / 64$. Since $L^2 > 1 / 8 > \sqrt{2} / 64$ we finally have
\begin{align*}
& \OPT \geq L + \frac{A_{pcs}}{L} \geq \frac{\sqrt{2}}{4} + \frac{1}{16} \\
& \frac{\ALG}{\OPT} \leq \frac{\frac{1}{2} + \frac{3}{4}\sqrt{2}}{\frac{\sqrt{2}}{4} + \frac{1}{16}} \approx 3.75 < 4.
\end{align*}

\case{3.2.1.2.2}{There exists an empty $3$-brick} 

Thanks to Remark~\ref{remark:free_monotonicity} we can assume $B_0 \spl 1 \spl 1 \spl 2$ to be empty. Here we have two cases.

\case{3.2.1.2.2.1}{There does not exist an empty $4$-brick} 

Here we have two cases.

\case{3.2.1.2.2.1.1}{For $j = 5$ or $j = 6$ there does not exist an empty $j$-brick} 
\begin{align*}
& A(empty) \leq \sum_{i \geq 1 \land i \neq 4, j} A(B_i) \leq \frac{59}{128} \sqrt{2} \quad \text{(worst case is when $j = 6$)}\\
& A(sparse) \leq \sum_{i \geq 1} A(B_i) \leq \frac{\sqrt{2}}{2}
\end{align*}
hence $A_{pcs} \geq 5 / 512 \cdot \sqrt{2}$. Since $L^2 > 1 / 8 > 5 / 512 \cdot \sqrt{2}$ we finally have
\begin{align*}
& \OPT \geq L + \frac{A_{pcs}}{L} \geq \frac{\sqrt{2}}{4} + \frac{5}{128} \\
& \frac{\ALG}{\OPT} \leq \frac{\frac{1}{2} + \frac{3}{4}\sqrt{2}}{\frac{\sqrt{2}}{4} + \frac{5}{128}} \approx 3.98 < 4.
\end{align*}

\case{3.2.1.2.2.1.2}{For $j = 5$ and $j = 6$ there exists an empty $j$-brick} 

Thanks to Remark~\ref{remark:free_monotonicity} we can assume $B_0 \spl 1 \spl 1 \spl 1 \spl 2 \spl 2$ and $B_0 \spl 1\spl 1 \spl 1 \spl 2 \spl 1 \spl 2$ to be empty. Then, we can cut the topmost part of $B_{\geq 0}$ and get a $1 / 2 \times (11/16 \cdot \sqrt{2})$ bounding box; see Figure~\ref{fig:case_tree}. We have 
\begin{align*}
& A(empty) \leq \sum_{i \geq 1 \land i \neq 4} A(B_i) \leq \frac{15}{32} \sqrt{2} \\    
& A(sparse) \leq \sum_{i \geq 1} A(B_i) \leq \frac{\sqrt{2}}{2}
\end{align*}
hence $A_{pcs} \geq \sqrt{2} / 128$. Since $L^2 > 1 / 8 > \sqrt{2} / 128$ we finally have 
\begin{align*}
& \OPT \geq L + \frac{A_{pcs}}{L} \geq \frac{\sqrt{2}}{4} + \frac{1}{32} \\
& \frac{\ALG}{\OPT} \leq \frac{\frac{1}{2} + \frac{11}{16}\sqrt{2}}{\frac{\sqrt{2}}{4} + \frac{1}{32}} \approx 3.83 < 4.
\end{align*}

\case{3.2.1.2.2.2}{There exists an empty $4$-brick} 

Thanks to Remark~\ref{remark:free_monotonicity} we can assume $B_0 \spl 1 \spl 1 \spl 1 \spl 2$ to be empty. Then, we can cut the topmost part of $B_{\geq 0}$ and get a $1 / 2 \times (5/8 \cdot \sqrt{2})$ bounding box; see Figure~\ref{fig:case_tree}. Now it remains to bound $\OPT$, and we just assume $\OPT \geq L \geq \sqrt{2} / 4$, finally 
\[
\frac{\ALG}{\OPT} \leq \frac{\frac{1}{2} + \frac{5}{8}\sqrt{2}}{\frac{\sqrt{2}}{4}} \approx 3.92 < 4.
\]
\case{3.2.2}{$M = B_0 \spl 1$} 

For the rest of the proof let $L$ be the length of the longest of pieces' edges then, according to Remark~\ref{remark:edge_length}, $\sqrt{2} / 4 \leq L \leq \sqrt{2} / 2$. 
We can cut the topmost part of $B_{\geq 0}$ and get a $1 \times (\sqrt{2} / 2 + L)$ bounding box; see Figure~\ref{fig:case_tree}. Here we have two cases.

\case{3.2.2.1}{There exists a $2$-brick in $\D$} 

Thanks to Remark~\ref{remark:edge_length}, we have a piece of width at least $1 / 4$, and combining this with the fact that we have a piece of height $L$, it is apparent that $\OPT \geq 1 /4 + L$; see Figure~\ref{fig:case_tree}. Thus,
\[
\frac{\ALG}{\OPT} \leq \frac{1 + \frac{\sqrt{2}}{2} + L}{\frac{1}{4} + L} \leq 2 + \sqrt{2} < 4 \quad \text{(maximizing over $L \in [\sqrt{2}/4, \sqrt{2}/2]$).}
\]

\case{3.2.2.2}{There does not exist a $2$-brick in $\D$}

Here we have two cases.

\case{3.2.2.2.1}{There does not exist an empty $2$-brick} 
\begin{align*}
& A(empty) \leq \sum_{i \geq 1 \land i \neq 2} A(B_i) \leq \frac{3}{8} \sqrt{2} \\
& A(sparse) \leq \sum_{i \geq 1} A(B_i) \leq \frac{3}{8} \sqrt{2}
\end{align*}
hence $A_{pcs} \geq \sqrt{2} / 16$. Since $L^2 > 1 / 8 > \sqrt{2} / 16$ we finally have
\begin{align*}
& \OPT \geq L + \frac{A_{pcs}}{L} \geq L + \frac{\sqrt{2}}{16 L}\\
& \frac{\ALG}{\OPT} \leq \frac{1 + \frac{\sqrt{2}}{2} + L}{L + \frac{\sqrt{2}}{16 L}} \leq 2 + \sqrt{2} < 4 \quad \text{(maximizing over $L \in [\sqrt{2}/4, \sqrt{2}/2]$).}
\end{align*}

\case{3.2.2.2.2}{There exists an empty $2$-brick} 
Thanks to Remark~\ref{remark:free_monotonicity} we can assume $B_1 \spl 2$ to be empty.
Here we have two cases.

\case{3.2.2.2.2.1}{There exists an empty $3$-brick} 

Thanks to Remark~\ref{remark:free_monotonicity} we can assume $B_1 \spl 1 \spl 2$ to be empty. Then, we can cut the rightmost part of $B_{\geq 0}$ and get a $3/4 \times (\sqrt{2}/2 + L)$ bounding box; see Figure~\ref{fig:case_tree}. Now it remains to bound $\OPT$. We have 
\begin{align*}
& A(empty) \leq \sum_{i \geq 1 } A(B_i) \leq \frac{\sqrt{2}}{2} \\
& A(sparse) \leq \sum_{i \geq 1 \land i \neq 2} A(B_i) \leq \frac{3}{8} \sqrt{2}
\end{align*}
hence $A_{pcs} \geq \sqrt{2} / 32$. Since $L^2 > 1 / 8 > \sqrt{2} / 32$ we finally have
\begin{align*}
& \OPT \geq L + \frac{A_{pcs}}{L} \geq L + \frac{\sqrt{2}}{32 L}\\
& \frac{\ALG}{\OPT} \leq \frac{\frac{3}{4} + \frac{\sqrt{2}}{2} + L}{L + \frac{\sqrt{2}}{32 L}} \leq 3.79 < 4 \quad \text{(maximizing over $L \in [\sqrt{2}/4, \sqrt{2}/2]$).}
\end{align*}

\case{3.2.2.2.2.2}{There does not exist an empty $3$-brick} 
\begin{align*}
& A(empty) \leq \sum_{i \geq 1 \land i \neq 3} A(B_i) \leq \frac{7}{16} \sqrt{2} \\
& A(sparse) \leq \sum_{i \geq 2 \land i \neq 2} A(B_i) \leq \frac{3}{8} \sqrt{2}
\end{align*}
hence $A_{pcs} \geq 3 / 64 \cdot \sqrt{2} $. Since $L^2 > 1 / 8 > 3 / 64 \cdot \sqrt{2}$ we finally have
\begin{align*}
& \OPT \geq L + \frac{A_{pcs}}{L} \geq L + \frac{3 \sqrt{2}}{64 L}\\
& \frac{\ALG}{\OPT} \leq \frac{1 + \frac{\sqrt{2}}{2} + L}{L + \frac{3 \sqrt{2}}{64 L}} \leq 3.82 < 4 \quad \text{(maximizing over $L \in [\sqrt{2}/4, \sqrt{2}/2]$).}
\end{align*}
\else
This proof relies on a careful case analysis of which bricks are contained in $\D$ and is deferred to the full version.
\fi
\end{proof}

\pparagraph{Algorithm using rotations}\label{sec:perirotations}
The algorithm $\smallboxrot$ is almost identical to $\smallboxtrans$, but with the difference that we rotate each piece so that its height is at least its width.

\begin{theorem}\label{thm:perirotalg}
The algorithm $\smallboxrot$ has a competitive ratio of strictly less than \compratrot\ for \perirot.
\end{theorem}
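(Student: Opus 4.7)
The plan is to observe that $\smallboxrot$ differs from $\smallboxtrans$ only in a preprocessing rotation step that guarantees every piece has height at least its width, and then to argue that the analysis of Theorem~\ref{thm:peritransalg} carries over essentially verbatim. Concretely, I would phrase $\smallboxrot$ as: rotate $p$ if necessary, obtaining $\tilde p$ with $\mathrm{height}(\tilde p)\geq \mathrm{width}(\tilde p)$, and run $\smallboxtrans$ on the resulting stream $\tilde L$. The packing produced is identical in structure to the one analyzed in Theorem~\ref{thm:peritransalg}, so the upper bound $\ALG(L) = \ALG_{\smallboxtrans}(\tilde L)$ obeys exactly the same bounding-box estimates case-by-case.

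The only thing that requires checking is that the lower bounds on $\OPT$ used throughout the case tree of Theorem~\ref{thm:peritransalg} remain valid for \perirot, where the offline optimum may also rotate. The two bounds invoked there are (i) $\OPT \geq 2\sqrt{A_{pcs}}$ and (ii) $\OPT \geq L + A_{pcs}/L$ when $L^2 > A_{pcs}$, where $A_{pcs}$ is the total area of pieces and $L$ is the length of the longest edge appearing in the stream. Both quantities are invariant under $90^\circ$ rotation of individual pieces: area is trivially preserved, and the set of edge lengths is unchanged, so $L$ is the same. Moreover, any piece with an edge of length $L$ forces, even after rotation by $90^\circ$, one of the two axis-parallel dimensions of any bounding box to be at least $L$, which is precisely what (ii) requires. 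Hence both lower bounds still apply to $\OPT$ for the rotational problem.

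With the same $\ALG$ upper bounds and the same $\OPT$ lower bounds, every leaf of the case tree in the proof of Theorem~\ref{thm:peritransalg} yields the same numerical ratio, and the overall bound $\ALG/\OPT < \compratrot$ follows. The main (and only) subtlety is verifying that the argument for (ii) is truly orientation-free: one must check that throughout the case analysis, the bound $\OPT \geq L + A_{pcs}/L$ is used only via the existence of a single long edge forcing one bounding-box dimension, and not via any finer geometric assumption about orientation. Since the translation proof uses $L$ only in this edge-length sense, this check is immediate, and no change to the case analysis is needed.
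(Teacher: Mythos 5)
Your overall strategy is the right one and is what the paper itself does: port the case tree of Theorem~\ref{thm:peritransalg} and verify that every lower bound on $\OPT$ survives when the offline adversary may also rotate. The gap is in your claim that the only bounds invoked are (i) $\OPT\geq 2\sqrt{A_{pcs}}$ and (ii) $\OPT\geq L+A_{pcs}/L$. That is not true. Case (3.2.2.1) bounds $\OPT\geq 1/4+L$ by combining \emph{two different pieces} with constraints on \emph{perpendicular} dimensions: a piece of width at least $1/4$ (living in a $2$-brick, via Remark~\ref{remark:edge_length}) and a piece with an edge of length $L$. An area-based bound cannot replace it there, because in that configuration the total area of the pieces can be arbitrarily small. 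This two-piece bound is exactly the kind of ``finer geometric assumption about orientation'' that you assert never occurs: with rotations allowed offline, the wide piece could a priori be turned so that its long dimension is parallel to the $L$-dimension, and the additive bound $1/4+L$ would not follow. So the check you call ``immediate'' in fact fails in one case, and your proof as written does not close it. (Cases (1) and (3.2.1.2.2.2) also use bounds outside your list, namely $\OPT\geq 1/2+h$ and $\OPT\geq L$, but those are easily seen to be rotation-proof.)

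The missing idea --- and the one substantive point of the paper's proof --- is that the preprocessing rotation of $\smallboxrot$ repairs precisely case (3.2.2.1): since the algorithm orients every piece with height at least width, the piece $q$ of size $w_q\times h_q$ in the $2$-brick satisfies $1/4\leq w_q\leq h_q$, i.e., \emph{both} of its dimensions are at least $1/4$. Hence whichever side of the bounding box is forced to length at least $L$ by the longest edge, the perpendicular side has length at least $\min\{w_q,h_q\}\geq 1/4$, so $\OPT\geq L+1/4$ still holds and the numerical bound of case (3.2.2.1) is recovered. With this one additional argument your proof goes through and coincides with the paper's.
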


\begin{proof}
\iffull
The analysis of $\smallboxtrans$ carried out in the proof of Theorem~\ref{thm:peritransalg} still holds, in fact all the estimates on $\OPT$ derived from consideration about area are still valid, and the only delicate spot is case (3.2.2.1). In that case we assume to have a piece $p$ having an edge of length $L \in [\sqrt{2} / 4, \sqrt{2}/2]$, and that there exists a $2$-brick in $\D$. Thanks to Remark~\ref{remark:edge_length} there exists a piece $q$ of size $w_q \times h_q$ with $w_q \geq 1 / 4$, moreover we rotate every piece so that $1 / 4 \leq w_q\leq h_q$. Finally, a box that contains both $p$ and $q$ must have size at least $L \times w_q$ or $L \times h_q$, hence
\[
\OPT \geq \min\left\{L + w_q, L + h_q\right\} \geq L + \frac{1}{4}. 
\]
This gives exactly the same bound showed in case (3.2.2.1) and completes the proof.
\else
This proof is deferred to the full version.
\fi
\end{proof}

\subsection{A similar but inferior algorithm}\label{sec:alg:inferior}

Here we consider the algorithm we get by making a slight change to \smallboxtrans.
Suppose that the very first piece $p$ arrives and that a $k$-brick is suitable for $p$.
Instead of placing $p$ in $B_k$ (as \smallboxtrans\ would do), we consider the brick $B_{>k}$ to be a fundamental brick (although in the original algorithm, it was an infinite union of fundamental bricks) and we place $p$ in $B_{>k}$.
Thus, we are never going to use the fundamental bricks $B_{i}$ individually, for $i>k$.
From here on, the algorithm does as \smallboxtrans:
Whenever a new piece arrives, we place it in the first derived brick of the suitable size that has room.
This behavior is similar to the algorithm for the problem \sqArea\ that was described by Fekete and Hoffmann~\cite{DBLP:journals/algorithmica/FeketeH17}.
That problem is studied in more detail in Section~\ref{sec:boundedaspect}, and for that problem, the algorithm seems to be no worse than ours.

Interestingly, the following theorem together with Theorem~\ref{thm:peritransalg} implies that the modified algorithm is worse for the problem \peritrans.

\begin{figure}
\centering
\includegraphics[page=6]{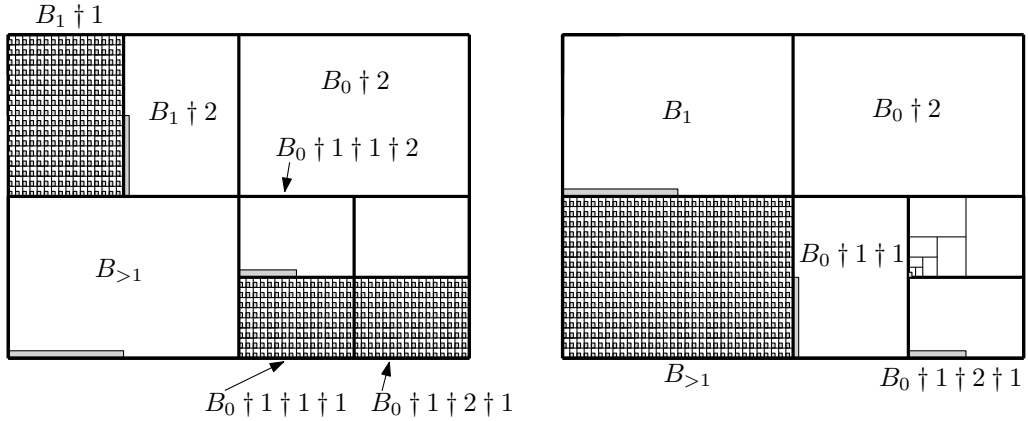}
\caption{Left: A configuration produced by the modified version of \smallboxtrans.
Right: The configuration produced by the original algorithm \smallboxtrans.}
\label{fig:modalg}
\end{figure}

\begin{theorem}
The modified version of \smallboxtrans\ has a competitive ratio of at least $4$ for the problem \peritrans.
\end{theorem}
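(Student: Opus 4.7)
The plan is to exhibit a stream on which the modified algorithm produces a bounding box whose perimeter is at least four times $\OPT$. The fundamental inefficiency to exploit is structural: once the first piece $p_1$, whose suitable brick is $B_k$, is placed in $B_{>k}$ instead of $B_k$, the whole region $B_{>k}$ becomes a single brick in $\D$, so the fundamental bricks $B_{k+1}, B_{k+2}, \ldots$ that tile $B_{>k}$ geometrically are permanently unavailable for later placements. Any subsequent piece requiring a $(k+j)$-brick must therefore be routed into a derived sub-brick of some $B_i$ with $i \leq k$, and once the capacity in $B_k, B_{k-1}, \ldots, B_0$ is exhausted, into the exponentially growing bricks $B_{-1}, B_{-2}, \ldots$, which inflates the bounding box.

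Concretely, I would take $k$ small (likely $k=0$ or $k=1$) and choose $p_1$ to be a piece that just barely requires a $k$-brick, so that $B_{>k}$ is essentially wasted by the modified algorithm. I would then feed the algorithm a large batch of smaller pieces that each need a $(k+j)$-brick for some fixed $j \geq 1$, with the total count of the order $\Theta(2^j \cdot 2^m)$ for a tunable $m$, so that the modified algorithm is forced to open bricks all the way down to $B_{-m}$. An optimal offline packing, on the other hand, can tile $p_1$ together with all of the small pieces inside a rectangle of dimensions comparable to a single $k$-brick (the small pieces fill the space near $p_1$ that the modified algorithm cannot reach because of the single-brick treatment of $B_{>k}$), yielding an $\OPT$ that is essentially independent of $m$. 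Computing the perimeter of the enclosing rectangle of the modified packing, which scales like $\Theta(2^{m/2})$ because of the outermost brick $B_{-m}$ used, and the perimeter of the OPT enclosing rectangle, which stays bounded, then gives a ratio that can be driven up to $4$ by tuning $j$ and $m$. Standard lower bounds on $\OPT$ via the longest edge $L$ of a piece or $2\sqrt{A_{pcs}}$, as used in the proof of Theorem~\ref{thm:peritransalg}, would be the main tool to rule out a smaller ratio.

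The main obstacle is calibrating the construction so that the ratio is exactly $4$ and not a weaker constant such as $2+\sqrt{2}$: one must verify that the $\prec$-ordering routes each small piece into the expected outer brick, rather than packing compactly into earlier splits of $B_0$ or $B_1$, and one must ensure that no clever offline packing beats the $\OPT$ estimate used. In contrast to the upper-bound proof of Theorem~\ref{thm:peritransalg}, where the existence of several empty bricks is leveraged to shrink $\ALG$, here the occupied-but-sparse region $B_{>k}$ is exactly what prevents the area of the packed pieces from giving a strong $\OPT$ lower bound, producing the tight ratio~$4$.
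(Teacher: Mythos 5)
There is a genuine gap: the mechanism you propose cannot produce the ratio $4$, and in fact cannot work at all. You want the modified algorithm's bounding box to grow like $\Theta(2^{m/2})$ by forcing it out to $B_{-m}$ while $\OPT$ ``stays bounded''; but the algorithm only opens a new $j$-brick when every existing $j$-brick in $\D$ lacks room, there is at most one sparse brick per size class, and every piece placed in a suitable $j$-brick has an edge of length at least half the long side of that brick. Consequently, reaching $B_{-m}$ forces either a total piece area of $\Omega(2^m)$ or a single piece with an edge of length $\Omega(2^{m/2})$, and in either case $\OPT=\Omega(2^{m/2})$ --- the same order as your claimed $\ALG$. (This is exactly the kind of lower bound on $\OPT$ used in the proof of Theorem~\ref{thm:peritransalg}, most of which carries over to the modified algorithm, so it admits a constant upper bound and no stream with bounded $\OPT$ and unbounded $\ALG$ exists.) Your write-up is also internally inconsistent: if $\ALG=\Theta(2^{m/2})$ and $\OPT$ were bounded, the ratio would tend to infinity, not be ``driven up to $4$ by tuning''.

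The correct construction goes the other way: keep the bounding box \emph{fixed} (essentially $B_{\geq 0}$, i.e.\ the perimeter of $B_{-1}$) and make its contents as light as possible. The first piece is a thin sliver whose suitable brick is a $1$-brick; the modified rule assigns it all of $B_{>1}$, wasting that entire region at essentially zero area cost. The remaining fundamental bricks $B_0,B_1$ are then filled with small rectangles sized to just overflow the next-smaller brick, so each lands in its suitable brick at the worst-case density $1/4$; two more thin slivers stretch the occupied region to the far corners of $B_{\geq 0}$. In the limit the total area of all pieces is $A(B_1)/4=A(B_3)=A(B_{\geq 0})/16$ and the longest edge is at most $\sqrt 2/4$, so offline everything packs into (a box converging to) $B_3$, whose perimeter is exactly one quarter of that of $B_{-1}$. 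The exploitable weakness is thus the wasted region $B_{>k}$ combined with worst-case in-brick density, not an unbounded outward expansion.
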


\begin{proof}
For any $\eps'>0$, we can make an instance realizing a competitive ratio of more than $4-\eps'$ as follows.
Figure~\ref{fig:modalg} shows the packing produced by the modified and the original algorithm.
We first give the algorithm the rectangle $(1/2\sqrt 2+\eps)\times \eps$ for an infinitesimal $\eps>0$.
The rectangle is placed in $B_{>1}$ by the modified algorithm.
For a large odd integer $k$, we then feed the algorithm with small rectangles of size $(\sqrt 2^{-k-1}+\eps)\times (\sqrt 2^{-k}+\eps)$ until $B_1\spl 1$ has been completely split into $(k-2)$-bricks, each of which contains one small rectangle.
We now give the algorithm a piece of size $\eps\times (1/4+\eps)$, which is placed in $B_1\spl 2$.
We again give the algorithm many small rectangles until $B_0\spl 1\spl 1\spl 1$ has been split into $(k-2)$-bricks.
Now follows a rectangle of size $(1/4\sqrt 2+\eps)\times\eps$, which is placed in $B_0\spl 1\spl 1\spl 2$.
Finally, we fill $B_0\spl 1\spl 2\spl 1$ with small rectangles.

Note that as $k\longrightarrow \infty$, the bounding box of the produced packing converges to $B_{\geq 0}$, so it has a perimeter as $B_{-1}$.
On the other hand, observe that as $\eps\longrightarrow 0$, we have $\Sigma\longrightarrow A(B_1)/4=A(B_3)$, since the small rectangles fill out bricks with a total area of $A(B_1)$ and with density $1/4$.
In the limit, all the pieces can actually be packed into $B_3$, so $\OPT$ is at most the perimeter of $B_3$.
But the perimeter of $B_{-1}$ is $4$ times that of $B_3$, which finishes the proof.
\end{proof}

\subsection{Lower bounds}\label{sec:lowerB}

\begin{lemma}
Consider any algorithm $A$ for the problem $\peritrans$.
Then the competitive ratio of $A$ is at least $4/3$.
\end{lemma}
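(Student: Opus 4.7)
The plan is to build an adaptive adversarial stream for which every online algorithm $A$ produces a bounding box whose perimeter is at least $(4/3)\cdot\OPT$, and the construction will branch on $A$'s early placements so that every branch can be analysed explicitly.

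First I would present $p_1=1\times 1$; since a single piece leaves no interesting choice, after this step $\ALG=\OPT=4$. Then I would present $p_2=1\times 1$ and branch on where $A$ places it. If $A$ places $p_2$ so that the resulting bounding box has perimeter at least $8$ (for example any placement in which neither a side nor the top of $p_2$ is flush with a side or the top of $p_1$, such as $p_2$ at a diagonally opposite corner giving a $2\times 2$ bounding box), the adversary terminates the stream: the optimal offline packing of two unit squares has perimeter $6$, so $\ALG/\OPT\geq 8/6=4/3$, proving the claim immediately. Otherwise $A$'s bounding box has perimeter $6$, and by reflecting we may assume it equals the rectangle $[0,2]\times[0,1]$.

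In this ``efficient'' branch the adversary must continue, since so far $\ALG=\OPT$. I would next give $p_3=1\times 2$, a piece that is perpendicular to the $p_1$--$p_2$ direction; any placement of $p_3$ forces $A$'s bounding box to either $2\times 3$ or $3\times 2$, each with perimeter $10$, while $\OPT$ packs the three pieces into a $2\times 2$ box of perimeter $8$. This alone gives only ratio $5/4$, so I would then present additional rectangles, chosen adaptively, whose dimensions do not fit into the leftover empty region inside $A$'s current bounding box but which $\OPT$ can accommodate inside a slightly enlarged tight rectangle. In each branch the analysis would be the same style as in the proof of Theorem~\ref{thm:peritransalg}: give an explicit compact offline packing to upper-bound $\OPT$, and a bounding box for $A$'s configuration to lower-bound $\ALG$.

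The hard part will be designing the continuation after $p_3$ so that in every branch of $A$'s responses the ratio actually reaches $4/3$ rather than stalling at $5/4$ or drifting back towards $1$: each new piece must increase $\ALG$'s perimeter by strictly more than it increases $\OPT$'s, otherwise $\OPT$ ``catches up'' and the ratio degrades. Once the pieces are chosen, the proof would finish with a finite case analysis over the possible placements $A$ can make for each of the rectangles in the stream, verifying the bound $\ALG\geq (4/3)\,\OPT$ in every leaf of the decision tree.
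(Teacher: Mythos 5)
Your proposal is incomplete and, as written, does not establish the bound. The paper's proof is a three-piece construction in exactly the spirit of your opening moves, but with one key twist you are missing: the third piece should be a \emph{degenerate sliver} $2\times\eps$ (resp.\ $\eps\times 2$), oriented so that its long edge is parallel to the \emph{short} side of the algorithm's current bounding box. After two unit squares the algorithm's box is $a\times b$ with $a\le b$, $a\ge 1$, $b\ge 2$; adding the sliver forces both dimensions to be at least $2$, so $\ALG> 8$, while the offline optimum packs everything into a box of perimeter $6+2\eps$. Letting $\eps\to 0$ gives $4/3$ immediately, with no continuation and no branching beyond the orientation of the two-square box. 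Your choice of $p_3=1\times 2$ is precisely what caps you at $5/4$: that piece increases $\OPT$ from $6$ to $8$ along with $\ALG$, whereas the sliver increases $\OPT$ only infinitesimally.

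Two concrete gaps remain in your argument. First, your branching after $p_2$ is not exhaustive: the algorithm's bounding box of two unit squares can have perimeter strictly between $6$ and $8$ (e.g.\ $1\times 2.5$), in which case you can neither terminate with ratio $4/3$ nor assume the box is $[0,2]\times[0,1]$. Second, and more seriously, the entire continuation after $p_3$ — the part that would lift the ratio from $5/4$ to $4/3$ — is left as a plan rather than a proof, and you yourself flag it as the hard part. Since the designed pieces and the accompanying case analysis are the whole content of an adversary lower bound, the proof is not complete without them; replacing $p_3$ by the $\eps$-thin sliver closes both gaps at once.
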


\begin{proof}
We first feed $A$ with two unit squares.
Let the bounding box of the two squares have size $a\times b$ and suppose without loss of generality that $a\leq b$.
Then $a\geq 1$ and $b\geq 2$.
We now give $A$ a rectangle of size $2\times\eps$ for a small value $\eps>0$.
The produced packing has a bounding box of perimeter more than $8$, whereas the optimal has perimeter $6+2\eps$.
Therefore, the competitive ratio is $\frac 8{6+2\eps}=\frac 4{3+\eps}$.
By letting $\eps\longrightarrow 0$, we get that the ratio is at least $4/3$.
\end{proof}

\begin{lemma}
Consider any algorithm $A$ for the problem $\perirot$.
Then the competitive ratio of $A$ is at least $5/4$.
\end{lemma}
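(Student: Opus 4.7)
The plan is to give an adaptive adversarial stream of at most four pieces that forces the ratio to $5/4$ in the supremum.

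First I would feed two unit squares. As in the previous lemma, the resulting bounding box has dimensions $a\times b$ with $a\leq b$, $a\geq 1$, and $b\geq 2$, so its perimeter is at least $6$, matching the offline optimum. If $a+b\geq 15/4$, the ratio is already at least $5/4$, so I may assume $a+b=3$ and the bounding box is exactly $1\times 2$.

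I would then feed a third unit square. Unless the algorithm's new semi-perimeter already exceeds $5$ (in which case the ratio against the optimum of $8$ for three unit squares is at least $5/4$), it must equal $4$. The only configurations of three unit squares achieving semi-perimeter $4$ are the $1\times 3$ box and the $2\times 2$ L-shape with a single empty $1\times 1$ corner. In the $1\times 3$ case I would feed a fourth unit square; no placement, rotated or not, can avoid increasing some side by at least $1$, so the new semi-perimeter is at least $5$, while the offline optimum of four unit squares is the $2\times 2$ square of semi-perimeter $4$, giving ratio exactly $5/4$.

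The main obstacle is the L-shape case, where the rotational freedom lets the algorithm exploit the empty $1\times 1$ corner for small pieces (the analogue of the $2\times\varepsilon$ rectangle used in the translational lower bound can simply be rotated and placed beside the existing squares). To defeat this I would feed a $3\times\varepsilon$ rectangle for a small $\varepsilon>0$. Neither the $3\times\varepsilon$ nor the $\varepsilon\times 3$ orientation fits in a $1\times 1$ gap, and a short case analysis over the possible placements around the $2\times 2$ bounding box shows that the algorithm must extend one side of the box by $3$, achieving semi-perimeter at least $5+\varepsilon$. The offline optimum, however, packs the three squares in a $1\times 3$ column and places the thin rectangle alongside as an $\varepsilon\times 3$ strip, attaining semi-perimeter $4+\varepsilon$. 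Since $(5+\varepsilon)/(4+\varepsilon)\to 5/4$ as $\varepsilon\to 0$, and the competitive ratio is the supremum over all streams, this completes the lower bound.
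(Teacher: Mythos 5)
Your adversary's two decisive moves (a fourth unit square when the three-square box is tall, a $3\times\eps$ strip when it is wide) are exactly the paper's moves, but the case analysis that routes you to them is not valid, because you repeatedly assume the algorithm reaches an \emph{exact} canonical configuration. The algorithm is not obliged to play optimally at intermediate stages; it only needs to keep the ratio below $5/4$. So after two squares the box can be any $a\times b$ with $3\le a+b<15/4$ (e.g.\ $1.2\times 2.4$), not necessarily $1\times 2$; after three squares the semi-perimeter can be anything in $[4,5)$, not necessarily $4$; and even at semi-perimeter exactly $4$ the squares need not form a $1\times 3$ row or a ``$2\times 2$ L-shape'' (lower-left corners at $(0,0)$, $(1,0)$, $(0.5,1)$ give a $2\times 2$ box with no empty unit corner). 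A configuration such as a $2.2\times 2.2$ box after three squares (ratio $1.1$) is covered by neither of your named cases, so your proof as written does not tell the adversary what to play next.

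The repair is to branch on dimensions rather than on shapes, which is what the paper does (it also skips the two-square prelude entirely): for three pairwise disjoint unit squares in an $a\times b$ box with $a\le b$, if $b<3$ then the lower-left corners cannot all be pairwise separated vertically by $\ge 1$, so some pair is separated horizontally and hence $a\ge 2$. Thus either $b\ge 3$ --- feed a fourth unit square, which forces width $\ge 2$ or height $\ge 4$ and hence semi-perimeter $\ge 5$ against $\OPT=4$ --- or $a\ge 2$ and $b\ge 2$ --- feed the $3\times\eps$ piece, so the new box keeps both sides $\ge 2$ and acquires a side $\ge 3$, hence semi-perimeter $\ge 5$ against $\OPT\le 4+\eps$. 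With that dichotomy substituted for your exact-configuration reasoning, your argument closes and coincides with the paper's.
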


\begin{proof}
We first feed $A$ with three unit squares.
Let the bounding box of the three squares have size $a\times b$ and suppose without loss of generality that $a\leq b$.
Suppose first that $b<3$.
Then we must have $a\geq 2$ for the box to contain the squares.
We then give the algorithm the rectangle $\eps\times 3$ for a small value $\eps>0$.
The produced packing has a bounding box of size at least $(2+\eps)\times 3$ and perimeter more than $10$, while the optimal solution has size $(1+\eps)\times 3$ and perimeter $8+2\eps$.

On the other hand, if $b\geq 3$, we give the algorithm one more unit square.
The produced packing has a bounding box of size at least $2\times 3$ or at least $1\times 4$, and thus perimeter at least $10$, while the optimal packing has size $2\times 2$ and perimeter $8$.

We get that the competitive ratio is at least $\frac{10}{8+2\eps}=\frac 5{4+\eps}$, and by letting $\eps\longrightarrow 0$, we get that the ratio is at least $5/4$.
\end{proof}

\section{Area versions}\label{sec:area}

\subsection{General lower bounds}\label{sec:lower:general}

In this section we show that, if we allow pieces to be arbitrary rectangles, we cannot bound the competitive ratio for neither $\areatrans$ nor $\arearot$ as a function of the area $\OPT$ of the optimal packing. However we will be able to bound the competitive ratio as a function of the total number $n$ of pieces in the stream. 

\begin{lemma}\label{thm:nooptbound}
Consider any algorithm $A$ solving $\areatrans$ or $\arearot$ and let any $m\in\N$ and $p\in\mathbb{R}$ be given.
There exists a stream of $n = m^2 + 1$ rectangles such that
(i) the rectangles can be packed into a bounding box of area $2p^2$, and
(ii) algorithm $A$ produces a packing with a bounding box of area at least~$mp^2$.
\end{lemma}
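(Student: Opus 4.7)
The proof is by an adaptive adversarial construction. Since $A$ is online, the adversary observes the placement of each rectangle before choosing the dimensions of the next, and the final stream depends on $A$'s decisions. My plan is to design a stream whose rectangles admit an offline packing into a $\sqrt{2}p \times \sqrt{2}p$ bounding box (of area $2p^2$), while forcing $A$'s bounding box to have area at least $mp^2$.

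I would reveal the stream in two stages. In the first stage, $m^2$ ``probe'' rectangles are presented, each of dimension at most $\sqrt{2}p$ and with total area kept below a fixed fraction of $2p^2$. The probes are sized adaptively so that, regardless of how $A$ tries to compactify them, $A$'s bounding box after this stage is forced to commit to a particular shape: one axis is ``short'' (at most $\sqrt{2}p$) while the other may be anything compatible with the area budget. The second stage is a single ``killer'' rectangle of maximum dimension at most $\sqrt{2}p$ (preserving the $2p^2$ bound on the offline packing) but whose shape is chosen to defeat $A$'s committed configuration, so that placing it forces the short committed dimension of $A$'s bounding box to grow to $\Omega(mp)$, yielding area at least $mp^2$.

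Property~(i) is verified by exhibiting an explicit offline packing of all $m^2 + 1$ rectangles inside the $\sqrt{2}p \times \sqrt{2}p$ square, placing the probes in one sub-rectangle and the killer in the complementary L-shape; this uses both the total area bound and the per-piece dimension bound of $\sqrt{2}p$. Property~(ii) combines the stage-one commitment (one axis short) with the killer's length along the committed short axis, showing that the product of the two final dimensions of $A$'s bounding box is at least $mp^2$.

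The main obstacle is designing the probe stage so that its adaptive dimensions provably force the needed commitment against every possible online strategy of $A$. For \arearot the adversary must also account for $A$'s freedom to rotate each rectangle, which can be handled either by using near-square probes (so the rotation choice barely affects the footprint) or by branching the adaptive construction on the orientations $A$ chooses in each round.
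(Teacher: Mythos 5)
There is a genuine gap, and it lies in your insistence that every piece (including the ``killer'') have both dimensions at most $\sqrt 2 p$ so that the offline packing is a $\sqrt 2 p\times\sqrt 2 p$ square. With that restriction the adversary loses exactly the leverage it needs: if the algorithm happens to pack your $m^2$ probes into a compact region of size $O(p)\times O(p)$ (which a doubling shelf strategy can in fact guarantee for pieces of bounded dimension and bounded total area), then no subsequent piece of dimension at most $\sqrt 2 p$ can push the bounding box to area $mp^2$ --- it grows by at most $O(p^2)$. Your claimed stage-one ``commitment'' (that one axis of $A$'s box is forced to be short while the other is long) is not something the adversary can enforce; the algorithm is free to keep both axes of length $\Theta(p)$. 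So the construction as described cannot defeat every online algorithm, and the obstacle you flag at the end is not a technicality but the crux.

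The paper's proof avoids this by \emph{not} requiring the optimal packing to be a square, only to have area $2p^2$. It feeds $m^2$ identical (non-adaptive) probes of size $p\times\frac{p}{m^2}$, of total area $p^2$, and then makes a single adaptive choice based on the shape $a\times b$ of $A$'s bounding box: if both $a,b\ge p/m$, the killer is the very long rectangle $pm^2\times\frac p{m^2}$, so $A$'s box has area at least $\frac pm\cdot pm^2=mp^2$ while $\OPT$ fits everything into the highly elongated box $pm^2\times\frac{2p}{m^2}$; otherwise one dimension is below $p/m$, forcing the other above $pm$ (since $ab\ge p^2$), and a $p\times p$ square killer forces area $pm\cdot p$ while $\OPT$ is $p\times 2p$. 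The essential ingredients your proposal is missing are (a) permitting a killer of length $pm^2\gg p$ together with a correspondingly elongated optimal box, and (b) the dichotomy ``both dimensions $\ge p/m$'' versus ``one dimension $<p/m$, hence the other $>pm$,'' which is what actually covers all online strategies. (Rotations require no extra care in the paper's argument, since both cases are symmetric in the two axes.)
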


\begin{proof}
We first feed $A$ with $m^2$ rectangles of size $p \times \frac p{m^2}$.
These rectangles have total area $p^2$.
Let $a\times b$ be the size of the bounding box of the produced packing.

Suppose first that $a\geq \frac p m$ and $b\geq \frac p m$ hold.
We then feed $A$ with a long rectangle of size $pm^2 \times \frac p{m^2}$.
The produced packing has a bounding box of area at least $\frac pm \cdot pm^2 =mp^2$.
The optimal packing is to pack the $m^2$ small rectangles along the long rectangle, which would produce a packing with bounding box of size $ pm^2\times \frac{2p}{m^2}=2p^2$.

Otherwise, we must have $b > pm$ or $a > pm$, since $a b \geq p^2$.
We then feed $A$ with a square of size $p\times p$.
The produced packing has a bounding box of area at least $p\cdot pm =mp^2$.
The optimal packing is obtained stacking the $m^2$ thin rectangles on top of the big square, which produces a packing with bounding box of size $p\times 2p = 2p^2$.
\end{proof}

\begin{corollary}\label{cor:nofunc}
Let $A$ be an algorithm for $\areatrans$ or $\arearot$.
Then $A$ does not have an asymptotic, and hence also absolute, competitive ratio which is a function of $\OPT$.
\end{corollary}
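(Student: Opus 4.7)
The plan is to derive Corollary~\ref{cor:nofunc} directly from Lemma~\ref{thm:nooptbound}, exploiting the fact that the lemma's construction yields instances whose $\OPT$ can be tuned to any prescribed positive value while the ratio $A(L)/\OPT(L)$ is made arbitrarily large independently of that value.

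First I would observe that in both cases of the lemma's construction the total area of the pieces equals $2p^2$: in the first case, $p^2$ for the $m^2$ small rectangles plus $pm^2 \cdot (p/m^2) = p^2$ for the long one; in the second case, $p^2$ plus the area $p^2$ of the big square. Since the pieces' total area is a trivial lower bound on $\OPT$ and the lemma exhibits a packing of area $2p^2$, the inequality is tight and $\OPT(L) = 2p^2$ exactly. Hence, given any target $c>0$, choosing $p = \sqrt{c/2}$ and any integer $m$ produces a stream $L_{c,m}$ of $n = m^2+1$ pieces with $\OPT(L_{c,m}) = c$ and $A(L_{c,m})/\OPT(L_{c,m}) \geq (mp^2)/(2p^2) = m/2$.

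Next I would suppose for contradiction that $A$ has an asymptotic competitive ratio expressible as $f(L) = g(\OPT(L))$ for some function $g\colon \R^+ \to \R^+$. For every $c > 0$, the inner supremum appearing in the definition of asymptotic competitive ratio contains the quantities
\[
\frac{A(L_{c,m})}{\OPT(L_{c,m})\, g(c)} \;\geq\; \frac{m}{2\,g(c)},
\]
which tend to $+\infty$ as $m \to \infty$; thus the inner supremum equals $+\infty$ for every $c$. Consequently its limsup as $c \to \infty$ is also $+\infty$, violating the required bound of~$1$. The same construction, applied without taking any limit, also rules out an absolute competitive ratio of the form $g(\OPT)$, covering the ``and hence also absolute'' part of the corollary.

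I do not expect a serious obstacle: all the geometric work is packaged in Lemma~\ref{thm:nooptbound}, and the only subtle point is to note that $\OPT(L) = 2p^2$ holds as an equality (so we genuinely control the value of $\OPT$ on which we evaluate $g$, rather than just an upper bound on it), which is immediate from the total-area accounting above.
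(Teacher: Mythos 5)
Your proposal is correct and follows essentially the same route as the paper: instantiate the construction of Lemma~\ref{thm:nooptbound} with $p=\sqrt{c/2}$ so that $\OPT=2p^2=c$, and let $m$ grow to make the ratio $m/2$ exceed any prescribed $g(c)$. Your explicit verification that $\OPT(L)=2p^2$ holds with equality (total piece area matches the exhibited packing) is a detail the paper leaves implicit, but it is the same argument.
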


\begin{proof}
Let $f$ be any function of $\OPT$.
For any value $\OPT=c$, we choose $p\mydef \sqrt{c/2}$.
We now choose $m>2f(c)$ and obtain that the competitive ratio is at least $\frac{mp^2}{2p^2}=m/2>f(c)=f(\OPT)$.
\end{proof}

\begin{corollary}\label{cor:sqrtn}
Let $A$ be an algorithm for $\areatrans$ or $\arearot$.
If $A$ has an asymptotic competitive ratio of $f(n)$, where $n=|L|$ is the number of pieces in the stream, then $f(n)=\Omega(\sqrt n)$.
This holds even when all edges of the pieces are required to have length at least $1$.
\end{corollary}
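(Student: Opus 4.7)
The idea is to revisit the construction of Lemma~\ref{thm:nooptbound} and choose its free parameter $p$ so that every edge of every piece has length at least $1$, while simultaneously expressing the resulting competitive ratio in terms of $n$. The long-and-thin rectangles produced by that construction have aspect ratio $m^2$, so to make the short side at least $1$ I would set $p\mydef m^2$. Then the initial batch consists of $m^2$ rectangles of size $m^2\times 1$, the ``killer'' piece in the first branch of the case analysis is $m^4\times 1$, and the ``killer'' piece in the second branch is the $m^2\times m^2$ square; all of these have edges of length at least $1$, and in total $n=m^2+1$ pieces are used.

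With this choice of $p$, Lemma~\ref{thm:nooptbound} gives $\OPT(L)\leq 2p^2=2m^4$ and $A(L)\geq mp^2=m^5$, so $A(L)/\OPT(L)\geq m/2$. Since $n=m^2+1$, this is $\Omega(\sqrt n)$, establishing the asserted bound on any competitive ratio that can be written as a function of $n$.

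To conclude the asymptotic statement, I would fix a candidate function $f(n)$ with $f(n)=o(\sqrt n)$ and, for each $m$, apply the construction above to obtain a stream $L_m$ with $\OPT(L_m)=c_m\mydef 2m^4$, which tends to infinity with $m$. Then
\[
\sup\left\{\frac{A(L)}{\OPT(L)f(n(L))}\,\Big|\, L\in\mathcal L,\ \OPT(L)=c_m\right\}\geq \frac{m/2}{f(m^2+1)}\longrightarrow \infty,
\]
contradicting the definition of an asymptotic competitive ratio $f$. Hence any asymptotic competitive ratio expressed as a function of $n$ must be $\Omega(\sqrt n)$.

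There is no real obstacle here beyond checking that the scaling $p=m^2$ keeps all pieces long in both branches of Lemma~\ref{thm:nooptbound}; the rest is bookkeeping. The same argument applies verbatim to $\arearot$, since Lemma~\ref{thm:nooptbound} is stated for both variants and rotating by $90^\circ$ does not change whether an edge is long.
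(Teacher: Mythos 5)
Your proposal is correct and matches the paper's own proof: both choose $p=m^2$ in Lemma~\ref{thm:nooptbound} so that all edges have length at least $1$, obtain a ratio of $m/2=\Omega(\sqrt n)$, and note that $\OPT$ grows with $m$ so the bound applies asymptotically. You simply spell out the verification of the long-edge condition and the limsup argument in more detail than the paper does.
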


\begin{proof}
We choose $p\mydef m^2$.
Then all edges have length at least $1$, and the competitive ratio is at least $\frac{mp^2}{2p^2}=m/2=\Omega(\sqrt n)$.
Here, $\OPT$ can be arbitrarily big by choosing $m$ big enough, so it is a lower bound on the asymptotic competitive ratio.
\end{proof}

\subsection{Algorithms for arbitrary pieces}\label{sec:alg:arbitrary}
In this section we provide algorithms that solve $\areatrans$ and $\arearot$ with a competitive ratio of $O(\sqrt{n})$, where $n$ is the total number of pieces. Thus we match the bounds provided in the previous section.

We first describe the algorithm $\dynamicboxtrans$ that solves $\areatrans$.  
We assume to receive a stream of pieces $p_1, \dots, p_n$ of unknown length $n$, such that piece $p_i$ has size $w_i\times h_i$.
For each $k\in\Z$, we define a rectangular box $B_k$ with a size varying dynamically. After pieces $p_1, \dots, p_j$ have been processed $B_k$ has size $2^k\times T_j$, where $T_j \mydef H_j \sqrt{j} + 7H_j$ and $H_j \mydef \max_{i = 1, \dots, j} h_i$.
We place the boxes with their bottom edges on the $x$-axis and in order such that the right edge of $B_{k-1}$ is contained in the left edge of $B_k$; see Figure~\ref{fig:dynboxes}.
Furthermore, we place the lower left corner of box $B_0$ at the point $(1,0)$. It then holds that all the boxes are to the right of the point $(0,0)$.

\begin{figure}
\centering
\includegraphics[page=6]{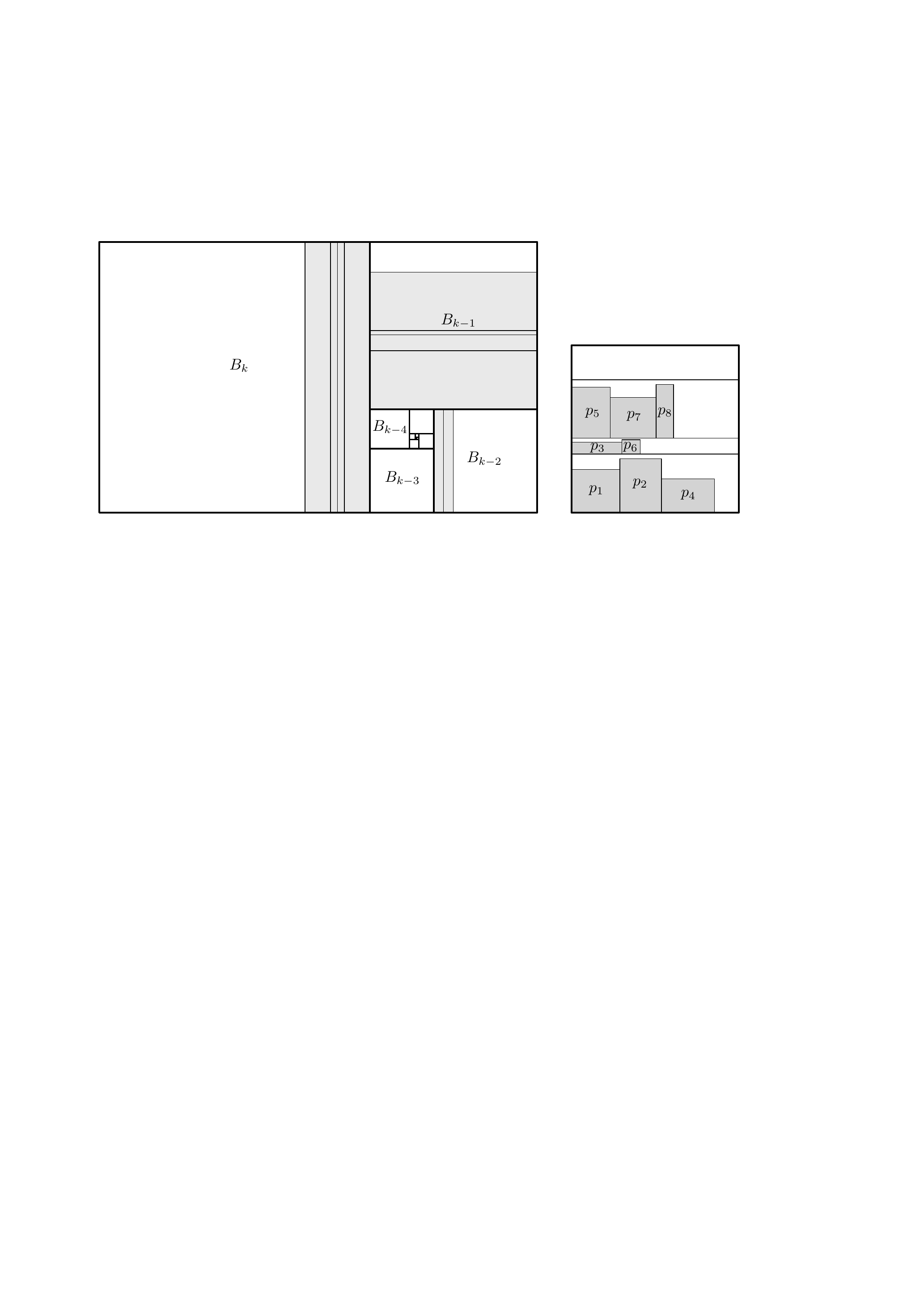}
\caption{The algorithm $\dynamicboxtrans$ packs pieces into the boxes $B_k$ that form a row. Every box has height $T_j$ that is dynamically updated.}
\label{fig:dynboxes}
\end{figure}

We say that the box $B_k$ is \emph{wide enough} for a piece $p_i = w_i \times h_i$ if $w_i \leq 2^k$. If a box $B_k$ is wide enough for $p_i$, we can pack $p_i$ in $B_k$ using the online strip packing algorithm $\NFS_k$ that packs rectangles into a strip of width $2^k$.
The algorithm $\NFS_k$ is the \emph{next-fit shelf algorithm} first described by Baker and Schwartz~\cite{baker1983shelf}.
The algorithm packs pieces in \emph{shelves} (rows), and each shelf is given a fixed height of $2^j$ for some $j\in\Z$ when it is created; see Figure~\ref{fig:nfsalg}.
The width of each shelf is $2^k$, since this is the width of the box $B_k$.

\begin{figure}
\centering
\includegraphics[page=5]{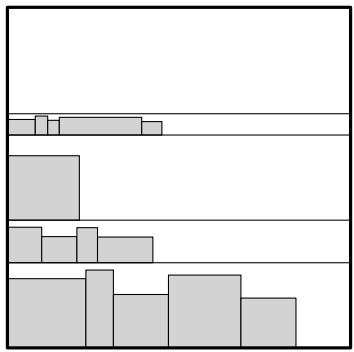}
\caption{A packing produced by the next-fit shelf algorithm using four shelves.}
\label{fig:nfsalg}
\end{figure}

A piece of height $h$, where $2^{j-1} <h \leq 2^j$, is packed in a shelf of height $2^j$.
We divide the shelves into two types.
If the total width of pieces in a shelf is more than $2^{k-1}$ we call that shelf \emph{dense}, otherwise we say it is \emph{sparse}. 
The algorithm $\NFS_k$ places each piece as far left as possible into the currently sparse shelf of the proper height.
If there is no sparse shelf of this height or the sparse shelf has not room for the piece, a new shelf of the appropriate height is created on top of the top shelf, and the piece is placed there at the left end of this new shelf. This ensures that at any point in time there exists at most one sparse shelf for each height $2^j$.

If we allow the height of the box $B_k$ to grow large enough with respect to shelves' heights, the space wasted by sparse shelves becomes negligible and we obtain a constant density strip packing, as stated in the following lemma.
\begin{lemma}\label{lem:constdensity}
Let $\widetilde{H}$ be the total height of shelves in $B_k$, and $H_{max}$ be the maximum height among pieces in $B_k$. If $\widetilde{H} \geq 6 H_{max}$, then the pieces in $B_k$ are packed with density at least $1 / 12$.
\end{lemma}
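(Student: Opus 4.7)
The plan is to split the shelves in $B_k$ into dense and sparse ones, show that sparse shelves contribute only $O(H_{max})$ to the total height, and argue that what remains, namely dense shelves, is packed with area-density at least $1/4$. Combining these gives the overall bound of $1/12$.

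First I would establish the per-shelf density guarantee for a dense shelf. A dense shelf of height $2^j$ contains pieces whose total width exceeds $2^{k-1}$, and every piece inside it has height strictly greater than $2^{j-1}$ (that is the very rule assigning pieces to shelves in $\NFS_k$). Hence the total area of pieces in a single dense shelf strictly exceeds $2^{j-1}\cdot 2^{k-1} = \tfrac14 \cdot 2^j \cdot 2^k$, i.e.\ at least one quarter of the shelf's own area.

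Next I would bound the total height of sparse shelves. By construction, $\NFS_k$ maintains at most one sparse shelf per height $2^j$. Moreover, every shelf that actually appears in $B_k$ was created to host a piece, and such a piece has height greater than $2^{j-1}$; therefore $2^{j-1} < H_{max}$ for each occurring shelf height, which gives $2^j < 2H_{max}$. Summing the resulting geometric series over the admissible exponents $j$ bounds the total height of sparse shelves by strictly less than $4H_{max}$.

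Putting these together, the hypothesis $\widetilde H \ge 6 H_{max}$ implies that the total height $\widetilde H_{dense}$ occupied by dense shelves satisfies $\widetilde H_{dense} \ge \widetilde H - 4H_{max} \ge \widetilde H - \tfrac{4}{6}\widetilde H = \tfrac{1}{3}\widetilde H$. The total area of pieces in $B_k$ is therefore at least $\tfrac{1}{4}\cdot 2^k\cdot \widetilde H_{dense} \ge \tfrac{1}{12}\cdot 2^k\cdot \widetilde H$, yielding the density $1/12$. The only slightly delicate step is the sparse-shelf height bound: it hinges both on the uniqueness-per-height invariant of $\NFS_k$ and on the elementary observation that heights $2^j$ with $2^j \ge 2H_{max}$ cannot occur because no admissible piece could populate such a shelf; the rest is direct bookkeeping, and the constant $6$ in the hypothesis is tuned precisely so that $6H_{max} - 4H_{max}$ still leaves a constant fraction of $\widetilde H$ as dense.
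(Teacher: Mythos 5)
Your proposal is correct and follows essentially the same route as the paper's proof: per-shelf density $1/4$ for dense shelves, at most one sparse shelf per height with all shelf heights bounded by (roughly) $2H_{max}$, hence sparse shelves contribute at most a $2/3$-fraction under the hypothesis $\widetilde H \geq 6H_{max}$, leaving dense shelves with height at least $\widetilde H/3$. The only cosmetic difference is that you bound the sparse-shelf height directly by $4H_{max}$, whereas the paper writes it as $2^{m+1}$ for $2^{m-1} < H_{max} \leq 2^m$; the arithmetic is equivalent.
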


\begin{proof}
Let $2^{m-1} < H_{max} \leq 2^m$, so that $\widetilde{H} \geq 3 \cdot 2^m$. 
For each $i \leq m$ we have at most one sparse shelf of height $2^i$ and each shelf of $B_k$ has height at most $2^m$, hence the total height of sparse shelves is at most $\sum_{i\leq m} 2^i=2^{m+1}$, so the total height of dense shelves is at least $\widetilde{H} - 2^{m+1} \geq \widetilde{H} / 3$.
Thus, the total area of the dense shelves is at least $2^{k} \cdot \widetilde{H} / 3$.

Consider a dense shelf of height $2^i$.
Into that shelf, we have packed pieces of height at least $2^{i-1}$, and the total width of these pieces is at least $2^{k-1}$.
Hence, the density of pieces in the shelf is at least $1/4$.
Therefore, the total area of pieces in $B_k$ is at least $2^{k} \cdot \widetilde{H} / 12$.
On the other hand, the area of the bounding box is $2^k \cdot \widetilde{H}$, that yields the desired density.
\end{proof}

Now we are ready to describe how the algorithm works. When the first piece $p_1$ arrives, let $2^{k-1} < w_1 \leq 2^k$, then we pack it in the box $B_k$ according to $\NFS_k$ and define $B_k$ to be the \emph{active box}.
Suppose now that $B_i$ is the active box when the piece $p_j$ arrives, first we update the value of the threshold $T_{j-1}$ to $T_j$, then we have two cases. If $w_j > 2^i$ we choose $\ell$ such that $2^{\ell-1} < w_j \leq 2^\ell$, pack $p_j$ in $B_\ell$ and define $B_\ell$ to be the active box. Else, $B_i$ is wide enough for $p_j$ and we try to pack $p_j$ into $B_i$. Since $B_i$ has size $2^i \times T_j$ it may happen that $\NFS_i$ exceeds the threshold $T_j$ while packing $p_j$, generating an overflow. In this case, instead of packing $p_j$ in $B_i$, we pack $p_j$ into $B_{i+1}$ and define that to be the active box.

\begin{theorem}\label{thm:sqrtntrans}
The algorithm $\dynamicboxtrans$ has an absolute competitive ratio of $O(\sqrt{n})$ for the problem $\areatrans$ on a stream of $n$ pieces.
\end{theorem}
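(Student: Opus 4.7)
The plan is to upper-bound $\ALG$ by $2^{K+1}\cdot h_b$, where $K$ is the index of the last active box and $h_b = \max_k \widetilde{H}_{B_k}$; indeed, the used boxes lie in the $x$-range $[2^{k_0},2^{K+1}]$, so the produced bounding box has width at most $2^{K+1}$ and height $h_b$. The lower bounds I will use for $\OPT$ are $\OPT \geq W_{\max}\cdot H_n$ (any packing's bounding box has width $\geq W_{\max}$ and height $\geq H_n$) and the total piece area, together with the density guarantee of Lemma~\ref{lem:constdensity}.

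The easy case is $W_{\max}>2^{K-1}$: then $\OPT\geq 2^{K-1}H_n$, while $\ALG\leq 2^{K+1}T_n\leq 2^{K+4}H_n\sqrt{n}$, giving ratio $O(\sqrt n)$. Otherwise $W_{\max}\leq 2^{K-1}$, so no piece is wide enough to push the active box up to $B_K$, and hence $B_K$ was activated by an overflow of $B_{K-1}$ at some step $j$. At that moment $\widetilde{H}_{B_{K-1}}\geq T_j-2H_j = H_j(\sqrt{j}+5)\geq 6H_j$; since every piece in $B_{K-1}$ has height at most $H_j$, Lemma~\ref{lem:constdensity} applies to $B_{K-1}$ and gives area there of at least $2^{K-1}H_j\sqrt{j}/12$. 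Every used box $B_k$ with $k<K$ was abandoned no later than step $j$, so $\widetilde{H}_{B_k}\leq T_j$, and thus $h_b\leq\max(T_j,\widetilde{H}_{B_K})$. If $\widetilde{H}_{B_K}\leq T_j$, then $h_b\leq T_j$ and the density bound in $B_{K-1}$ already gives a constant ratio; if instead $\widetilde{H}_{B_K}>T_j$ with $\widetilde{H}_{B_K}\geq 6H_n$, Lemma~\ref{lem:constdensity} applies to $B_K$ and the ratio is again constant.

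The main obstacle, which motivates the $\sqrt{j}$ factor in the threshold, is the remaining case $\widetilde{H}_{B_K}>T_j$ together with $\widetilde{H}_{B_K}<6H_n$. Here $h_b<6H_n$, so $\ALG\leq 12\cdot 2^K H_n$, and to compete I will extract a stronger lower bound on $W_{\max}$ by counting pieces in $B_{K-1}$. Each such piece has width at most $W_{\max}$ and height at most $H_j$, hence area at most $W_{\max}\cdot H_j$; dividing the lower bound on area in $B_{K-1}$ by this pointwise maximum shows that $B_{K-1}$ contains at least $2^{K-1}\sqrt{j}/(12\,W_{\max})$ pieces. Since all of them arrived by step $j$, this count is at most $j$, and rearranging yields the key inequality $W_{\max}\geq 2^{K-1}/(12\sqrt{j})$. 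Plugging this into $\OPT\geq W_{\max}H_n$ gives $\OPT\geq 2^{K-1}H_n/(12\sqrt{j})$, so $\ALG/\OPT\leq 288\sqrt{j}\leq 288\sqrt{n}=O(\sqrt n)$, completing the proof.
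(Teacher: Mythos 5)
Your proof is correct and follows essentially the same approach as the paper: bound $\ALG$ by $2^{K+1}$ times the tallest box, bound $\OPT$ by $W_{\max}H_n$ and by the total piece area, apply Lemma~\ref{lem:constdensity} to $B_{K-1}$ at the overflow step $j$ to get constant density there, and derive $2^{K}=O(W_{\max}\sqrt{j})$ (your piece-counting step is the same inequality as the paper's $\Sigma_j\leq j\,W H_j$ combined with the density bound). The only cosmetic difference is in the final sub-case: you branch on whether $\widetilde{H}_{B_K}\geq 6H_n$ so that Lemma~\ref{lem:constdensity} also applies to $B_K$, while the paper instead directly bounds $\widetilde{H}_{B_K}=O(H_n+\Sigma_n/2^K)$ by re-examining sparse versus dense shelves.
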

\begin{proof}
First, define $\Sigma_j$ as the total area of the first $j$ pieces, $W \mydef \max_{i = 1, \dots, n} w_i$ and recall that $H_j = \max_{i = 1, \dots, j} h_i$ and $T_j = H_j\sqrt{n} + 7H_j$. 
Let $B_k$ be the last active box, so that we can enclose all the pieces in a bounding box of size $2^{k+1} \times T_n$, and bound the area returned by the algorithm as $\ALG = O(2^k H_n \sqrt{n})$. On the other hand we are able to bound the optimal offline packing as $\OPT = \Omega(\Sigma_n + WH_n)$.

If the active box never changed, then we have $2^k < 2W$ that implies $\ALG = O(WH_n\sqrt{n}) = \OPT  \cdot O(\sqrt{n})$. 
Otherwise, let $B_\ell$ be the last active box before $B_k$, and $p_j$ be the first piece put in $B_k$. Here we have two cases.

\case{1}{$w_j > 2^\ell$} In this case we have $2^k < 2W$ that implies $\ALG = O(WH_n\sqrt{n}) = \OPT  \cdot O(\sqrt{n})$.

\case{2}{$w_j \leq 2^\ell$} In this case we have $k = \ell + 1$. Denote with $\widetilde{H_i}$ the total height of shelves in $B_i$. Then we have $\widetilde{H_\ell} \geq T_j - H_j = H_j\sqrt{n} + 6H_j$, otherwise we could pack $p_j$ in $B_\ell$. Thus, we can apply Lemma~\ref{lem:constdensity} and conclude that the box $B_\ell$ of size $2^\ell \times T_j$ is filled with constant density. Here we have two cases.

\case{2.1}{$\widetilde{H_k} \leq T_j$} In this case we have $\ALG = O(2^k T_j)$ and, thanks to the constant density packing of $B_\ell$ we have $\Sigma_j = \Theta (2^\ell \widetilde{H_\ell}) = \Theta(2^k T_j)$. Since $\OPT \geq \Sigma_j$, we get $\ALG = O(\OPT)$.

\case{2.2}{$\widetilde{H_k} > T_j$} In this case we have $\ALG = O(2^k \widetilde{H_k})$. Moreover, $\widetilde{H_k} = O(H_n + \Sigma_n / 2^k)$, in fact if $2^{s-1} < H_n \leq 2^s$, then the total height of sparse shelves is $\sum_{i \leq s} 2^i = 2^{s+1} = O(H_n)$. Furthermore, dense shelves are filled with constant density, therefore their total height is at most $O(\Sigma_n / 2^k)$. Finally, we need to show that $2^k = O(W\sqrt{n})$. Thanks to the constant density packing of $B_\ell$, we have $2^k H_j \sqrt{j} = O(2^\ell T_j) = O(\Sigma_j)$. We can upper bound the size of every piece $p_i$ for $i\leq j$ with $W \times H_j$ and obtain $\Sigma_j \leq n \cdot WH_j$. Plugging it in the previous estimate and dividing both sides by $H_j\sqrt{n}$ we get $2^k = O(W\sqrt{n})$.
Now we have $\ALG = O(2^k \widetilde{H_k}) = O(2^k H_n + \Sigma_n) = O(WH_n\sqrt{n} + \Sigma_n) = \OPT \cdot O(\sqrt{n})$.
\end{proof}

The algorithm $\dynamicboxrot$ is obtained from $\dynamicboxtrans$ with a slight modification: before processing any piece $p_i$ we rotate it so that $w_i \leq h_i$. In this way, it still holds that $\OPT = \Omega(\Sigma_n + WH_n )$ and the proof of Theorem~\ref{thm:sqrtntrans} works also for the following.

\begin{theorem}\label{thm:sqrtnrot}
The algorithm $\dynamicboxrot$ has an absolute competitive ratio of $O(\sqrt{n})$ for the problem $\arearot$ on a stream of $n$ pieces.
\end{theorem}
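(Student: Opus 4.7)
The plan is to show that the entire argument of Theorem~\ref{thm:sqrtntrans} transfers verbatim to the rotational setting, so the only work is to re-establish the two lower bounds on $\OPT$ that were used there, namely $\OPT = \Omega(\Sigma_n)$ and $\OPT = \Omega(WH_n)$, where $W \mydef \max_i w_i$ and $H_n \mydef \max_i h_i$ are computed after the algorithm's rotation (so $w_i \leq h_i$ for every piece). Observe that the algorithm's internal behaviour (which shelf is used, when a box overflows, the density estimate from Lemma~\ref{lem:constdensity}, the bounding box $2^{k+1} \times T_n$) depends only on the rotated dimensions $w_i, h_i$, and all upper estimates on $\ALG$ in the proof of Theorem~\ref{thm:sqrtntrans} are derived purely from these. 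Hence once the lower bounds on $\OPT$ are in place, the case analysis (Case~1, Case~2.1, Case~2.2) applies unchanged.

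First I would handle the trivial bound. Since rotations preserve area, the total area of pieces is an invariant of the stream, and any bounding box containing the pieces has area at least $\Sigma_n$, giving $\OPT \geq \Sigma_n$. The substantive step is showing that even when $\OPT$ is allowed to rotate pieces freely, $\OPT \geq W H_n$. Let $A \times B$ with $A \leq B$ be the optimal bounding box. Pick $p_H$ with longest edge equal to $H_n$ and pick $p_W$ whose shorter edge equals $W$, so both edges of $p_W$ have length at least $W$. Whichever orientation $\OPT$ uses for $p_W$, both sides of the bounding box must accommodate an edge of length $\geq W$, so in particular $A \geq W$. For $p_H$, one of its edges has length $H_n$; this edge aligns with either the $A$-side or the $B$-side of the bounding box, and in either case we conclude $B \geq H_n$ (if it aligns with $A$ then $A \geq H_n$ and hence $B \geq A \geq H_n$). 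Multiplying gives $\OPT \geq AB \geq W H_n$, and combining with the area bound yields $\OPT = \Omega(\Sigma_n + W H_n)$.

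Finally I would invoke the three cases exactly as in Theorem~\ref{thm:sqrtntrans}: if the active box never changes or the last switch was due to a wide piece we get $2^k < 2W$ and hence $\ALG = O(WH_n\sqrt{n}) = O(\sqrt{n})\cdot\OPT$; otherwise Case~2 uses Lemma~\ref{lem:constdensity} to conclude that the previous box was packed with constant density, and the sub-cases 2.1 and 2.2 yield $\ALG = O(\OPT)$ and $\ALG = O(WH_n\sqrt{n} + \Sigma_n) = O(\sqrt{n})\cdot\OPT$ respectively. The only point to double-check is that $W$ and $H_n$ mean the same thing on both sides of the inequality: the algorithm-side estimates use algorithm-rotation widths and heights, and the $\OPT$-side lower bound above uses exactly those same quantities, so there is no mismatch. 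This is the subtle point, but once it is spelled out the proof is complete.
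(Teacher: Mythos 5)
Your proof is correct and takes essentially the same approach as the paper: the paper's treatment of this theorem consists precisely of observing that after the pre-rotation $w_i\le h_i$, the bound $\OPT=\Omega(\Sigma_n+WH_n)$ still holds (stated without proof there) and the rest of the argument from Theorem~\ref{thm:sqrtntrans} transfers unchanged. Your argument with $p_W$ and $p_H$ correctly justifies the inequality the paper leaves implicit, so this is a valid and slightly more explicit write-up of the paper's own proof.
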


\subsection{Bounded aspect ratio}\label{sec:boundedaspect}

In this section, we will consider the special case where the aspect ratio of all pieces is $\alpha=1$, i.e., all the pieces are squares.
Furthermore, we will measure the size of the packing as the area of the minimum axis-parallel bounding \emph{square}, and we call the resulting problem \sqArea.
Since we get a constant competitive ratio in this case, it follows that for other values of $\alpha$ and when allowing the bounding box to be a general rectangle, one can likewise achieve a constant competitive ratio.
We first give a lower bound.

\begin{lemma}
Consider any algorithm $A$ for the problem $\sqArea$.
Then the competitive ratio of $A$ is at least $16/9$.
\end{lemma}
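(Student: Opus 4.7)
The plan is to use an adaptive adversary that feeds $A$ with four unit squares and then, depending on how $A$ places them, either stops or feeds a single $2\times 2$ square as the fifth piece. Let $s$ denote the side of the bounding square that $A$ produces after placing the four unit squares. Since four axis-aligned unit squares require both a width and a height of at least~$2$, we have $s\ge 2$, and the optimum for the four units alone is a $2\times 2$ bounding square of area~$4$.

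If $s\ge 8/3$, the adversary stops: then the algorithm's bounding square has area at least $s^2\ge 64/9$ while the optimum is $4$, giving a ratio at least $16/9$. Otherwise $s<8/3<3$, and the adversary feeds a $2\times 2$ square. The optimum now places the $2\times 2$ in a corner of a $3\times 3$ bounding square with the four units filling the surrounding L-shape, so $\OPT=9$. It therefore suffices to show that the algorithm's bounding square has side at least~$4$, equivalently, area at least $16=(16/9)\cdot 9$.

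The central geometric lemma is that if $s<3$, then for \emph{any} placement of a $2\times 2$ square disjoint from the four unit squares, the resulting bounding square has side at least~$4$. First, no $2\times 2$ subregion of $[0,s]^2$ can be disjoint from all four units, because such a subregion would confine the units to an L-shape whose strips all have thickness at most $s-2<1$, which cannot contain any axis-aligned unit square. Hence the $2\times 2$ piece protrudes outside $[0,s]^2$. Writing its position as $[a,a+2]\times[b,b+2]$ and assuming WLOG (by reflecting the coordinate system if needed) that it protrudes to the right, i.e., $a+2>s$, I would analyze the complement of its inside portion $[a,s]\times[b,b+2]$ inside $[0,s]^2$. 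Because $s<3$, the strips other than the left strip $[0,a]\times[0,s]$ and, in case the piece also protrudes upward, the bottom-right strip $[a,s]\times[0,b]$, have one dimension strictly less than~$1$ and so contain no axis-aligned unit square. A short packing count using $\lfloor s\rfloor=2$ and $\lfloor s-a\rfloor\le 1$ then yields $2\lfloor a\rfloor+\lfloor s-a\rfloor\lfloor b\rfloor\ge 4$, so either $a\ge 2$ (giving new width $\ge 4$) or $b\ge 2$ (giving new height $\ge 4$), as required.

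The main obstacle is making this last step airtight across all the ways the $2\times 2$ can protrude (right only, up only, or both simultaneously), in each case using $s<3$ to rule out the thin strips of the L-complement as possible homes for unit squares, thereby forcing the four units into the one or two fat strips and then deriving the bound on $a$ or $b$ from the unit-square packing constraint.
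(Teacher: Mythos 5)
Your overall strategy is the same as the paper's: feed four unit squares, stop if their bounding square is already large (your threshold $8/3$ instead of $3$ is fine and even gives the ratio exactly), and otherwise feed a $2\times 2$ square and prove the geometric lemma that a sub-$3$ packing of the four units plus a disjoint $2\times 2$ forces a bounding square of side at least $4$. The first half of your lemma (no $2\times 2$ can sit inside $[0,s]^2$ disjointly from the units when $s<3$, so it must protrude) is correct and airtight.

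The gap is in the final counting step. The inequality $2\lfloor a\rfloor+\lfloor s-a\rfloor\lfloor b\rfloor\ge 4$ presupposes that each unit square lies entirely in one of the two \emph{disjoint} strips $[0,a]\times[0,s]$ and $[a,s]\times[0,b]$, but disjointness from the piece $[a,a+2]\times[b,b+2]$ only forces $x+1\le a$ or $y+1\le b$, i.e., containment in one of the two \emph{overlapping} regions $[0,a]\times[0,s]$ or $[0,s]\times[0,b]$; a unit square sitting below $y=b$ and straddling the line $x=a$ is counted by neither of your disjoint strips, so the inequality does not follow. If you repair this by counting over the overlapping regions, each holds at most two units (width, resp.\ height, less than $2$ and the other dimension less than $3$), and $4\le 2+2$ is not a contradiction. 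Ruling out the residual configuration -- two units with $x+1\le a$ stacked vertically and two units with $y+1\le b$ placed horizontally -- requires an extra interaction argument: one shows that the separating constraint between a unit from the first pair and a unit from the second forces either three units to span horizontal extent at least $3$ or three to span vertical extent at least $3$, contradicting $s<3$. This is exactly the step the paper carries out (its analysis of $Q_2$ versus $Q_3$ after placing two units in each of two overlapping rectangles), and it is the genuinely delicate part of the proof; your proposal identifies the right reduction but does not yet supply this argument.
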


\begin{proof}
We first give $A$ four $1\times 1$ squares.
Let the bounding square have size $\ell \times \ell$.
If $\ell \geq 3$, the bounding square of the four $1\times 1$ squares has size at least $3\times 3$, while the optimal packing has size $2\times 2$, which gives ratio at least $9/4$.
Otherwise, if $\ell<3$, we give a $2\times 2$ square and we will prove that the bounding square has size at least $4\times 4$ while the optimal packing fits in a $3\times 3$ square, so the ratio is at least $16/9$. 

Let us assume by contradiction that there exists a $(4 - \varepsilon) \times (4 - \varepsilon)$ bounding square containing both a $2\times 2$ square and four $1\times 1$ squares, with the additional hypothesis that the $1\times 1$ squares fit in a $(3-\delta) \times (3-\delta)$ bounding box. We refer to notation in Figure~\ref{fig:2x2in4x4} (left) and notice that we have $a < 1$ or $b < 1$, and analogously $c < 1$ or $d < 1$. Without loss of generality, we can assume $a, d < 1$. Hence, starting from the configuration in Figure~\ref{fig:2x2in4x4} (left) we can drag the $2\times 2$ square to the bottom left corner and obtain the configuration in Figure~\ref{fig:2x2in4x4} (right), that still fulfill the hypotheses we assumed by contradiction.  

\begin{figure}
\centering
\includegraphics[page=9]{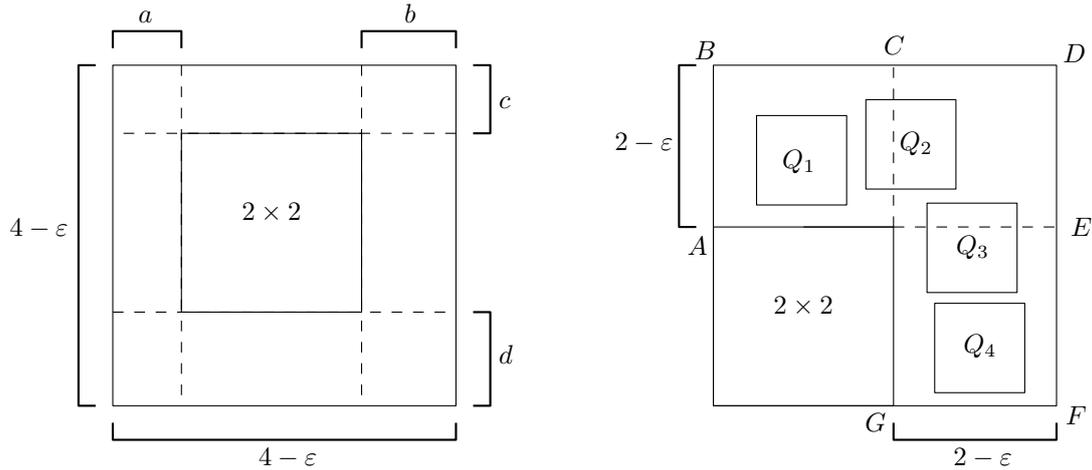}
\caption{Left: A $2\times 2$ square inside a bounding square having edges shorter than $4$. Right: The $2\times 2$ square has been dragged in the bottom left corner of the bounding square. Four $1\times1$ squares $Q_1, \dots, Q_4$ are placed within the bounding square.}
\label{fig:2x2in4x4}
\end{figure}


From now on we employ the notation of Figure~\ref{fig:2x2in4x4} (right). Let $(x_i, y_i)$ be the coordinates of the bottom left corner of square $Q_i$. Stating that $Q_i$ and $Q_j$ are disjoint is equivalent to $\max\{|x_i - x_j|, |y_i - y_j|\} \geq 1$. Consider now the two rectangular regions $ABDE$ and $GCDF$: note that each of them can contain at most two squares.  
Indeed, given $Q_i$ and $Q_j$ completely contained in $ABDE$, it holds $|y_i - y_j| \leq 1 - \varepsilon$ thus $|x_i - x_j| \geq 1$. If three squares $Q_1, Q_2, Q_3$ are completely contain in $ABDE$ then we have, without loss of generality, $x_1 \leq x_2 - 1 \leq x_3 -2$ and the minimal bounding square of $Q_1, Q_2, Q_3$ has size at least $3\times 3$, that gives a contradiction. The same holds for $GCDF$.

Finally, every $Q_i$ is either fully contained in $ABDE$ or $GCDF$ hence, without loss of generality, we can assume that $Q_1, Q_2$ are contained in $ABDE$ and $Q_3, Q_4$ are contained in $GCDF$. This implies that $x_1 \leq x_2 -1$ and $y_4 \leq y_3 - 1$, again without loss of generality.
Observe that $x_2 \leq x_3 + 1 - \varepsilon$ and $y_3 \leq y_2 + 1 - \varepsilon$. 
$Q_2$ and $Q_3$ are disjoint, using the previous characterization we have two cases. First, $|x_2 - x_3| \geq 1$ and thanks to the observation above it cannot be $x_2 > x_3$, therefore we have $x_1 \leq x_2 - 1 \leq x_3 -2$. Else, $|y_2 - y_3| \geq 1$ and thanks to the observation above we have $y_4 \leq y_3 - 1 \leq y_2 -2$. In both cases that gives a contradiction since we cannot pack all $Q_i$s in a $(3-\delta)\times (3-\delta)$ bounding square.
\end{proof}

We are now going to analyze the competitive ratio of the algorithm \smallboxtrans\ (in fact, the algorithm \smallboxrot\ has the exact same behavior when the pieces are squares).
Note that a brick can never contain more than one piece.
The algorithm is almost the same as the one described by Fekete and Hoffmann~\cite{DBLP:journals/algorithmica/FeketeH17}.
The slight difference is addressed in Section~\ref{sec:alg:inferior} and it is shown there that the behavior as described by Fekete and Hoffmann makes a worse algorithm for the problem \peritrans.
However, even though the two algorithms will not always produce identical packings for the problem \sqArea, the analysis of the following theorem seems to hold for both versions, so for the problem \sqArea, the algorithms are equally good.

\begin{theorem} \label{thm:square_in_square}
The algorithm $\smallboxtrans$ has a competitive ratio of $\compratsq$ for \sqArea.
The analysis is tight.
\end{theorem}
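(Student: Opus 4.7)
The plan is to prove the upper bound $\ALG\le 6\cdot\OPT$ by a case analysis in the spirit of Theorem~\ref{thm:peritransalg}, then exhibit an instance whose ratio approaches $6$. After scaling, assume the largest square has side $L\in(1/\sqrt{2},1]$, so the algorithm places it in $B_{-1}$. Since every piece is a square whose side lies in the suitable range of its brick, each derived brick in $\D$ holds exactly one piece. Let $\ell$ denote the side of the algorithm's bounding square, so $\ALG=\ell^2$; throughout, I would combine the two standard lower bounds $\OPT\ge L^2$ and $\OPT\ge A_{pcs}$, where $A_{pcs}$ is the total area of the pieces.

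The case split is driven by the outermost fundamental brick reached by the packing. In the base case the packing stays inside $B_{\ge -1}=[0,2]\times[0,\sqrt{2}]$: here the $x$-extent is at most $1+L$ (the right edge of the big piece) and the $y$-extent at most $\sqrt{2}$, so since $L>1/\sqrt{2}$ we have $\ell\le 1+L$, and $\OPT\ge L^2$ gives $\ALG/\OPT\le(1+L)^2/L^2\le(1+\sqrt{2})^2<6$. In the next case the packing uses a $0$-brick inside $B_{-2}$ (extending above $y=\sqrt{2}$). By the $\prec$-ordering, $B_0$ and both halves of $B_{-1}$ must already be occupied; the big piece covers the two halves of $B_{-1}$, and the piece inside $B_0$ has side $>1/2$ and hence area $>1/4$, so $A_{pcs}>L^2+1/4>3/4$. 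Each further $0$-brick placed inside $B_{-2}$-derived territory adds at least $1/4$ more to $A_{pcs}$ while extending $\ell$ in a controlled way, yielding the amortized bound $\ell^2\le 6 A_{pcs}$. The same template handles extensions that use $B_{-3}$-derived bricks (which push $\ell$ in $x$ and force many $(-1)$-brick pieces of area $>1/2$ each) and deeper levels $B_{-4},B_{-5},\ldots$: at every new level, the set of forced intermediate pieces contributes a geometric amount of area that suffices to absorb the corresponding growth in $\ell$.

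For the matching tightness construction, start with a square of side just above $1/\sqrt{2}$ and feed further pieces whose sizes are chosen so that the algorithm places them deep in the brick tree, stretching the bounding box near the worst case allowed by the case analysis while contributing negligibly to $\OPT$; chaining such choices across the relevant levels drives the ratio to the stated bound.

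The main obstacle is the deep-extension part of the analysis: one must show that reaching a $B_{-k}$-derived brick for $k\ge 2$ is always accompanied by enough forced intermediate pieces to keep $A_{pcs}$ comparable to $\ell^2/6$. This requires a layered accounting in which the geometric growth of $\ell$ as the algorithm enters successive levels of the brick hierarchy is balanced against the geometric total area contributed by the pieces forced at the intermediate levels, and this layered estimate is the heart of the argument.
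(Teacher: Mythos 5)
Your overall plan (normalize, split into cases on how far the packing spreads through the brick hierarchy, and combine the two lower bounds $\OPT\ge L^2$ and $\OPT\ge A_{pcs}$) is the right flavour, and your base case is essentially the paper's first case (it yields the same constant $3+2\sqrt2$). But the proposal has a genuine gap exactly where you yourself locate ``the heart of the argument,'' and two of the concrete claims you do make are incorrect. First, the normalization ``the largest square has side $L\in(1/\sqrt2,1]$, so the algorithm places it in $B_{-1}$'' is not without loss of generality: smaller squares arriving earlier can occupy $B_{-1}$ (or derived bricks inside it), forcing the largest square elsewhere, so the largest piece need not sit in the outermost brick. The paper sidesteps this by normalizing on the largest \emph{brick} ever used rather than the largest piece, which immediately confines everything to a region of area $2$. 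Second, and more seriously, the claim ``the piece inside $B_0$ has side $>1/2$ and hence area $>1/4$'' (and its analogues at deeper levels) is false: a derived $0$-brick being unavailable does not mean $B_0\in\D$ with a large square in it; $B_0$ may be blocked because some much smaller descendant brick of $B_0$ is occupied, contributing far less than $1/4$ of area. The correct accounting must instead use the density $1/(2\sqrt2)$ of a square inside its suitable brick together with a bound on the total area of \emph{empty} derived bricks, and bounding that empty area is where all the work is: one has to identify the recursive chain of at-most-one empty brick per size class running along the top edge of the packing (the bricks $E_1,E_3,\dots$ in the paper's notion of a $2k$-packing). That structure is what produces the factor $\frac{2}{3}\cdot\frac{1}{2\sqrt2}\cdot|B_{\ge 0}|=\frac13$ for $\OPT$ against $\ALG\le 2$, i.e.\ the constant $6$; a looser ``each level contributes a geometric amount of area'' argument, as sketched, does not obviously land below $6$, and you give no estimate showing it does.

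The tightness half is also not established. The construction you describe (one large square followed by pieces placed ever deeper in the brick tree) does not approach ratio $6$: with a square of side just above $1/\sqrt2$ present one already has $\OPT>1/2$ while the packing stays near area $2$, capping the ratio near $4$. The tight family is different in character: one feeds many copies of a \emph{single} square of side slightly more than $\sqrt2^{-k}/2$, so that the algorithm fills non-empty bricks of total area tending to $\frac{2}{3}|B_{\ge 0}|$ at density tending to $\frac{1}{2\sqrt2}$ while leaving the recursive pattern of empty bricks, whereas offline the copies pack into an almost perfect square; this drives $\ALG/\OPT$ to $6$. To complete your write-up you would need both the empty-brick bookkeeping for the upper bound and this explicit single-size stream for the lower bound.
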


\begin{proof}
Suppose a stream of squares have been packed by $\smallboxtrans$, and let $\alg$ be the area of the bounding square of the resulting packing.
Let $B_k$ be the largest elementary brick in which a square has been placed.
Suppose without loss of generality that $k=0$, so that $B_k$ has size $1\times 1/\sqrt 2$ and $B_{\geq k}$, which contains all the packed squares, has size $1\times\sqrt 2$.

\begin{figure}
\centering
\includegraphics[page=2]{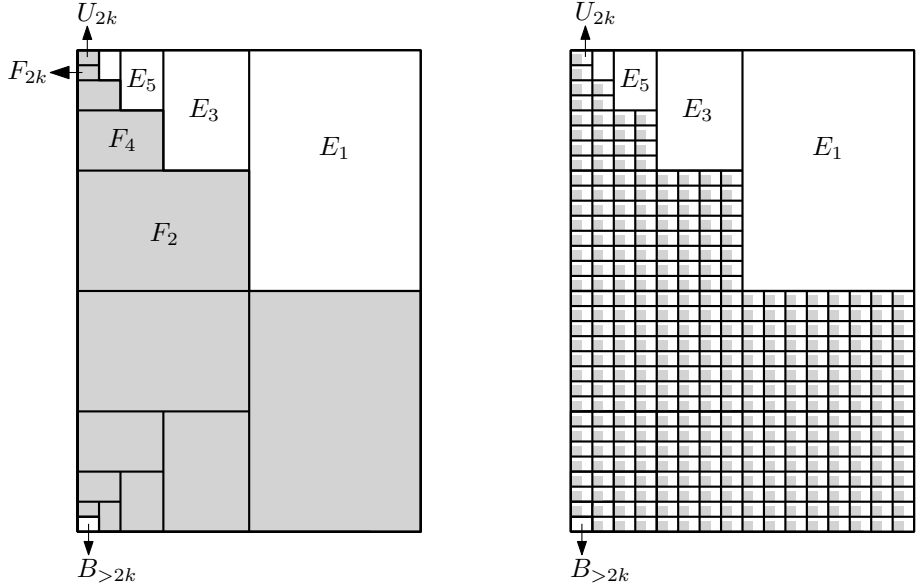}
\caption{Left: A $2k$-packing.
The grey bricks are non-empty and may have been split into smaller bricks.
Right: The $2k$-packing produced by \smallboxtrans\ when providing the algorithm with enough copies of the square $S_k$ (the small grey squares), showing that the competitive ratio can be arbitrarily close to $6$.}
\label{fig:seqLower}
\end{figure}

We now recursively define a type of packing that we call a $2k$-packing, for a non-negative integer $k$; see Figure~\ref{fig:seqLower} (left).
As $k$ increases, so do the requirements to a $2k$-packing, in the sense that a $(2k+2)$-packing is also a $2k$-packing, but the other way is in general not the case.
Define $F_0\mydef B_{\geq 1}$ and $U_0\mydef B_0$.
A packing is a $0$-packing if pieces have been placed in $U_0$ (the brick $U_0$ may or may not have been split in smaller bricks).
Hence, the considered packing is a $0$-packing by the assumption that a piece has been placed in $B_0$.
Suppose that we have defined a $2k$-packing for some integer $k$.
A $(2k+2)$-packing is a $2k$-packing with the additional requirements that
\begin{itemize}
\item
the brick $U_{2k}$ has been split into $L\mydef U_{2k}\spl 1$ and $E_{2k+1}\mydef U_{2k}\spl 2$,

\item
the right brick $E_{2k+1}$ is empty,

\item
the left brick $L$ has been split into $F_{2k+2}\mydef L\spl 1$ and $U_{2k+2}\mydef L\spl 2$, and

\item
$U_{2k+2}$ is non-empty, and thus also $F_{2k+2}$ is non-empty.
\end{itemize}

The symbols $U_j,E_j,F_j$ have been chosen such that the brick is a $j$-brick, i.e., the index tells the size of the brick.

Consider a $2k$-packing.
It follows from the definition that along the top edge of $B_{\geq 0}$ from the right corner $(1,\sqrt 2)$ to the left corner $(0,\sqrt 2)$, we meet a sequence $E_{1},E_{3},\ldots,E_{2k-1}$ of empty bricks of decreasing size, and finally meet a non-empty brick $U_{2k}$ which may have been split into smaller bricks.



\iffull
\else
In the full version the following claim is proven.
\fi

\begin{claim} \label{claim:2k-packing}
If the packing is a $2k$-packing and not a $(2k+2)$-packing, then $\alg /\OPT < 6$.
\end{claim}
 Since we pack a finite number of squares, the produced packing is a $2k$-packing but not a $(2k+2)$-packing for some sufficiently large $k$, so Claim~\ref{claim:2k-packing} implies Theorem~\ref{thm:square_in_square}.

\iffull
Let us now prove Claim~\ref{claim:2k-packing}.
We first compute the area of the brick $U_{2k}$ and the total areas of the bricks $F_0,F_2,\ldots,F_{2k}$, as these areas will be used often:
\begin{align}
u_k & \mydef |U_{2k}| =2^{-2k}/\sqrt 2 \\
f_k & \mydef \sum_{i=0}^k |F_{2i}| =\frac {2|B_{\geq 0}|-u_k}{3}=\frac{4-4^{-k}}{3\sqrt 2} \label{eq:fsum}.
\end{align}

\begin{itemize}
\item[1)]
Suppose first that $U_{2k}$ has not been split into smaller bricks.
Then, since $U_{2k}$ is non-empty by assumption, we know that $U_{2k}$ contains a square $S$ of size $s\times s$ where $s\in(s_l,s_h]=\left(\sqrt 2^{-2k-2} , \sqrt 2^{-2k-1}\right]$.
Since the bricks $E_{1},E_{3},\ldots,E_{2k-1}$ are all empty, we get that the upper edge of the bounding square coincides with the upper edge of $S$, and we thus have
\[
\alg\leq \alg(s)\mydef (\sqrt 2-(\sqrt 2^{-2k-1}-s))^2.
\]
The largest empty brick in the bricks $F_{2i}$ can have size $|U_{2k}|/2$, so the total size of empty bricks in $F_0,F_2,\ldots,F_{2k}$ is $|U_{2k}|$. Moreover, the density of squares into bricks is at least $1 / 2 \sqrt{2}$ and by~\eqref{eq:fsum}, we get that
\[
\OPT\geq \OPT(s)\mydef \frac{f_k-u_k}{2\sqrt 2}+s^2= \frac{1 - 4^{-k}}3+s^2.
\]

In the case that $k=0$, we get
\[
\frac{\alg}{\OPT}\leq \frac{\alg(s)}{\OPT(s)}= \frac{2s\sqrt 2 + 2s^2 + 1}{2s^2}.
\]
A simple analysis shows that the fraction is largest when $s=s_l$, so we get the bound
\[
\frac{\alg}{\OPT}\leq \frac{2s_l\sqrt 2 + 2s_l^2 + 1}{2s_l^2}=3+2\sqrt 2<5.83
\]

Suppose now that $k>0$.
We divide into two cases of whether $s$ is in the lower or the upper half of the range $(s_l,s_h]$.
For the lower half, that is, $s\in (s_l,\frac{s_l+s_h}2]$, we get
\[
\frac{\alg}{\OPT}\leq \frac{\alg(\frac{s_l+s_h}2)}{\OPT(s_l)}= \frac{96\cdot 4^k + (24\sqrt 2-48)\cdot 2^{k} - 6\sqrt 2 + 9}{16\cdot 4^k - 4}.
\]
It is straightforward to check that $(24\sqrt 2-48)\cdot 2^{k} - 6\sqrt 2 + 9<6\cdot(-4)$ for all $k\geq 1$, so it follows that the ratio is less than $6$.

For the upper half, that is, $s\in [\frac{s_l+s_h}2,s_h]$, we get
\[
\frac{\alg}{\OPT}\leq  \frac{\alg(s_h)}{\OPT(\frac{s_l+s_h}2)}= \frac{96\cdot 4^k}{16\cdot 4^k + 6\sqrt 2 - 7}.
\]
As $6\sqrt 2 - 7>0$, the ratio is less than $6$.

\item[2)]
We now assume that $U_{2k}$ has been split into a $L$ and $E_{2k+1}$, which are the left and right halfs of $U_{2k}$, respectively.

\begin{itemize}
\item[2.1)]
We first suppose that $E_{2k+1}$ is not empty.
This implies that there is no empty $(2k+1)$-brick in $F_0,F_2,\ldots,F_{2k},U_{2k}$.
Hence, each empty brick in the bricks $F_0,F_2,\ldots,F_{2k},U_{2k}$ is a $(2k+2)$-brick or smaller, so these empty bricks have total size at most $u_k/2$.
We then get
\[
\OPT\geq \frac{f_k+u_k-u_k/2}{2\sqrt 2}=\frac 13 + \frac{4^{-k}}{24}>\frac 13.
\]

Since $\alg\leq 2$, it follows that $\frac{\alg}{\OPT}<6$.

\item[2.2)]
We now suppose that $E_{2k+1}$ is empty.

\begin{itemize}
\item[2.2.1)]
Suppose now that $L$ has not been split into smaller bricks.
Then $L$ contains a square $S$ of size $s\times s$ for $s\in (s_l,s_h]=\left(\sqrt 2^{-2k-3},\sqrt 2^{-2k-2}\right]$.
As in case 1, we get
\[
\alg\leq \alg(s)\mydef (\sqrt 2-(\sqrt 2^{-2k-1}-s))^2.
\]

Note that there is no empty $(2k+1)$-brick in the bricks $F_0,F_2,\ldots,F_{2k}$, so these bricks contain a total area of at most $u_k/2$ empty bricks.
We then get
\[
\OPT\geq \OPT(s)\mydef \frac{f_k-u_k/2}{2\sqrt 2}+s^2.
\]

We then get the bound
\[
\frac{\alg}{\OPT}\leq  \frac{\alg(s_h)}{\OPT(s_l)}= {\frac {24\cdot {4}^{k}+(12\sqrt 2-24)\cdot {2}^{k}-6\,\sqrt {2}+9}{4\cdot {4}^{k}-1}  }.
\]
Here, it is straightforward to verify that $(12\sqrt 2-24)\cdot {2}^{k}-6\,\sqrt {2}+9<6\cdot (-1)$ for all $k\geq 0$, and hence the ratio is less than $6$.

\item[2.2.2)]
We now assume that $L$ has been split into $F_{2k+2}$ and $U_{2k+2}$, which are the bottom and top parts, respectively.

\begin{itemize}
\item[2.2.2.1)]
Suppose that $U_{2k+2}$ is empty.
Since also $E_1,E_3,\ldots,E_{2k+1}$ are empty, we get that $\alg\leq (\sqrt 2-\sqrt 2^{-2k-1}/2)^2$.

Note that each empty bricks in the bricks $F_0,F_2,\ldots,F_{2k+2}$ can have size at most $u_k/8$, so the total size of the empty bricks is at most $u_k/4=u_{k+1}$, and we get
\[
\OPT\geq \frac{f_{k+1}-u_{k+1}}{2\sqrt 2}.
\]

We therefore get
\[
\frac{\alg}{\OPT}\leq {\frac {48\cdot {4}^{k}-24\cdot {2}^{k}+3}{8\cdot {2}^{2\,k}-2}}.
\]
Here, it is straightforward to check that $-24\cdot {2}^{k}+3<6\cdot (-2)$ for all $k\geq 0$, so the ratio is less than $6$.

\item[2.2.2.2)]
We are finally left with the case that $U_{2k+2}$ is not empty.
But then all the requirements are satisfied for the packing to be a $(2k+2)$-packing.

\end{itemize}
\end{itemize}
\end{itemize}
\end{itemize}

We now observe that the analysis is tight.
To this end, we show that for any given $k$ and a small $\eps>0$, we can force the algorithm to produce a $2k$-packing, such that as $k\longrightarrow \infty$ and $\eps\longrightarrow 0$, the ratio $\frac{\alg}{\Sigma}$ tends to $6$, where $\Sigma$ is the total area of the packed squares.
Let $\eps_k\mydef \eps \sqrt 2^{-k}$, $\ell_k\mydef \sqrt 2^{-k}/2+\eps_k$, and let $S_k$ be a square of size $\ell_k\times\ell_k$.
We now feed the algorithm with copies of $S_k$.
This will eventually result in a $2k$-packing, where each non-empty brick is a $2k$-brick; see Figure~\ref{fig:seqLower} (right).
Let $n_k$ be the number needed to produce the $2k$-packing.
The density in each non-empty brick is $\rho_\eps\mydef \frac{|S_k|}{|B_{2k}|}$.
As $\eps\longrightarrow 0$, we get that $\rho_\eps\longrightarrow \frac 1{2\sqrt 2}$.
As $k\longrightarrow\infty$, the area of non-empty bricks converges to $\frac {2|B_{\leq 0}|}{3}=\frac{2\sqrt 2}{3}$.
Hence, we have $\Sigma\longrightarrow \frac 1{2\sqrt 2}\cdot \frac{2\sqrt 2}{3}=\frac 13$.
We then get $\frac{\alg}{\Sigma}\longrightarrow \frac{2}{1/3}=6$.
Furthermore, the optimal packing of the squares is to place them so that their bounding box is a square of size $\lceil \sqrt n_k\rceil\ell_k\times \lceil \sqrt n_k\rceil\ell_k$.
As $k\longrightarrow \infty$, we then have $\frac{\Sigma}{\OPT}\longrightarrow 1$.
Hence, we have $\frac{\alg}{\OPT}\longrightarrow 6$.
\else
Moreover we prove in the full version that this analysis is tight.
The idea is to show that for any given $k$ and a small $\eps>0$, we can force the algorithm to produce a $2k$-packing, such that as $k\longrightarrow \infty$ and $\eps\longrightarrow 0$, the ratio $\frac{\alg}{\OPT}$ tends to $6$.
We use a stream where all pieces are a square $S_k$ of size slightly more than $\sqrt 2^{-k}/2\times \sqrt 2^{-k}/2$; see Figure~\ref{fig:seqLower} (right).
\fi
\end{proof}

\subsection{More lower bounds when edges are long}\label{sec:lower:restricted}

We already saw in Corollary~\ref{cor:sqrtn} that as a function of $n$, the competitive ratio of an algorithm for \areatrans\ or \arearot\ must be at least $\Omega(\sqrt n)$, even when all edges have length $1$.
In this section, we give lower bounds in terms of $\OPT$ for the same case.
Note that the assumption that the edges are long is needed for these bounds to be matched by actual algorithms, since Corollary~\ref{cor:nofunc} states that without the assumption, the competitive ratio cannot be bounded as a function of $\OPT$.

\begin{theorem}\label{thm:areatransbound}
Consider any algorithm $A$ for the problem $\areatrans$ with the restriction that all edges of the given rectangles have length at least $1$.
If $A$ has an asymptotic competitive ratio $f(\OPT)$ as a function of $\OPT$, then $f(\OPT)=\Omega(\sqrt{\OPT})$.
\end{theorem}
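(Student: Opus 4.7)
The plan is to use an adaptive adversary. Fix a large integer $C$, which we will let tend to infinity. First, feed the algorithm $C$ unit squares, and let $a \times b$ be the bounding box of the resulting placement. Since the pieces have total area $C$, we have $ab \geq C$, so at least one of $a, b$ is at least $\sqrt{C}$. Without loss of generality assume $b \geq \sqrt{C}$; the other case is handled symmetrically by feeding a $1 \times C$ tall rectangle in place of the wide one. We now feed a single rectangle of size $C \times 1$ and stop the stream.

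To lower-bound $\ALG$, observe that the final bounding box must contain both the new $C \times 1$ rectangle (forcing its width to be at least $C$, since no rotation is allowed) and the bounding box of the previously placed unit squares (forcing its height to be at least $b$). Hence $\ALG \geq C \cdot b \geq C \cdot \sqrt{C} = C^{3/2}$. To upper-bound $\OPT$, consider the explicit packing that places the $C \times 1$ rectangle at $[0,C] \times [0,1]$ and the $C$ unit squares in a row at $[0,C] \times [1,2]$; this yields a bounding box of area $2C$, so $\OPT = 2C$ (the matching lower bound is just the total area). Therefore
\[
\frac{\ALG}{\OPT} \;\geq\; \frac{C^{3/2}}{2C} \;=\; \frac{\sqrt{C}}{2} \;=\; \Omega(\sqrt{\OPT}),
\]
and since $\OPT = 2C$ can be made arbitrarily large by choosing $C$ large, this yields the claimed asymptotic lower bound on $f(\OPT)$.

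I do not expect a serious obstacle. The only thing to verify carefully is that no placement lets the algorithm avoid both penalties simultaneously: the width lower bound of $C$ is forced by the wide rectangle alone and cannot be evaded without rotation, while the height lower bound of $b$ is forced by the unit squares, which are already committed to the plane and whose bounding box is contained in the final one. The two estimates therefore multiply regardless of how the algorithm positions the long rectangle, so no case analysis beyond the WLOG orientation choice is required. All given edges have length at least $1$, so the restriction of the theorem is satisfied.
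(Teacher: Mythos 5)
Your proposal is correct and is essentially identical to the paper's proof: the paper feeds $n^2$ unit squares (your $C$), notes the larger side of their bounding box is at least $n$, then feeds an $n^2\times 1$ rectangle in the appropriate orientation to force $\ALG\geq n^3$ against $\OPT\leq 2n^2$. Your explicit handling of the WLOG orientation choice and the verification that the two dimension bounds multiply are exactly the (implicit) content of the paper's argument.
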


\begin{remark}
Note that when the edges are long, $\Omega(\sqrt\OPT)=\Omega(\sqrt n)$, so this bound is stronger than the $\Omega(\sqrt n)$ bound of Corollary~\ref{cor:sqrtn}.
\end{remark}

\begin{proof}[Proof of Theorem~\ref{thm:areatransbound}.]
For any $n\in\N$, we do as follows.
We first provide $A$ with $n^2$ unit squares.
Let the bounding box of the produced packing of these squares have size $a\times b$.
Assume without loss of generality that $a\leq b$, so that $b\geq n$.
We now give $A$ the rectangle $n^2\times 1$.
The optimal offline solution to this set of rectangles has a bounding box of size $n^2\times 2$.
The packing produced by $A$ has a bounding box of size at least $n^2\times n=\Omega(\sqrt{\OPT})\cdot \OPT$.
\end{proof}

\begin{theorem}\label{thm:arearotbound}
Consider any algorithm $A$ for the problem $\arearot$ with the restriction that all edges of the given rectangles have length at least $1$.
If $A$ has a competitive ratio $f(\OPT)$ as a function of $\OPT$, then $f(\OPT)=\Omega(\sqrt[4]{\OPT})$.
\end{theorem}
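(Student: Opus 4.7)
My plan is to mimic the translational construction of Theorem~\ref{thm:areatransbound} but to use an \emph{adaptive} adversary: after the first stage the adversary inspects the shape of the bounding box produced by $A$ and chooses its second piece accordingly. For every $n \in \N$ I will exhibit a stream on which $A$ pays $\Omega(n^{5/2})$ while $\OPT = \Theta(n^2)$, yielding a ratio of $\sqrt n / 2 = \Omega(\sqrt[4]{\OPT})$.

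First I would feed $A$ with $n^2$ unit squares, and let the resulting bounding box have size $a \times b$ with $a \leq b$. Since the pieces have total area $n^2$ we have $ab \geq n^2$, and in particular $b \geq n$. I then branch on whether $a \leq \sqrt{n}$ or $a > \sqrt{n}$.

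If $a \leq \sqrt{n}$, then $b \geq n^2 / a \geq n^{3/2}$, and I feed $A$ a single square $S$ of side length $n$. Since $n > a$, the square $S$ cannot be placed within the horizontal extent of the existing bounding box, so any placement produces a new bounding box of width $\geq n$ and height $\geq b$ (placing $S$ beside the old box) or of width $\geq n$ and height $\geq b+n$ (placing $S$ above or below it); in every case the area is at least $n \cdot b \geq n^{5/2}$. The offline optimum packs the $n^2$ unit squares in an $n \times n$ block abutting $S$, yielding $\OPT \leq 2 n^2$ and ratio $\geq \sqrt n / 2$.

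Otherwise $a > \sqrt{n}$, and I feed $A$ the long rectangle $n^2 \times 1$ as in the translational proof. The main subtlety here is that $A$ may now rotate the rectangle to $1 \times n^2$, and both orientations have to be ruled out. A horizontal $n^2 \times 1$ placement gives a bounding box with one dimension $\geq n^2$ and the other $\geq b \geq a$, so area $\geq n^2 a$; a vertical $1 \times n^2$ placement gives one dimension $\geq \max(b, n^2) \geq n^2$ and the other $\geq a$, again area $\geq n^2 a$. Since $a > \sqrt n$, either orientation yields at least $n^{5/2}$, while the offline optimum packs the rectangle on the bottom and the $n^2$ unit squares in a row of length $n^2$ on top, for $\OPT \leq 2 n^2$. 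The ratio is again $\geq \sqrt n / 2$.

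Combining the two cases, every algorithm $A$ suffers ratio $\geq \sqrt n / 2$ on a stream with $\OPT \leq 2 n^2$, and since $n$ can be made arbitrarily large this gives $f(\OPT) = \Omega(\sqrt[4]{\OPT})$. The main technical care lies in the second case, where I have to check that neither rotation of the long rectangle allows $A$ to escape; this reduces to a short comparison showing that both orientations yield a bounding box of area $\geq a n^2$ under the hypothesis $\sqrt{n} < a \leq b$.
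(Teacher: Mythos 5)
Your proposal is correct and is essentially the paper's own proof: feed $n^2$ unit squares, branch on whether the shorter side $a$ of the resulting bounding box is at most $\sqrt n$, then send an $n\times n$ square (small $a$) or a $1\times n^2$ rectangle (large $a$), giving area $\ge n^{5/2}$ against $\OPT=2n^2$. The only difference is that you spell out the check that neither orientation of the long rectangle helps, which the paper leaves implicit.
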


\begin{proof}
For any $n\in\N$, we do as follows.
We first provide $A$ with $n^2$ unit squares.
Let the bounding box of the produced packing of these squares have size $a\times b$.
Assume without loss of generality that $a\leq b$.
If $a\geq n^{1/2}$, we give $A$ the rectangle $1\times n^2$.
Otherwise, we have $b> n^{3/2}$, and then we give $A$ the square $n\times n$.
In either case, there is an optimal offline solution of area $2n^2$, but the bounding box of the packing produced by $A$ has area at least $n^{5/2}=\Omega(\sqrt[4]{\OPT})\cdot \OPT$.
\end{proof}

\subsection{Algorithms when edges are long}\label{alg:area}

In this section, we describe algorithms that match lower bounds of Section~\ref{sec:lower:restricted}.
We analyze these algorithms under the assumption that we feed them with rectangles with edges of length at least~$1$ (of course, any other positive constant will also work), but we require no bound on the aspect ratio.
Under this assumption, we observe that $\dynamicboxtrans$ has absolute competitive ratio
$O(\sqrt{\OPT})$ for \areatrans.
We then describe the algorithm 
\dynamicboxrotopt, which we prove to have absolute competitive ratio 
$O(\sqrt[4]{\OPT})$ for \arearot.
By Theorems~\ref{thm:areatransbound} and~\ref{thm:arearotbound}, both algorithms are optimal to within a constant factor.

In previous sections we proved lower bounds of $\Omega(\sqrt{n})$ and $\Omega(\sqrt[4]{\OPT})$ for \arearot. They can be summarized stating that $\arearot$ has a competitive ratio of $\Omega(\max\{\sqrt{n}, \sqrt[4]{\OPT}\})$. The last theorem of this section, describes the algorithm $\dynamicboxrotmin$ that simultaneously matches both lower bounds achieving a competitive ratio of $O(\min\{\sqrt{n}, \sqrt[4]{\OPT}\}$. At a first sight it may seem that this algorithm contradicts the lower bound of $\Omega(\max\{\sqrt{n}, \sqrt[4]{\OPT}\})$; however this simply proves that the \emph{edge cases} that have a competitive ratio of at least $\Omega(\sqrt[4]{\OPT})$ must satisfy $\OPT = O(n^2)$. Likewise, those for which the competitive ratio is at least $\Omega(\sqrt{n})$ satisfy $n = O(\sqrt{\OPT})$.

\pparagraph{Translations only}
Under the long edge assumption, we have $n \leq \OPT$. Therefore, $\dynamicboxtrans$ achieves a competitive ratio of $O(\sqrt{n}) = O(\sqrt{\OPT})$ for $\areatrans$ and matches the bound stated in Theorem~\ref{thm:areatransbound}.

\pparagraph{Rotations allowed}
Now we tackle the $\arearot$ problem and describe the algorithm \dynamicboxrotopt. We define the threshold function $T_j = \Sigma_j^{3/4} + 7H_j$, where $H_j = \max_{i = 1, \dots, j} h_i $ and $\Sigma_j$ is the total area of pieces $p_1, \dots, p_j$. $\dynamicboxrotopt$ is obtained by running $\dynamicboxrot$, as described in Section~\ref{sec:alg:arbitrary}, employing this new threshold $T_j$.

\begin{theorem}\label{thm:simultaneousmatch}
The algorithm \dynamicboxrotopt\ has an absolute competitive ratio of $O(\sqrt[4]{\OPT})$ for the problem $\arearot$, where $\OPT$ is the area of the optimal offline packing.
\end{theorem}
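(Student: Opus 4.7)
The plan is to mirror the case analysis from the proof of Theorem~\ref{thm:sqrtntrans} (and its rotational analogue Theorem~\ref{thm:sqrtnrot}), but redo the bookkeeping with the new threshold $T_j = \Sigma_j^{3/4} + 7H_j$ and exploit the long-edge plus rotation conventions. First I would record the basic lower bounds on $\OPT$ that the assumptions $1 \leq w_i \leq h_i$ afford: the tallest piece alone forces $\OPT \geq H_n$; the widest piece alone forces $\OPT \geq W^2$; a short case analysis on whether the tallest and widest pieces coincide yields $\OPT \geq W H_n$; and $\OPT \geq \Sigma_n$ is clear. These give $W \leq \sqrt{\OPT}$, $H_n \leq \OPT$, and $\Sigma_n \leq \OPT$, which I will substitute throughout.

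Let $B_k$ denote the last active box, so the output fits in a bounding box of area $O(2^k \widetilde{H_k})$, where $\widetilde{H_k}$ is the total height of shelves in $B_k$. I split on the same cases as Theorem~\ref{thm:sqrtntrans}. If the active box either never changed or last changed because some piece was strictly wider than the previous active box, then $2^k \leq 2W$ and $\widetilde{H_k} \leq T_n$, so
\[
\ALG \;=\; O(W\,T_n) \;=\; O\bigl(W\Sigma_n^{3/4}\bigr) + O(W H_n) \;=\; O\bigl(\Sigma_n^{5/4}\bigr) + O(\OPT) \;=\; O(\OPT^{5/4}),
\]
using $W \leq \sqrt{\Sigma_n}$ on the first term and $W H_n \leq \OPT$ on the second.

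Otherwise the active box changed from $B_\ell$ to $B_k = B_{\ell+1}$ because the shelves of $B_\ell$ overflowed $T_j$ while processing some piece $p_j$. The inequality $\widetilde{H_\ell} \geq T_j - H_j \geq \Sigma_j^{3/4}$ follows as in the original proof and also delivers $\widetilde{H_\ell} \geq 6 H_j$, so Lemma~\ref{lem:constdensity} applies and gives a constant-density packing in $B_\ell$. Combining that density with $\widetilde{H_\ell} \geq \Sigma_j^{3/4}$ produces $\Sigma_j = \Omega(2^\ell \widetilde{H_\ell}) = \Omega(2^\ell \Sigma_j^{3/4})$, hence $2^k = O(\Sigma_j^{1/4}) = O(\OPT^{1/4})$, which is the crucial gain coming from the new threshold. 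I then split further on whether $\widetilde{H_k} \leq T_j$: in the affirmative case $\ALG = O(2^k T_j) = O(\Sigma_j) = O(\OPT)$ directly from the density of $B_\ell$; in the negative case the standard sparse/dense shelf decomposition yields $\widetilde{H_k} = O(H_n + \Sigma_n / 2^k)$, so
\[
\ALG \;=\; O(2^k H_n + \Sigma_n) \;=\; O\bigl(\OPT^{1/4}\cdot H_n + \OPT\bigr) \;=\; O(\OPT^{5/4})
\]
via $H_n \leq \OPT$. In every case $\ALG = O(\OPT^{5/4})$, and dividing by $\OPT$ gives the claimed ratio.

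The main subtlety lies not in any single estimate but in seeing why the threshold $T_j = \Sigma_j^{3/4} + 7H_j$ is the correct choice: the summand $7H_j$ guarantees the hypothesis $\widetilde{H_\ell} \geq 6 H_{\max}$ of Lemma~\ref{lem:constdensity} whenever density is invoked, while the exponent $3/4$ in $\Sigma_j^{3/4}$ is calibrated so that the constant-density conclusion forces $2^k = O(\OPT^{1/4})$, which is precisely the scale needed to absorb the potentially large factor $H_n \leq \OPT$ into $O(\OPT^{5/4})$ rather than a worse power. Everything else is a direct re-run of the proof of Theorem~\ref{thm:sqrtntrans} under these substitutions.
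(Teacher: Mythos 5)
Your proposal is correct and follows essentially the same route as the paper's proof: the identical case split on how the active box last changed, the same invocation of Lemma~\ref{lem:constdensity} to get constant density in $B_\ell$, and the same key deduction $2^k \Sigma_j^{3/4} = O(\Sigma_j)$ hence $2^k = O(\Sigma_j^{1/4})$, after which every branch yields $\ALG = O(\Sigma_n^{5/4} + \OPT) = \OPT\cdot O(\sqrt[4]{\OPT})$. The only cosmetic difference is that you substitute $\OPT$ for $\Sigma_n$ and $H_n$ slightly earlier via the preliminary bounds $W\le\sqrt{\OPT}$, $H_n\le\OPT$, which changes nothing of substance.
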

\begin{proof}
This proof is similar to the one of Theorem~\ref{thm:sqrtntrans}. Define $W \mydef \max_{i = 1, \dots, n} w_i$. Recall that in $\dynamicboxrot$ we preprocess every piece $p$ rotating it so the $w_p \leq h_p$, hence $W \leq \sqrt{\Sigma_n}$. Let $B_k$ be the last active box, so that we can enclose all the pieces in a bounding box of size $2^{k+1} \times T_n$, and bound the area returned by the algorithm as $\ALG = O(2^k H_n + 2^k \Sigma_n^{3/4})$. On the other hand we are able to bound the optimal offline packing as $\OPT = \Omega(\Sigma_n + WH_n)$.

If the active box never changed, then we have $2^k < 2W$ that implies $\ALG = O(WH_n + \Sigma_n^{5/4}) = \OPT \cdot O( \sqrt[4]{\OPT})$. 
Otherwise, let $B_\ell$ be the last active box before $B_k$, and $p_j$ be the first piece put in $B_k$. Here we have two cases.

\case{1}{$w_j > 2^\ell$} In this case we have $2^k < 2W$ that implies $\ALG = O(WH_n + \Sigma_n^{5/4}) = \OPT \cdot O(\sqrt[4]{\OPT})$.

\case{2}{$w_j \leq 2^\ell$} In this case we have $k = \ell + 1$. Denote with $\widetilde{H_i}$ the total height of shelves in $B_i$. Then we have $\widetilde{H_\ell} \geq T_j - H_j = \Sigma_j^{3/4} + 6H_j$, otherwise we could pack $p_j$ in $B_\ell$. Thus, we can apply Lemma~\ref{lem:constdensity} and conclude that the box $B_\ell$ of size $2^\ell \times T_j$ is filled with constant density. Here we have two cases.

\case{2.1}{$\widetilde{H_k} \leq T_j$} In this case we have $\ALG = O(2^k T_j)$ and, thanks to the constant density packing of $B_\ell$ we have $\Sigma_j = \Theta (2^\ell \widetilde{H_\ell}) = \Theta(2^k T_j)$. Since $\OPT \geq \Sigma_j$, we get $\ALG = O(\OPT)$.

\case{2.2}{$\widetilde{H_k} > T_j$} In this case we have $\ALG = O(2^k \widetilde{H_k})$. Moreover, $\widetilde{H_k} = O(H_n + \Sigma_n / 2^k)$, in fact if $2^{s-1} < H_n \leq 2^s$, then the total height of sparse shelves is $\sum_{i \leq s} 2^i = 2^{s+1} = O(H_n)$. Furthermore, dense shelves are filled with constant density, therefore their total height is at most $O(\Sigma_n / 2^k)$. Finally, we need to show that $2^k = O(\sqrt[4]{\Sigma_n})$. Thanks to the constant density packing of $B_\ell$, we have $2^k \Sigma_j^{3/4} = O(2^\ell T_j) = O(\Sigma_j)$. Dividing both sides by $\Sigma_j^{3/4}$ we get $2^k = O(\Sigma_j^{1/4})$. In the end notice that, thanks to the long edge hypotheses $H_n \leq \Sigma_n$ and we have $\ALG = O(2^k \widetilde{H_k}) = O(2^k H_n + \Sigma_n) = O(\Sigma_n^{5/4}) = \OPT  \cdot O(\sqrt[4]{\OPT})$.
\end{proof}

So far we managed to match the competitive ratio lower bounds of $\Omega(\sqrt{n})$ and $\Omega(\sqrt[4]{\OPT})$ employing two different algorithms: $\dynamicboxrot$ and $\dynamicboxrotopt$. A natural question is whether is it possible to match the performance of these algorithms simultaneously, having an algorithm that achieves a competitive ratio of $O(\min\{\sqrt{n}, \sqrt[4]{\OPT}\})$. We give an affirmative answer by describing the algorithm $\dynamicboxrotmin$.

Again, we employ the same scheme of $\dynamicboxrot$ with a different threshold function. This time the definition of $T_j$ is slightly more involved. First define
\begin{equation*}
    \widetilde{T}_j = 
    \begin{cases}
   \Sigma_j^{3/4} + 7H_j,   & \text{ if } \Sigma_j < j^2\\
   H_j \sqrt{n} + 7H_j, & \text{ otherwise.}
           \end{cases}
\end{equation*}
Later we will write $\widetilde{T}_j$ as $\widetilde{T}_j =           \mathbbm{1}_{\{\Sigma_{j} < j^2\}} \cdot \Sigma_j^{3/4} +
    \mathbbm{1}_{\{\Sigma_{j} \geq {j}^2\}} \cdot
    H_j \sqrt{n} + 7H_j$.
    We now define
\begin{equation*}
    T_j = \begin{cases}
    0, & \text{ if } j=0 \\
    \max\left\{T_{j-1}, \widetilde{T_j}\right\}, & \text{ if } j \geq 1.
    \end{cases}
\end{equation*}
This two-step definition is necessary for a correct implementation of the algorithm because we must guarantee that $T_j$ does not decrease.

\begin{theorem}\label{thm:sqrt4opt}
When used on the problem $\arearot$, the algorithm $\dynamicboxrotmin$ has an absolute competitive ratio of $O(\min\{\sqrt{n}, \sqrt[4]{\OPT}\})$, where $\OPT$ is the area of the optimal offline packing and $n$ is the total number of pieces in the stream.
\end{theorem}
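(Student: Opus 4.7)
The plan is to prove the two bounds $\ALG = O(\OPT \sqrt n)$ and $\ALG = O(\OPT^{5/4})$ separately and then combine them. Since every instance simultaneously satisfies both bounds, the ratio $\ALG/\OPT$ is dominated by the smaller of the two, giving the claimed $O(\min\{\sqrt n, \sqrt[4]{\OPT}\})$ without any additional work: for any $L$, $\ALG(L)/\OPT(L) \leq \min\{C_1\sqrt n,\,C_2\sqrt[4]{\OPT}\}\leq \max(C_1,C_2)\cdot\min\{\sqrt n,\sqrt[4]{\OPT}\}$.

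Each of the two bounds follows the case template of Theorems~\ref{thm:sqrtnrot} and~\ref{thm:simultaneousmatch}. Because $T_j \geq \widetilde T_j \geq 7 H_j$ by construction, the overflow condition yields $\widetilde H_\ell \geq T_j - H_j \geq 6 H_j$, so Lemma~\ref{lem:constdensity} still guarantees $\Sigma_j = \Theta(2^\ell T_j)$ and hence $2^k = O(\Sigma_j / T_j)$. The trivial subcases (no active-box change and $w_j > 2^\ell$) reduce as before to $\ALG = O(W\cdot T_n)$ with $2^k = O(W)$, and are treated uniformly by bounding $T_n = \widetilde T_{j^*}$ at the index $j^*$ realising the running maximum.

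The heart of the argument is the overflow case, and the key observation is that the regime of $\widetilde T_j$ at step $j$ provides two \emph{simultaneous} upper bounds on $2^k$. In regime (a) at $j$ ($\Sigma_j < j^2$), $T_j \geq \Sigma_j^{3/4}$ yields $2^k = O(\Sigma_j^{1/4}) = O(\OPT^{1/4})$, while the defining inequality gives the complementary $2^k = O(\Sigma_j^{1/4}) = O(\sqrt j) = O(\sqrt n)$. In regime (b) at $j$ ($\Sigma_j \geq j^2$), $T_j \geq H_j\sqrt j$ combined with $\Sigma_j \leq j W H_j$ yields $2^k = O(W\sqrt j) = O(W\sqrt n)$, and the defining inequality gives $\sqrt j \leq \Sigma_j^{1/4} \leq \OPT^{1/4}$, hence also $2^k = O(W\OPT^{1/4})$. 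Combining with $W H_n \leq \OPT$ and $H_n \leq \OPT$, both regimes produce $2^k H_n = O(\OPT\sqrt n)$ and $2^k H_n = O(\OPT^{5/4})$ simultaneously, so the final estimate $\ALG = O(2^k H_n + \Sigma_n)$ from Case 2.2 gives both $O(\OPT\sqrt n)$ and $O(\OPT^{5/4})$ at once.

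The main obstacle is handling the no-overflow subcases when the maximum of $\widetilde T$ is attained at some $j^* < n$ whose regime differs from that at $n$. This is where the running-maximum definition $T_j = \max(T_{j-1},\widetilde T_j)$ and the non-decreasing nature of $\Sigma_j$ and $H_j$ (together with the long-edge hypothesis $w_i,h_i\geq 1$ ensuring $\Sigma_n\geq n$) must be used to propagate the regime constraints from $j^*$ to $n$: in regime (b) at $j^*$, one uses $\sqrt{j^*} \leq \Sigma_{j^*}^{1/4}\leq \OPT^{1/4}$ to bound $T_n$, while in regime (a) at $j^*$, the inequality $\widetilde T_j \leq \widetilde T_{j^*}$ for every $j>j^*$ forces later pieces into regime (b) with $H_j\sqrt j \leq \Sigma_{j^*}^{3/4}$, which controls $T_n$ and $\Sigma_n$ in terms of $\Sigma_{j^*}$ and $\OPT$. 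Putting all of this together yields both target bounds and hence the minimum bound.
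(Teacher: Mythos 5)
Your proposal is essentially the paper's argument: you follow the same case template (no active-box change / $w_j > 2^\ell$ / overflow case), invoke Lemma~\ref{lem:constdensity} via $T_j \geq \widetilde T_j \geq 7H_j$, and—crucially—use the very same observation the paper uses, namely that whichever regime $\widetilde T_{j'}$ is in at the maximizing index $j'$ yields both $2^k H_n = O(\OPT\sqrt n)$ and $2^k H_n = O(\OPT^{5/4})$ simultaneously (via $\Sigma_{j'} < (j')^2 \Rightarrow \Sigma_{j'}^{1/4} < \sqrt{j'}$ in regime~(a), and $\Sigma_{j'} \geq (j')^2 \Rightarrow \sqrt{j'} \leq \Sigma_{j'}^{1/4}$ in regime~(b)). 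One small remark: the concern in your final paragraph about propagating regime constraints from $j^*$ to $n$ is not actually needed—once $T_n = \widetilde T_{j^*}$ is bounded according to the regime at $j^*$, the estimates $W \leq \sqrt{\Sigma_n}$, $H_n \leq \Sigma_n \leq \OPT$, and $WH_n = O(\OPT)$ close the argument directly without any statement about the regimes of later indices.
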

\begin{proof}
Again, we define $W \mydef \max_{i = 1, \dots, n} w_i$. Recall that in $\dynamicboxrot$ we preprocess every piece $p$ rotating it so the $w_p \leq h_p$, hence $W \leq \sqrt{\Sigma_n}$. Let $B_k$ be the last active box, so that we can enclose all the pieces in a bounding box of size $2^{k+1} \times T_n$. There exists a $n^\prime \leq n$ such that $T_n = \widetilde{T}_{n^\prime}$. We can bound the area returned by the algorithm as

$$\ALG = O\left(2^k \widetilde{T}_{n^\prime}\right) = O \left(2^k H_{n^\prime} + \mathbbm{1}_{\{\Sigma_{n^\prime} < {n^\prime}^2\}} \cdot 2^k \Sigma_{n^\prime}^{3/4}  +  \mathbbm{1}_{\{\Sigma_{n^\prime} \geq {n^\prime}^2\}} \cdot 2^k H_{n^\prime} \sqrt{{n^\prime}}\right).$$ 
We bound the optimal offline packing as $\OPT = \Omega(\Sigma_n + WH_n)$.
If the active box never changed, then we have $2^k < 2W$ that implies 
\begin{align*}
ALG &= O \left(W H_{n} + \mathbbm{1}_{\{\Sigma_{n^\prime} < {n^\prime}^2\}} \cdot W \Sigma_{n^\prime}^{3/4}  +  \mathbbm{1}_{\{\Sigma_{n^\prime} \geq {n^\prime}^2\}} \cdot W H_{n^\prime} \sqrt{{n^\prime}}\right) \\
&= 
O\left(W H_{n} + \mathbbm{1}_{\{\Sigma_{n^\prime} < {n^\prime}^2\}} \cdot \Sigma_{n} \sqrt[4]{\Sigma_{n^\prime}}  +  \mathbbm{1}_{\{\Sigma_{n^\prime} \geq {n^\prime}^2\}} \cdot W H_{n} \sqrt{{n^\prime}}\right) \\
&\leq  
\OPT \cdot O\left( \min\left\{\sqrt[4]{\Sigma_{n^\prime}}, \sqrt{n^\prime}\right\}\right) 
=
\OPT \cdot O\left( \min\left\{\sqrt[4]{\OPT}, \sqrt{n}\right\}\right).
\end{align*}

Otherwise, let $B_\ell$ be the last active box before $B_k$, and $p_j$ be the first piece put in $B_k$. Here we have two cases.

\case{1}{$w_j > 2^\ell$} In this case we have, again, $2^k < 2W$ and we use the same argument employed above.

\case{2}{$w_j \leq 2^\ell$} In this case we have $k = \ell + 1$. Denote with $\widetilde{H_i}$ the total height of shelves in $B_i$. Then we have $\widetilde{H_\ell} \geq T_j - H_j \geq \widetilde{T}_j - H_j \geq 6H_j$, otherwise we could pack $p_j$ in $B_\ell$. Thus, we can apply Lemma~\ref{lem:constdensity} and conclude that the box $B_\ell$ of size $2^\ell \times T_j$ is filled with constant density. Here we have two cases.

\case{2.1}{$\widetilde{H_k} \leq T_j$} In this case we have $\ALG = O(2^k T_j)$ and, thanks to the constant density packing of $B_\ell$ we have $\Sigma_j = \Theta (2^\ell \widetilde{H_\ell}) = \Theta(2^k T_j)$. Since $\OPT \geq \Sigma_j$, we get $\ALG = O(\OPT)$.

\case{2.2}{$\widetilde{H_k} > T_j$} In this case we still have $\ALG = O(2^k \widetilde{H_k})$. Moreover, $\widetilde{H_k} = O(H_n + \Sigma_n / 2^k)$, in fact if $2^{s-1} < H_n \leq 2^s$, then the total height of sparse shelves is $\sum_{i \leq s} 2^i = 2^{s+1} = O(H_n)$. Furthermore, dense shelves are filled with constant density, therefore their total height is at most $O(\Sigma_n / 2^k)$.

Finally, we need to show that $2^k = O(\min\{\sqrt[4]{\Sigma_n}, \sqrt{n}\})$. Let 
$T_j = \widetilde{T}_{j^\prime}$, we have two cases. 

\case{2.2.1}{$\Sigma_{j^\prime} < {j^\prime}^2$}
We have $\widetilde{T}_{j^\prime} \geq \Sigma_{j^\prime}^{3/4}$. And thanks to the constant density packing of $B_\ell$, we have also $2^k \Sigma_{j^\prime}^{3/4} = O(2^\ell T_j) = O(\Sigma_{j^\prime})$. Dividing both sides by $\Sigma_{j^\prime}^{3/4}$ we get $2^k = O(\sqrt[4]{\Sigma_{j^\prime}})$. 

\case{2.2.2}{$\Sigma_{j^\prime} \geq {j^\prime}^2$}
In this case we have $\widetilde{T}_{j^\prime} \geq H_{j^\prime} \sqrt{j^\prime}$. Using the constant density argument we get $2^k H_{j^\prime} \sqrt{j^\prime} = O(2^k \widetilde{T}_{j^\prime}) = O(\Sigma_{j^\prime}) \leq O(j^\prime \cdot WH_{j^\prime})$. Dividing both sides by $H_{j^\prime} \sqrt{j^\prime}$ we obtain $2^k = O(W\sqrt{j^\prime})$. Therefore, we have
\begin{equation*}
    2^k = \begin{cases}
   O(\sqrt[4]{\Sigma_{j^\prime}}) & \text{ if } \Sigma_{j^\prime} < {j^\prime}^2\\
    W\sqrt{j^\prime} & \text{otherwise.}
        \end{cases}
\end{equation*}
In the end notice that, thanks to the long edge hypotheses $H_n \leq \Sigma_n$, thus
\begin{align*}
\ALG &= O\left(2^k \widetilde{H_k}\right) = O\left(2^k H_n + \Sigma_n\right) \\ 
&= O\left(\mathbbm{1}_{\{\Sigma_{j^\prime} < {j^\prime}^2\}} \cdot H_n \sqrt[4]{\Sigma_{j^\prime}} + \mathbbm{1}_{\{\Sigma_{j^\prime} \geq {j^\prime}^2\}} \cdot WH_n \sqrt{j^\prime} + \Sigma_n\right) \\
&\leq  
\OPT \cdot O\left( \min\left\{\sqrt[4]{\Sigma_{j^\prime}}, \sqrt{j^\prime}\right\}\right) 
=
\OPT \cdot O\left( \min\left\{\sqrt[4]{\OPT}, \sqrt{n}\right\}\right).
\end{align*}
\end{proof}

\section{Further questions}

It is natural to consider problems where the given pieces are more general, such as convex polygons.
Here, we may allow the pieces to be rotated by arbitrary angles.
In that case, it follows from the technique described by Alt~\cite{DBLP:journals/eatcs/Alt16} that one can obtain a constant competitive ratio for computing a packing with a minimum perimeter bounding box:
For each new piece, we rotate the piece so that a diameter of the piece is horizontal.
We then use the algorithm $\smallboxrot$ to pack the bounding boxes of the pieces.
Since the area of each piece is at least half of the area of its bounding box, the density of the produced packing is at least half of the density of the packing of the bounding boxes.
This results in an increase of the competitive ratio by a factor of at most $\sqrt 2$.

For the problem of minimizing the perimeter of the bounding box (or convex hull) with convex polygons as pieces and only translations allowed, we do not know if it is possible to get a competitive ratio of $O(1)$, and this seems to be a very interesting question for future research.
In order to design such an algorithm, it would be sufficient to show that for some constants $\delta>0$ and $\Sigma>0$, there is an online algorithm that packs any stream of convex polygons of diameter at most $\delta$ and total area at most $\Sigma$ into the unit square, which is in itself an interesting problem.
The three-dimensional version of this question has a negative answer, even for offline algorithms:
Alt, Cheong, Park, and Scharf~\cite{alt2019packing} showed that for any $n\in\N$, there exists a finite number of 2D unit disks embedded in 3D that cannot all be packed by translation in a cube with edges of length $n$.




	\bibliographystyle{plain}
	\bibliography{lib}

\begin{thebibliography}{10}

\bibitem{ahn2012aligning}
Hee-Kap Ahn and Otfried Cheong.
\newblock Aligning two convex figures to minimize area or perimeter.
\newblock {\em Algorithmica}, 62(1-2):464--479, 2012.

\bibitem{DBLP:journals/eatcs/Alt16}
Helmut Alt.
\newblock Computational aspects of packing problems.
\newblock {\em Bulletin of the {EATCS}}, 118, 2016.

\bibitem{alt2019packing}
Helmut Alt, Otfried Cheong, Ji-won Park, and Nadja Scharf.
\newblock Packing {2D} disks into a {3D} container.
\newblock In {\em International Workshop on Algorithms and Computation (WALCOM
  2019)}, pages 369--380, 2019.

\bibitem{altapproximating}
Helmut Alt, Mark de~Berg, and Christian Knauer.
\newblock Approximating minimum-area rectangular and convex containers for
  packing convex polygons.
\newblock In {\em 23rd Annual European Symposium on Algorithms (ESA 2015)},
  pages 25--34, 2015.

\bibitem{althurtado}
Helmut Alt and Ferran Hurtado.
\newblock Packing convex polygons into rectangular boxes.
\newblock In {\em 18th Japanese Conference on Discrete and Computational
  Geometry (JCDCGG 2000)}, pages 67--80, 2000.

\bibitem{baker1983shelf}
Brenda~S. Baker and Jerald~S. Schwarz.
\newblock Shelf algorithms for two-dimensional packing problems.
\newblock {\em SIAM Journal on Computing}, 12(3):508--525, 1983.

\bibitem{borodin2005online}
Allan Borodin and Ran El-Yaniv.
\newblock {\em Online computation and competitive analysis}.
\newblock Cambridge University Press, 2005.

\bibitem{Brubach14}
Brian Brubach.
\newblock Improved bound for online square-into-square packing.
\newblock In {\em 12th International Workshop on Approximation and Online
  Algorithms ({WAOA} 2014)}, pages 47--58, 2014.

\bibitem{CHRISTENSEN201763}
Henrik~I. Christensen, Arindam Khan, Sebastian Pokutta, and Prasad Tetali.
\newblock Approximation and online algorithms for multidimensional bin packing:
  A survey.
\newblock {\em Computer Science Review}, 24:63--79, 2017.

\bibitem{chung2019efficient}
Fan Chung and Ron Graham.
\newblock Efficient packings of unit squares in a large square.
\newblock {\em Discrete \& Computational Geometry}, 2019.

\bibitem{Csirik1998}
J{\'a}nos Csirik and Gerhard~J. Woeginger.
\newblock On-line packing and covering problems.
\newblock In Amos Fiat and Gerhard~J. Woeginger, editors, {\em Online
  Algorithms: The State of the Art}, pages 147--177. Springer, 1998.

\bibitem{erdos1975packing}
Paul Erd\H{o}s and Ron Graham.
\newblock On packing squares with equal squares.
\newblock {\em Journal of Combinatorial Theory, Series A}, 19(1):119--123,
  1975.

\bibitem{DBLP:journals/algorithmica/FeketeH17}
S{\'{a}}ndor~P. Fekete and Hella{-}Franziska Hoffmann.
\newblock Online square-into-square packing.
\newblock {\em Algorithmica}, 77(3):867--901, 2017.

\bibitem{Fiat1998}
Amos Fiat and Gerhard~J. Woeginger.
\newblock Competitive analysis of algorithms.
\newblock In Amos Fiat and Gerhard~J. Woeginger, editors, {\em Online
  Algorithms: The State of the Art}, pages 1--12. Springer, 1998.

\bibitem{januszewski1997line}
Janusz Januszewski and Marek Lassak.
\newblock On-line packing sequences of cubes in the unit cube.
\newblock {\em Geometriae Dedicata}, 67(3):285--293, 1997.

\bibitem{lassak1997linepot}
Marek Lassak.
\newblock On-line potato-sack algorithm efficient for packing into small boxes.
\newblock {\em Periodica Mathematica Hungarica}, 34(1-2):105--110, 1997.

\bibitem{leewoo}
Hyun-Chan Lee and Tony~C. Woo.
\newblock Determining in linear time the minimum area convex hull of two
  polygons.
\newblock {\em IIE Transactions}, 20(4):338--345, 1988.

\bibitem{lubachevsky2003dense}
Boris~D. Lubachevsky and Ronald~L. Graham.
\newblock Dense packings of congruent circles in rectangles with a variable
  aspect ratio.
\newblock In Boris Aronov, Saugata Basu, János Pach, and Micha Sharir,
  editors, {\em Discrete and Computational Geometry: The Goodman-Pollack
  Festschrift}, pages 633--650. 2003.

\bibitem{LUBACHEVSKY20091947}
Boris~D. Lubachevsky and Ronald~L. Graham.
\newblock Minimum perimeter rectangles that enclose congruent non-overlapping
  circles.
\newblock {\em Discrete Mathematics}, 309(8):1947--1962, 2009.

\bibitem{milenkovic1996translational}
Victor~J. Milenkovic.
\newblock Translational polygon containment and minimal enclosure using linear
  programming based restriction.
\newblock In {\em Proceedings of the twenty-eighth annual ACM symposium on
  Theory of Computing (STOC 1996)}, pages 109--118, 1996.

\bibitem{MILENKOVIC19993}
Victor~J. Milenkovic.
\newblock Rotational polygon containment and minimum enclosure using only
  robust {2D} constructions.
\newblock {\em Computational Geometry}, 13(1):3--19, 1999.

\bibitem{doi:10.1111/j.1475-3995.1999.tb00171.x}
Victor~J. Milenkovic and Karen Daniels.
\newblock Translational polygon containment and minimal enclosure using
  mathematical programming.
\newblock {\em International Transactions in Operational Research},
  6(5):525--554, 1999.

\bibitem{PARK20161}
Dongwoo Park, Sang~Won Bae, Helmut Alt, and Hee-Kap Ahn.
\newblock Bundling three convex polygons to minimize area or perimeter.
\newblock {\em Computational Geometry}, 51:1--14, 2016.

\bibitem{SPECHT201358}
E.~Specht.
\newblock High density packings of equal circles in rectangles with variable
  aspect ratio.
\newblock {\em Computers \& Operations Research}, 40(1):58 --69, 2013.

\bibitem{DBLP:journals/sigact/Stee12}
Rob van Stee.
\newblock {SIGACT} news online algorithms column 20: the power of harmony.
\newblock {\em {SIGACT} News}, 43(2):127--136, 2012.

\bibitem{DBLP:journals/sigact/Stee15}
Rob van Stee.
\newblock {SIGACT} news online algorithms column 26: Bin packing in multiple
  dimensions.
\newblock {\em {SIGACT} News}, 46(2):105--112, 2015.

\end{thebibliography}

\end{document}